\documentclass[11pt]{article}
\usepackage{fullpage}
\usepackage[bookmarks]{hyperref}
\usepackage{amssymb}
\usepackage{amsmath}
\usepackage{amsthm}
\usepackage{youngtab}
\usepackage{graphicx,color,colordvi}
\usepackage{subfig}
\usepackage{bbm}
\usepackage{cleveref}
\usepackage[utf8]{inputenc}
\usepackage{fullpage}
\usepackage[blocks]{authblk}
\usepackage{mathtools}
\usepackage{multirow}
\usepackage{dsfont}
\usepackage[toc,page]{appendix}

\newcommand{\nc}{\newcommand}
\nc{\rnc}{\renewcommand}
\nc\mnb[1]{\medskip\noindent{\bf #1}}

\newcommand{\ket}[1]{\left| #1 \right>} 
\newcommand{\bra}[1]{\left< #1 \right|} 

\newcommand{\cC}{{\mathbb{C}}}
\newcommand{\cS}{\mathcal{S}}
\newcommand{\M}{{\mathbb{M}}}
\newcommand{\cV}{{\rm{vec}}}
\newcommand{\mat}{{\rm{mat}}}
\newcommand{\tPi}{{\widetilde{\Pi}}}
\newcommand{\tGamma}{{\widetilde{\Gamma}}}
\newcommand{\id}{{\rm{id}}}

\newcommand{\ot}{\otimes}

\newcommand{\End}{\operatorname{End}}
\newcommand{\Ad}{\operatorname{Ad}}
\newcommand{\Int}{\operatorname{Int}}
\newcommand{\sgn}{\operatorname{sgn}}

\newcommand{\<}{\langle}
\newcommand{\cH}{\mathcal{H}}
\newcommand{\cK}{\mathcal{K}}
\renewcommand{\>}{\rangle}
\newcommand\be{\begin{equation}}
\newcommand\ee{\end{equation}}

\DeclareMathOperator{\tr}{Tr}

\usepackage{hyperref}
\hypersetup{colorlinks, citecolor=magenta, filecolor=blue, linkcolor=blue, urlcolor=green}
\newtheorem{theorem}{Theorem}

\newtheorem{corollary}[theorem]{Corollary}

\newtheorem{definition}[theorem]{Definition}
\newtheorem{example}[theorem]{Example}

\newtheorem{lemma}[theorem]{Lemma}

\newtheorem{proposition}[theorem]{Proposition}
\newtheorem{remark}[theorem]{Remark}

\newtheorem{assum}{Assumption}
\setcounter{MaxMatrixCols}{20}

\begin{document}
 	
	\title{Structure of irreducibly covariant quantum channels for finite groups }

  	\author{Marek Mozrzymas}
\affil[1]{\small Institute for Theoretical Physics, University of Wrocław
50-204 Wrocław, Poland} 
\author{Micha{\l} Studzi{\'n}ski}
\affil[2]{\small DAMTP, Centre for Mathematical Sciences, University of Cambridge, Cambridge~CB30WA, UK}
\author{Nilanjana Datta}
 	\affil[3]{\small Statistical Laboratory, Centre for Mathematical Sciences, University of Cambridge, Cambridge~CB30WB, UK}

 	\maketitle			 
\begin{abstract}
We obtain an explicit characterization of linear maps, in particular, quantum channels, which are covariant with respect to an irreducible representation ($U$) of a finite group ($G$), whenever $U \otimes U^c$ is simply reducible (with $U^c$ being the contragradient representation). Using the theory of group representations, we obtain the spectral decomposition of any such linear map. The eigenvalues and orthogonal projections arising in this decomposition are expressed entirely in terms of representation characteristics of the group $G$. This in turn yields necessary and sufficient conditions on the eigenvalues of any such linear map for it to be a quantum channel. We also obtain a wide class of quantum channels which are irreducibly covariant by construction. For two-dimensional irrreducible representations of the symmetric group $S(3)$, and the quaternion group $Q$, we also characterize quantum channels which are both irreducibly covariant and entanglement breaking. 
\end{abstract}

\section{Introduction}
Quantum channels are fundamental building blocks of any quantum communication- or information-processing system. A channel (classical or quantum) is inherently noisy, and its potential for communication is quantified by its capacity, i.e.~the maximum rate at which information can be transmitted reliably through it. Unlike its classical counterpart, a quantum channel has a number of different capacities. These depend on various factors, e.g.~the nature of information transmitted (classical, private classical or quantum), the nature of the input states (entangled or product), the absence or presence of any additional resource, e.g.~prior shared entanglement between the sender and the receiver, the nature of the measurements, if any, done on the output states (collective or individual) etc. Evaluating the capacities of quantum channels is one of the most important and challenging problems in quantum information theory. The problem becomes more tractable, however, if the channel in question satisfies certain symmetries. 
Suitably exploiting these symmetries leads to a simplification of the problem and allows us to infer more about the properties of the channel.

To elucidate this, let us consider the simplest communication scenario, namely, the transmission of classical information through a memoryless quantum channel (say, $\Phi$). The latter is a channel for which each use is independent and identical to all prior uses. Hence 
there is no correlation in the noise acting on the inputs to successive uses of the channel. The classical capacity, $C(\Phi)$, is the maximum number of bits that can be reliably transmitted per use of the channel. It is evaluated in the limit $n \to \infty$ (where $n$ denotes the number of uses of $\Phi$), under the constraint that the error incurred in the protocol vanishes in this limit. Thanks to the celebrated Holevo-Schumacher-Westmoreland theorem \cite{Holevo},\cite{SW97}, the classical capacity $C(\Phi)$, of any memoryless quantum channel $\Phi$ is known to be given by  the following {\em{regularized}} expression:
\begin{align}\label{c-cap}
C(\Phi) &= \lim_{n \to \infty} \frac{1}{n} \chi^*(\Phi^{\otimes n}),
\end{align}
where $\chi^*(\Phi)$ is called the {\em{Holevo capacity}} of the channel. The latter is equal to the classical capacity of $\Phi$ evaluated 
under the constraint that the input states are product states.
It is given by the following single-letter expression:
\begin{align}\label{Hol-cap}
\chi^*(\Phi)&:= \sup_{\{p_i, \rho_i\}} \left\{S\left(\sum_i p_i \Phi(\rho_i) \right) - \sum_i p_i S(\Phi(\rho_i))\right\},
\end{align}
where the supremum\footnote{If the output Hilbert space ($\cK$, say) of the channel is finite
	dimensional, then the supremum in \cref{Hol-cap} is a maximum, and is attained for an ensemble of
	at most $({\rm{dim}} \cK)^2$
	states} is taken over ensembles of quantum states $\rho_i$ occurring with probabilities $p_i$, and $S(\rho):= - \tr (\rho \log \rho)$ denotes the von Neumann entropy of a state $\rho$.

Unfortunately, due to the regularization present in it, the expression (\ref{c-cap}) of the (unconstrained) classical 
capacity, $C(\Phi)$, is in general intractable, and cannot be computed. 
If, however, the Holevo capacity of the channel is additive, then the expression for the classical capacity reduces to a 
single letter one: $C(\Phi)= \chi^*(\Phi)$, and in this case, it is clear that using entangled inputs does not provide any advantage in the transmission of classical information through $\Phi$. By providing a counterexample to the so-called {\em{additivity conjecture}}, which
had been the focus of active research for more than a decade, Hastings \cite{Has09} proved that the Holevo capacity of a quantum channel is not necessarily additive.

Determining whether the Holevo capacity of a given channel is additive or not is a rather non-trivial
problem. However, there are important families of channels satisfying certain symmetry properties, for which this problem can be simplified. 
These are the so-called {\em{irreducibly covariant}} channels, defined below. If $\Phi$ is such a channel, then its Holevo capacity, $\chi^*(\Phi)$, is linearly related to its minimum output entropy 
$S_{\min}(\Phi) := \min_\rho S(\Phi(\rho))$. Hence the problem of determining whether $\chi^*(\Phi)$ is additive reduces to the relatively simpler problem
of determining whether $S_{\min}(\Phi)$ is additive. The additivity of the Holevo capacity has been successfully established
for various families of irreducibly covariant channels (see e.g.~\cite{irr-cov1, irr-cov3, irr-cov2, Fan1, irr-cov4} and references therein), by proving that the corresponding minimum output entropy is additive.

Let us introduce the definition of irreducibly covariant channels. Let $G$ be a finite (or compact) group and for every $g \in G$, let $U(g)$ and $V(g)$ be unitary representations in the Hilbert spaces 
$\cH$ and $\cK$ respectively. Then a quantum channel $\Phi$, with input Hilbert space $\cH$ and output Hilbert space $\cK$, is said to be {\em{covariant}} with respect to these representations if for any input state $\rho$, 
\begin{align}
\Phi\left( U(g) \rho U(g)^\dagger\right) = V(g) \Phi(\rho) V(g)^\dagger \quad \forall \,\, g \in G. 
\end{align}
Moreover, if the representations $U$ and $V$ are irreducible, then the channel is said to be {\em{irreducibly covariant}}. For an irreducibly covariant channel
$\Phi$, the Holevo capacity and minimum output entropy satisfy the following linear relation~\cite{Holevo02}: 
$\chi^*(\Phi) = \log d - S_{\min}(\Phi),$ where $d:= {\rm{dim}}\, \cK$.

The symmetries underlying (irreducibly) covariant channels make them amenable to analysis in various information-theoretic problems, e.g.~establishing strong converse properties of classical capacities, channel discrimination, obtaining second order asymptotics for 
entanglement assisted- and private classical communication etc. K{\"o}enig and Wehner \cite{KW09} proved that the classical capacity of any covariant 
quantum channel for which the minimum output entropy is additive, satisfies the so-called {\em{strong converse property}}. 
That is, any communication protocol with rate larger than the classical capacity fails with certainty (i.e.~the probability of error 
in transmission converges to one) in the limit of asymptotically many uses of the channel. In \cite{DTW16}, the strong converse property 
of the entanglement-assisted classical capacity of covariant channels was established. Recently, Jencova proved \cite{JP16} that 
two irreducibly covariant channels can be discriminated using an optimal scheme employing a maximally entangled input state. Wilde et al \cite{WTB16} 
obtained second order expansions of relative entropy of entanglement bounds for private communication rates for covariant channels.
These bounds also hold in the standard setting of private communication through a quantum channel, in which the sender and
receiver have access to unlimited public classical communication. The bounds are useful for establishing converse bounds for quantum 
key distribution protocols conducted over these channels. Finally covariance with respect to some finite or compact group can we used to investigate asymmetry properties of pure quantum states~\cite{Mar1}. The above discussion and examples illustrate the relevance of (irreducibly) covariant 
channels, and highlights the importance of understanding the structure and properties of such channels. 

There have been some notable results related to the characterization of covariant channels, and a study of their properties in particular for some low dimensional compact groups. For example, Scutaru~\cite{Scutaru} proved a Stinespring type theorem, in the  $C^*$-algebraic framework, for any completely positive linear map 
which is covariant with respect to a unitary representation of a locally compact group. In~\cite{Nuw} covariance of quantum channels with respect to the $SU(2)$ group was investigated. In~\cite{Nuw} the notion of EPOSIC channels was introduced, and as an application some new positive maps, which are not completely positive, were derived. It is worth mentioning here, that imposing covariance with respect to $SU(2)$ gives a way to understand more about entanglement in quantum spin systems~\cite{Schliemann} and also allows us to prove an extended version of the Lieb-Mattis-Schultz theorem by use of Matrix Product States~\cite{Sanz1}. Moreover, taking the group $SU(n)$, we can obtain direct proof of dimerization of quantum spin chains~\cite{Nach1}. Mendl and Wolf~\cite{Mendl} studied unital channels which are covariant with respect to the real orthogonal group, and determined the subset of these channels which are convex combinations of unitaries. In~\cite{irr-cov1} complementarity and additivity of various covariant channels, such as the depolarizing and Weyl-covariant channels, were studied.  

In this paper we obtain a detailed mathematical description of channels which are irreducibly covariant with respect to a finite 
group $G$ in the case in which: $(i)$ the input and output Hilbert spaces of the channel are the same, and $(ii)$ if $U$ is the 
particular unitary irreducible representation (irrep) considered,
then $U \otimes U^c$ is simply reducible (or multiplicity free), where $U^c$ denotes the contragradient representation, i.e. for every $g\in G$, $U^c(g)=U\left(g^{-1}\right)^T \equiv {\overline{U}}(g)$. Firstly, we obtain the spectral decomposition
of the Choi-Jamio{\l}kowski image of any linear map which is irreducibly covariant. The eigenvalues and orthogonal projections arising 
in this decomposition are expressed entirely in terms of representation characteristics of the group. This in turn yields necessary and sufficient conditions on the eigenvalues of the linear map, for which it is a quantum channel (i.e.~a completely positive and trace-preserving map). We also obtain explicit expressions for the Kraus operators of such channels. See \Cref{thm16} and \Cref{KK1} of \Cref{sec:main}. Moreover we give geometrical interpretation of the set of the solutions for which given irreducibly covariant linear map (ICLM) is a irreducibly covariant quantum channel (ICQC). Namely we show, that all eigenvalues of the Choi-Jamio{\l}kowski image of an ICQC neccesarliy lie in the intersection of contracted simplex and certain subspace defined by the matrix obtained spectral analysis of the projectors appearing in the decomposition of an ICLM. See~\Cref{Mrank}, and~\Cref{intersection} of~\Cref{geometry}.

Using all above-mentioned characterization,  for {\em{any}} finite, multiplicity free group, we obtain a wide class of quantum channels which are irreducibly covariant by construction.
In addition, we provide explicit examples of ICQCs for certain multiplicity free groups, namely, the symmetric groups, $S(3)$ and $S(4)$, and the quaternion group $Q$. In each case we present both the matrix representation and the Kraus representation of the ICQC. Further, for the case of $S(3)$ and $Q$, using the Peres-Horodecki or positive partial transpose (PPT) criterion~\cite{SepHHH,SepAP}, we also obtain the condition under which the ICQC is an entanglement breaking channel (EB)~\cite{EBB}.

To give a flavour of our results, let us consider the example of the symmetric group $G=S(3)$, and the family of channels which are 
irreducibly covariant with respect to the two-dimensional irrep $U$ characterised by the partition $
\lambda =(2,1)$ using the so-called $\epsilon$-representation \cite{NS}.
We prove (see \Cref{ex-S(3)}) that the Kraus operators for this family of channels is given by:
\be
\label{expK}
\begin{split}
	K_1(\lambda)&=\sqrt{\frac{1}{2}(1-l_{\sgn})}\begin{pmatrix}
		0 & 0\\ 1 & 0
	\end{pmatrix},\quad K_2(\lambda)=\sqrt{\frac{1}{2}(1-l_{\sgn})}\begin{pmatrix}
	0 & 1\\0 & 0
\end{pmatrix},\\
K_3(\sgn)&=\sqrt{\frac{1}{2}(1+l_{\sgn}-2l_{\lambda})}\begin{pmatrix}
	-\frac{1}{\sqrt{2}} & 0\\ 0 & \frac{1}{\sqrt{2}}
\end{pmatrix},\quad 	K_4(\id)=\sqrt{\frac{1}{2}(1+l_{\sgn}+2l_{\lambda})}\begin{pmatrix}
\frac{1}{\sqrt{2}} & 0\\ 0 & \frac{1}{\sqrt{2}}\end{pmatrix},
\end{split}
\ee
where $\id,\sgn,\lambda$ denote inequivalent irreps of the symmetric group $S(3)$. In the~\cref{expK}, $l_{\sgn}$ and 
$l_\lambda$ are two real parameters, which are constrained to take values in the triangle shown in the left panel of \Cref{S3a}. 
Further, using the Peres-Horodecki or PPT criterion~\cite{SepHHH,SepAP}, we show that irreducibly covariant channels for which the values of 
these parameters lie in the grey region shown in the right panel of \Cref{S3a}, are necessarily entanglement breaking (see \Cref{EB}).
\begin{figure}[h!]
	
	\begin{center}
		\subfloat[]{\includegraphics[width=0.48\textwidth]{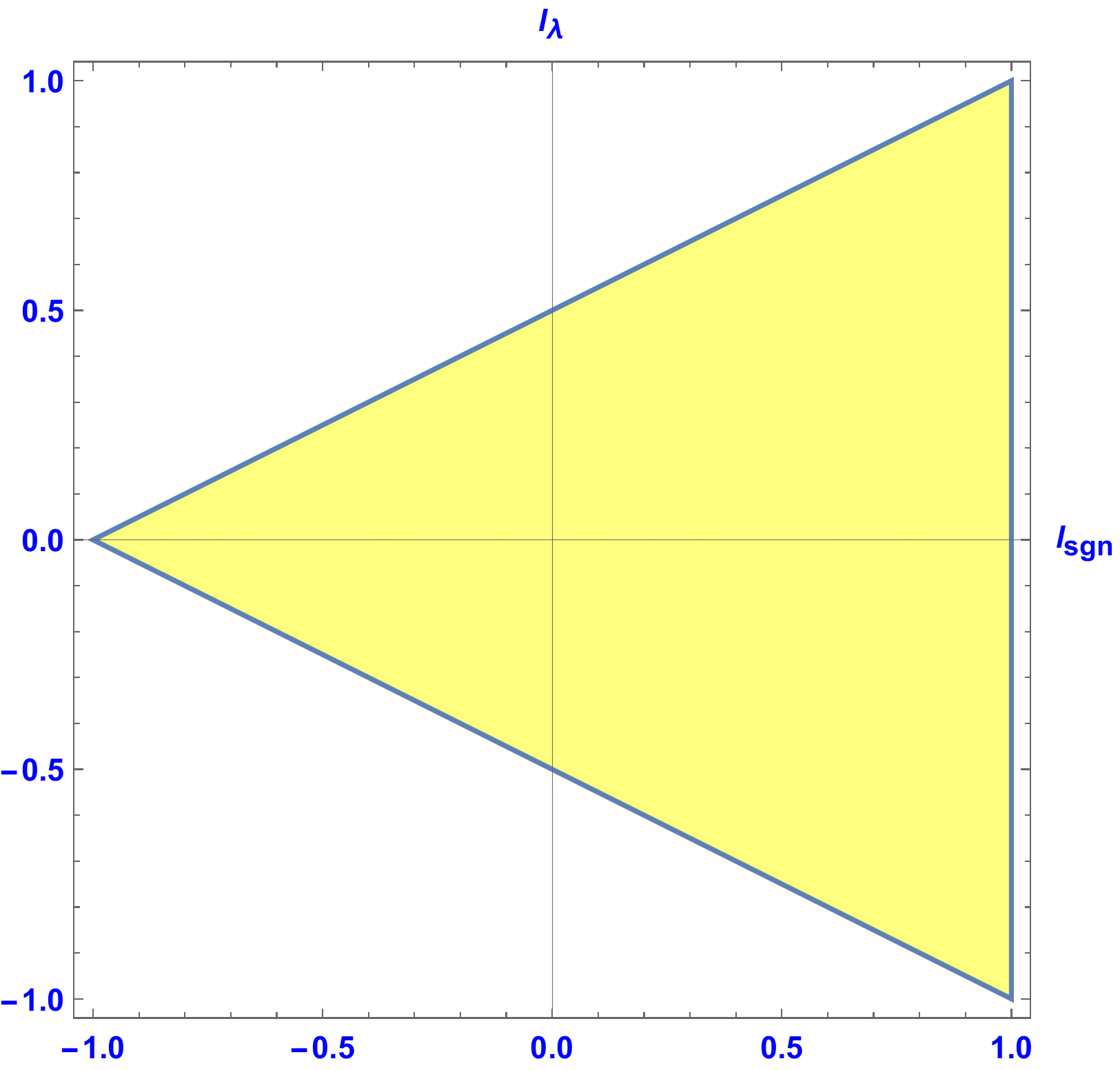}}
		\hfill
		\subfloat[]{\includegraphics[width=0.48\textwidth]{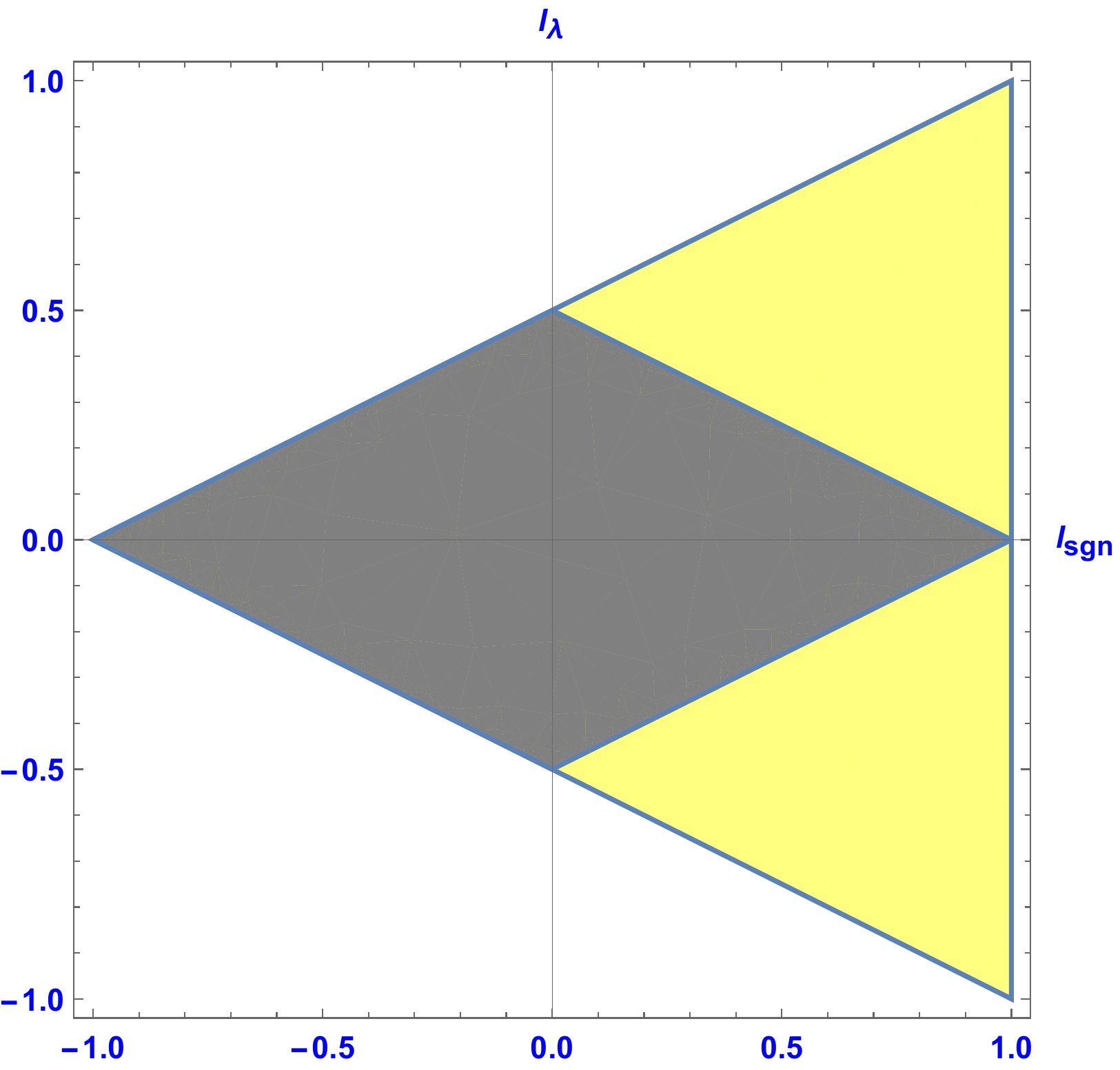}}
		\caption{The yellow triangle in the left panel denotes the 
			region of allowed values of the parameters $l_{\sgn}$ and $l_{\lambda}$, for which a trace-preserving linear map which is irreducibly covariant with respect to the $\epsilon-$representation of $S(3)$, for the partition $\lambda=(2,1)$, is a quantum channel (i.e.~is a CPTP map), and has Kraus operators given by \cref{expK}. The grey area on the right panel corresponds to the region in the parameter space for which the irreducibly covariant quantum channels are entanglement breaking.}
		\label{S3a}
	\end{center}
\end{figure}

\section{Notations and Definitions}
\label{nots}
We first define the concepts of vectorization and matrix representation of a linear map. 
Let $\M (n, \cC)$ denote the space of $n\times n$ complex matrices and let
$\{E_{ij}\}_{i,j=1}^n$, where $E_{ij} \equiv \ket{i}\bra{j}$, denote a basis
of $\M (n, \cC)$. Consider a matrix $A:= \left(a_{ij}\right)_{i.j=1}^n \equiv \left(a_{ij}\right)\in \M (n, \cC)$,
with $a_{ij}$ denoting the $(i,j)^{th}$ matrix element of $A$. The matrix $A$ can be expressed as a vector in $\cC^{n^2}$ using the standard matrix-vector isomorphism, referred to as {\em{vectorization}}.
\begin{definition}[Vectorization]
\label{prop2 copy(1)} Consider a map $\cV:\M(n,\mathbb{C})\rightarrow \mathbb{C}%
^{n^{2}}$, such that for any $A=\left( a_{ij}\right) \in \M(n,\mathbb{C})$ 
\begin{equation}
\cV(A)^{\tt}=\left(
a_{11},\ldots,a_{1n},a_{21},\ldots,a_{2n},\ldots,a_{n1},\ldots,a_{nn}\right)^T
\in \mathbb{C}^{n^{2}},
\end{equation}
where the superscript $^T$ denotes the transpose. In Dirac notation, $\cV(E_{ij}) \equiv \cV(\ket{i}\bra{j}) = \ket{i} \otimes \ket{j} \in \cC^{n^2}$. 
The map $\cV$  is an isomorphism between the linear spaces 
$\M(n,\mathbb{C})$ and $\mathbb{C}^{n^{2}}$, and we refer to it as {\em{vectorization}}.
\end{definition}

Consider a linear map $\Phi \in \End \left[ \M(n,\mathbb{C})\right]$, i.e.~$\Phi:\M(n, \mathbb{C})\rightarrow \M(n,\mathbb{C})$.
Its {\em{Choi-Jamio{\l}kowski image}} $J(\Phi)$ 
is given by~\cite{Jam,Choi}:
\begin{align}\label{choi}
J(\Phi)&:=\sum_{i,j=1}^nE_{ij}\otimes \Phi ( E_{ij}) \in \M(n^2 , \cC).
\end{align}
It is well-known that a linear map $\Phi \in \End \left[ \M(n,\mathbb{C})\right]$ is completely positive map if and only if its Choi-Jamio{\l}kowski image $J(\Phi)$ is a positive semidefinite matrix, i.e.~$J(\Phi)\geq 0$.

The matrix resulting from the action of $\Phi$ on any basis element $E_{ij} \in \M(n, \cC)$ can be expressed as follows:
\begin{equation}
\Phi (E_{ij})=\sum_{k,l=1}^n\phi _{kl,ij}E_{kl},
\end{equation}
The coefficients $\phi _{kl,ij}$ can be viewed as elements of an $n^2 \times n^2$ matrix, and hence we use
the notation:
\begin{align}
\mat(\Phi) &\equiv \left(\phi _{kl,ij}\right) \in \M(n^2, \cC).
\end{align}
Further, using the isomorphism $\M(n^{2},\mathbb{C})\simeq \M(n,\mathbb{C})\otimes
\M(n,\mathbb{C})$, the matrix $\mat(\Phi)=(\phi _{kl,ij})\in \M(n^{2},\mathbb{C}
) $ may be written in the form:
\begin{equation}\label{eq-mat}
\mat(\Phi )=\sum_{\nu=1}^{m}A^{\nu}\otimes \overline{B}^{\nu}; \quad \phi
_{kl,ij}=\left( \sum_{\nu=1}^{m}A^{\nu}\otimes \overline{B}^{\nu}\right) _{kl,ij}=
\sum_{\nu=1}^{m}a_{ki}^{\nu}\overline{b}_{lj}^{\nu},
\end{equation}
for some $A^{\nu}=(a_{ki}^{\nu})$, $B^{\nu}=(b_{lj}^{\nu})\in \M(n,\mathbb{C}),$ and $m \leq n^2$. In the above, we use the notation $\overline{
X}=(\overline{x}_{ij})$ to denote (element-wise) complex conjugation of a matrix $X\in \M(n,%
\mathbb{C})$, and $(X \otimes Y)_{kl,ij}=x_{ki}y_{lj}$ for any $X,Y \in
\M(n,\mathbb{C}).$ From this we easily obtain:
\begin{equation}
\Phi (E_{ij})=\sum_{\nu=1}^m\sum_{k,l=1}^na_{ik}^{\nu} E_{kl}\overline{b}_{lj}^{\nu},
\end{equation}
and 
\begin{align}
\label{phi-x}
\Phi
(X)=\sum_{\nu=1}^{m}A^{\nu}X\left( B^{\nu}\right)^{\dagger},
\end{align}
for any 
$
X=\sum_{ij}x_{ij}E_{ij}\in \M(n,\mathbb{C})$.

The above observations are summarized in the following lemma:
\begin{lemma}
	\label{L2}
For any $\Phi \in \End \left[ \M(n,\mathbb{C})\right]$, there exist sets of matrices $
A^{\nu}=( a_{ki}^{\nu}) ,B^{\nu}=( b_{lj}^{\nu}) \in \M(n,\mathbb{C}),$ with $\nu=1,\ldots,m$ (and $m \leq n^2$), such
that:
\begin{equation}
\forall \,X\in \M(n,\mathbb{C}),\qquad \Phi (X)=\sum_{\nu=1}^{m}A^{\nu}X\left( B^{\nu}\right)^{\dagger}.
\label{eq-8}
\end{equation}
\end{lemma}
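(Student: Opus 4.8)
The plan is to read the decomposition \eqref{eq-8} directly off the block structure of the coefficient matrix $\mat(\Phi)$, using the isomorphism $\M(n^2,\mathbb{C}) \simeq \M(n,\mathbb{C}) \otimes \M(n,\mathbb{C})$ already invoked in \eqref{eq-mat}. First I would expand $\mat(\Phi) = (\phi_{kl,ij})$ in the standard basis $\{E_{ij}\}$ of $\M(n,\mathbb{C})$, writing $\mat(\Phi) = \sum_{k,l,i,j} \phi_{kl,ij}\, E_{ki} \otimes E_{lj}$, and then group the terms according to the value of the pair $(l,j)$, of which there are exactly $n^2$. This gives $\mat(\Phi) = \sum_{(l,j)} A^{(lj)} \otimes E_{lj}$ with $A^{(lj)} := \sum_{k,i} \phi_{kl,ij}\, E_{ki} \in \M(n,\mathbb{C})$. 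Since every $E_{lj}$ has entries in $\{0,1\}$ we have $E_{lj} = \overline{E_{lj}}$, so relabelling the $n^2$ pairs $(l,j)$ by a single index $\nu$ and setting $B^\nu := E_{lj}$ expresses $\mat(\Phi)$ in the form $\sum_{\nu=1}^{m} A^\nu \otimes \overline{B}^\nu$ of \eqref{eq-mat} with $m \le n^2$.

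It then remains to note that this matrix identity is equivalent to the operator identity \eqref{eq-8}, which is exactly the index computation carried out in the lines immediately preceding the lemma: the data $\mat(\Phi) = \sum_\nu A^\nu \otimes \overline{B}^\nu$ is the same as $\phi_{kl,ij} = \sum_\nu a_{ki}^\nu \overline{b}_{lj}^\nu$, and substituting this into $\Phi(E_{ij}) = \sum_{k,l} \phi_{kl,ij}\, E_{kl}$ and extending by linearity to $X = \sum_{ij} x_{ij} E_{ij}$ produces precisely \eqref{phi-x}. Hence the construction of the previous paragraph already exhibits matrices $A^\nu, B^\nu \in \M(n,\mathbb{C})$, $\nu = 1,\dots,m$ with $m \le n^2$, for which $\Phi(X) = \sum_\nu A^\nu X (B^\nu)^\dagger$ for all $X$.

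There is essentially no real obstacle here — the statement is a bookkeeping identity — and the only points that need care are: (i) keeping the complex conjugation on the $B^\nu$ factor and the adjoint $(B^\nu)^\dagger$ (rather than a plain transpose) consistent with the index convention $(X \otimes Y)_{kl,ij} = x_{ki} y_{lj}$; and (ii) obtaining the correct bound $m \le n^2$, which comes from the dimension of a \emph{single} tensor factor $\M(n,\mathbb{C})$ — equivalently, from the operator Schmidt rank of the $n^2 \times n^2$ matrix $\mat(\Phi)$ being at most $n^2$ — and not from the dimension $n^4$ of $\End[\M(n,\mathbb{C})]$. One could also argue abstractly, noting that $X \mapsto A X B^\dagger$ corresponds to $A \otimes \overline{B}$ under $\End[\M(n,\mathbb{C})] \cong \M(n,\mathbb{C}) \otimes \M(n,\mathbb{C})$ and that every element of a tensor product $V \otimes W$ is a sum of at most $\min(\dim V, \dim W)$ simple tensors; but the explicit basis computation is more transparent and, as a byproduct, makes \eqref{phi-x} available for the rest of the paper.
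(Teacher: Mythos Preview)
Your proposal is correct and follows essentially the same route as the paper: the paper simply summarizes the preceding discussion (the tensor decomposition \eqref{eq-mat} of $\mat(\Phi)$ and the index computation leading to \eqref{phi-x}) as ``the above observations,'' without a separate proof. Your contribution is to make the existence of the decomposition \eqref{eq-mat} with $m\le n^2$ explicit by choosing the $B^\nu$ to be matrix units, which is a perfectly valid way to fill in the one step the paper leaves implicit.
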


\noindent
From \Cref{L2} and \cref{eq-mat} we get:
\begin{align}
\left(\Phi (X)\right)_{ij} =\sum_{\nu=1}^{m}\sum_{k,l=1}^n a_{ik}^{\nu}x_{kl} \overline{b}
_{lj}^{\nu}
&=\sum_{\nu=1}^{m}\sum_{k,l=1}^n a_{ik}^{\nu} \overline{b}
_{lj}^{\nu}x_{kl}\nonumber\\
&=\sum_{k,l=1}^n\sum_{\nu=1}^{m}\left( A^{\nu}\otimes \overline{B}
^{\nu}\right) _{il,kj}x_{kl}.
\end{align}
Further, using the vectorization map $\cV:\M(n,\mathbb{C})\rightarrow \mathbb{C}%
^{n^{2}}$, and \cref{eq-8}, we obtain:
\begin{equation}
\cV \left[ \Phi (X)\right] \equiv \cV \left( 
\sum_{\nu=1}^{m}A^{\nu}X \left( B^{\nu}\right)^{\dagger}\right)  =\sum_{\nu=1}^{m}\left( A^{\nu}\otimes \overline{B}^{\nu}\right) \cV (X),
\end{equation}
where $\sum_{\nu=1}^{m}\left( A^{\nu}\otimes \overline{B}^{\nu}\right) \in \M(n^{2},\mathbb{C})$
and $\cV (X)$, $\cV\left[ \Phi (X)\right] \in \mathbb{C}^{n^{2}}$. 
\medskip

\noindent
This leads to the following natural definition of the 
{\em{matrix representation}} of a linear map $\Phi \in \End \left[ \M(n,\mathbb{C})\right] $.

\begin{definition}[Matrix representation of a linear map]
	\label{matRep}
\smallskip
Let $\Phi \in \End \left[ \M(n,\mathbb{C})\right] $ such that $\forall X\in \M(n,\mathbb{C})$ $\
\Phi (X)=\sum_{\nu=1}^{m}A^{\nu}X\left( B^{\nu}\right)^{\dagger}$, for some matrices $
A^{\nu}=( a_{ki}^{\nu}) ,B^{\nu}=(b_{lj}^{\nu}) \in \M(n,\mathbb{C}),$ (for $\alpha=1,\ldots,m$). Then
the matrix $\mat(\Phi )\equiv \sum_{\nu=1}^{m}\left( A^{\nu}\otimes \overline{B}^{\nu}\right) \in
\M(n^{2},\mathbb{C})$ is a matrix representant of the map $\Phi$. For this
representation we have:
\begin{equation}
\left[ \mat(\Phi )\cV(X)\right]_{ij}=\Phi (X)_{ij},\qquad i,j=1,\ldots,n.
\end{equation}
	\label{def4n}
\end{definition}
The vector space $\M(n,\mathbb{C})$ becomes a Hilbert space when equipped with the Hilbert-Schmidt scalar product: 
\be
\forall X,Y\in \M(n,\mathbb{C})\qquad (X,Y)\equiv \tr(X^{\dagger}Y)
\ee
with induced Hilbert-Schmidt norm:
\be
 ||X||_2\equiv \sqrt{\tr\left( X^{\dagger}X \right)}.
\ee
A linear map $\Phi \in \End[\M(n,\mathbb{C})]$ acts as an operator on this Hilbert space, and its adjoint $\Phi^{\ast}$ is defined through the relation:
\be
\forall X,Y\in \M(n,\mathbb{C})\qquad (\Phi ^{\ast }(X),Y)=(X,\Phi (Y)).
\ee
Hence, from \cref{phi-x} and the above, it follows that:
	\be\label{adjt}
	\Phi ^{\ast }(X)=\sum_{\nu =1}^{m}(A^{\nu })^{\dagger}XB^{\nu },\qquad X\in \M(n,\mathbb{C}).
	\ee
Further, the following proposition holds.
\begin{lemma}
	\label{Pa}
	Taking the adjoint of an element in $\End[\M(n,\mathbb{C})]$ induces Hermitian conjugation of its matrix representant in the space 
	$\M(n^{2},\mathbb{C})$. Namely, for any $\Phi \in \End[\M(n,\mathbb{C})]$ we have
	\be
	\mat(\Phi ^{\ast })=(\mat(\Phi ))^{\dagger}.
	\ee
\end{lemma}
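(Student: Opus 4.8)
The plan is to reduce the statement to the explicit ``sandwich'' representation of a linear map furnished by \Cref{L2}, combined with the elementary interplay of complex conjugation, transposition and Hermitian conjugation under tensor products. First I would invoke \Cref{L2} to fix matrices $A^{\nu},B^{\nu}\in\M(n,\cC)$, $\nu=1,\dots,m$, with $\Phi(X)=\sum_{\nu=1}^{m}A^{\nu}X(B^{\nu})^{\dagger}$; then, by \Cref{matRep}, $\mat(\Phi)=\sum_{\nu=1}^{m}A^{\nu}\otimes\overline{B^{\nu}}$. Since the entries of $\mat(\Phi)$ are the intrinsic coefficients $\phi_{kl,ij}$ of $\Phi$, this matrix does not depend on the particular choice of the $A^{\nu},B^{\nu}$, so it is legitimate to work with a single such decomposition.

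Next I would apply \cref{adjt}, which gives $\Phi^{*}(X)=\sum_{\nu=1}^{m}(A^{\nu})^{\dagger}XB^{\nu}$. Writing $B^{\nu}=\big((B^{\nu})^{\dagger}\big)^{\dagger}$ exhibits $\Phi^{*}$ in precisely the form to which \Cref{matRep} applies, with $(A^{\nu})^{\dagger}$ in the role of $A^{\nu}$ and $(B^{\nu})^{\dagger}$ in the role of $B^{\nu}$. Hence $\mat(\Phi^{*})=\sum_{\nu=1}^{m}(A^{\nu})^{\dagger}\otimes\overline{(B^{\nu})^{\dagger}}$.

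The last step is pure bookkeeping: for any $Y\in\M(n,\cC)$ one has $\overline{Y^{\dagger}}=Y^{T}=\big(\overline{Y}\big)^{\dagger}$, so $\overline{(B^{\nu})^{\dagger}}=\big(\overline{B^{\nu}}\big)^{\dagger}$; together with the identity $(P\otimes Q)^{\dagger}=P^{\dagger}\otimes Q^{\dagger}$ this gives
\begin{equation}
\mat(\Phi^{*})=\sum_{\nu=1}^{m}(A^{\nu})^{\dagger}\otimes\big(\overline{B^{\nu}}\big)^{\dagger}=\sum_{\nu=1}^{m}\big(A^{\nu}\otimes\overline{B^{\nu}}\big)^{\dagger}=\Big(\sum_{\nu=1}^{m}A^{\nu}\otimes\overline{B^{\nu}}\Big)^{\dagger}=\big(\mat(\Phi)\big)^{\dagger},
\end{equation}
which is the claim.

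I do not expect a genuine obstacle here; the only point requiring care is to keep the three operations (bar, transpose, dagger) straight, since it is easy to slip and write, for instance, the false identity $\overline{B^{\dagger}}=(\overline{B})^{T}$. If one prefers to avoid the tensor-product manipulation altogether, an equally short route is to compute matrix entries directly: with $\Phi(E_{ij})=\sum_{k,l}\phi_{kl,ij}E_{kl}$, the defining relation of the adjoint together with $(E_{ij},E_{ab})=\delta_{ia}\delta_{jb}$ shows that the $(kl,ij)$ entry of $\mat(\Phi^{*})$ equals $\overline{\phi_{ij,kl}}$, which is by definition the $(kl,ij)$ entry of $(\mat(\Phi))^{\dagger}$.
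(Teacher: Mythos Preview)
Your proof is correct and follows essentially the same route as the paper: both start from \cref{adjt}, rewrite $B^{\nu}=\big((B^{\nu})^{\dagger}\big)^{\dagger}$ to match the template of \Cref{matRep}, and then use $\overline{(B^{\nu})^{\dagger}}=\big(\overline{B^{\nu}}\big)^{\dagger}$ together with $(P\otimes Q)^{\dagger}=P^{\dagger}\otimes Q^{\dagger}$ to conclude. Your additional remarks on well-definedness of $\mat(\Phi)$ and the alternative coordinate computation are sound but not needed for the argument.
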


\begin{proof}
	From \cref{adjt} we have
	\be
	\Phi ^{\ast }(X)=\sum_{\nu =1}^{m}(A^{\nu })^{\dagger}XB^{\nu }=\sum_{\nu
		=1}^{m}(A^{\nu })^{\dagger}X((B^{\nu })^{\dagger})^{\dagger}. 
	\ee
Then, using~\Cref{matRep} we get 
	\be
	\mat(\Phi ^{\ast })=\sum_{\nu =1}^{m}(A^{\nu })^{\dagger}\otimes \overline{((B^{\nu
		})^{\dagger})}=\sum_{\nu =1}^{m}(A^{\nu }\otimes \overline{(B^{\nu })}
	)^{\dagger}=(\mat(\Phi ))^{\dagger}. 
	\ee
\end{proof}

\section{Irreducibly covariant linear maps and quantum channels}
\label{irrCov}
Let $G$ be a finite group and let:
\begin{equation}
U:G\rightarrow \M(n,\mathbb{C}),\qquad{\hbox{i.e.~}}\,\, U(g)=\left( u_{ij}(g)\right) \in \M(n,\mathbb{C})\,\, \  \forall g \in G,
\end{equation}
be a unitary irreducible representation (irrep, in short) of $G$. The contragradient representation
$U^c: G \rightarrow \M(n,\mathbb{C})$ is given by
\begin{align}\label{contra}
U^c(g) &= U(g^{-1})^T \equiv {\overline{U}}(g) \quad \forall \, g \in G.
\end{align}

The map $
\Ad_{U}^{G}:G\longrightarrow \End\left[ \M(n,\mathbb{C})\right] $ 
is called the {\em{adjoint representation}} of the group $G$
with respect to the unitary irrep $U$, and is defined through its action 
on any $X \in \M(n , \cC)$ as follows:
\begin{align}
\Ad_{U(g)}(X)& \equiv U(g)XU^{\dagger }(g) \quad \,\, \forall g \in G.
\end{align}
Obviously, $ \Ad
_{U(g)}\in \End \left[ \M(n,\mathbb{C})\right].$

\begin{definition}[Commutant of the adjoint representation] 
\label{def-intertwine}
Let $\Int_G(\Ad_U)$ denote the set of {\em{intertwiners}} of $\Ad_U$, i.e.~the set of maps in $\End\left[ \M(n,\mathbb{C})\right]$ whose
action commutes with that of  $\Ad_U$:
\begin{align}
\Int_G(\Ad_U)&
=\{\Psi \in \End\left[ \M(n,\mathbb{C})\right] : \Psi \circ \Ad_U=\Ad_U \circ \Psi\}.
\end{align}
\end{definition} 
\smallskip

\noindent
Note that for any  $\Ad_{U(g)}\in $ $\End[\M(n,\mathbb{C})]$ (with $g\in G$) $\forall X=\left( x_{ij}\right) \in \M(n,\mathbb{C})$, we have:
\begin{equation}
 \label{eq5}
\left( \Ad_{U(g)}(X)\right)
_{ij}=\left( U(g)XU^{\dagger}(g)\right)_{ij}=\sum_{kl}\left[ U(g)\otimes \overline{U}%
(g)\right]_{il,kj}x_{kl}, 
\end{equation}%
so that:
\begin{align}
\mat(\Ad_{U(g)})=U(g)\otimes \overline{U}(g).
\end{align}%
Thus the operator $\Ad_{U(g)}\in \End[\M(n,\mathbb{C})]$ may be
represented as a matrix $U(g)\otimes \overline{U}(g)\in \M(n^{2},\mathbb{C})$.
Using the definition of the character of a representation~\cite{NS}, we arrive at the following lemma.
\begin{lemma}
	\label{Prop5n}
	The character of the adjoint representation $\Ad_{U}^{G}:G
	\longrightarrow \End\left[ \M(n,\mathbb{C})\right] $, denoted 
by $\chi^{\Ad_{U}^{G}} \equiv \chi ^{\Ad}:G\longrightarrow 
	\mathbb{C}$, is given by
	\begin{align}
	\chi^{\Ad_{U}^{G}}(g) &:= \tr \left(\mat(\Ad_{U(g)})\right)= \tr \left( U(g)\otimes \overline{U}(g)\right), \quad \forall g \in G.
\end{align}
In particular, we have
\begin{align}\label{char-ad}
\chi ^{\Ad}(g)&=|\chi ^{U}(g)|^{2}, \quad \forall g \in G,
	\end{align}
where $\chi ^{U}:G\rightarrow \mathbb{C}$ is the character of the
representation $U:G\rightarrow \M(n,\mathbb{C})$, i.e.~$\chi ^{U}(g)=\tr\left(
U(g)\right) $, $\forall g \in G$.
\end{lemma}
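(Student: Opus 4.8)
The plan is to reduce the claim to two elementary facts: the character of a finite-dimensional representation is the (basis-independent) trace of its matrix representative, and the trace is multiplicative under tensor products. There is no real obstacle here; the only thing to watch is the bookkeeping with complex conjugation coming from the contragradient leg.

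First I would invoke \cref{eq5} and the display immediately following it, which already establish that the operator $\Ad_{U(g)}\in\End[\M(n,\mathbb{C})]$ has matrix representative $\mat(\Ad_{U(g)})=U(g)\otimes\overline{U}(g)\in\M(n^{2},\mathbb{C})$ in the basis $\{E_{ij}\}$. Since for any representation $\rho$ of $G$ one has $\chi^{\rho}(g)=\tr(\rho(g))$, and this trace does not depend on the chosen matrix realization of $\rho(g)$, we get at once
\[
\chi^{\Ad_{U}^{G}}(g)=\tr\bigl(\mat(\Ad_{U(g)})\bigr)=\tr\bigl(U(g)\otimes\overline{U}(g)\bigr),\qquad\forall g\in G,
\]
which is the first assertion of the lemma.

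For \cref{char-ad} I would then apply the identity $\tr(A\otimes B)=\tr(A)\,\tr(B)$ with $A=U(g)$ and $B=\overline{U}(g)$, together with $\tr(\overline{U}(g))=\overline{\tr(U(g))}=\overline{\chi^{U}(g)}$, to obtain
\[
\chi^{\Ad}(g)=\tr(U(g))\,\tr(\overline{U}(g))=\chi^{U}(g)\,\overline{\chi^{U}(g)}=|\chi^{U}(g)|^{2}.
\]
As an optional sanity check one can compute $\tr(\Ad_{U(g)})$ directly as an operator on the Hilbert space $(\M(n,\mathbb{C}),(\cdot,\cdot))$: evaluating $\sum_{i,j}(E_{ij},U(g)E_{ij}U(g)^{\dagger})$ and using $E_{ji}U(g)E_{ij}=u_{ii}(g)E_{jj}$, the double sum factorizes as $\bigl(\sum_{i}u_{ii}(g)\bigr)\bigl(\sum_{j}\overline{u_{jj}(g)}\bigr)$, again giving $|\chi^{U}(g)|^{2}$. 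The only subtlety, and it is a minor one, is that the second tensor factor is $\overline{U}(g)$, so it contributes $\overline{\chi^{U}(g)}$ rather than $\chi^{U}(g)$; the rest is immediate.
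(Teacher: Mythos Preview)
Your proposal is correct and matches the paper's approach exactly: the paper does not give a standalone proof of this lemma, but simply remarks that it follows from the definition of a character after establishing $\mat(\Ad_{U(g)})=U(g)\otimes\overline{U}(g)$ (the display following \cref{eq5}). Your write-up fills in precisely the details the paper leaves implicit---multiplicativity of trace under tensor product and $\tr(\overline{U}(g))=\overline{\chi^{U}(g)}$---and the optional direct computation is a nice redundancy check.
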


\begin{definition}[Irreducibly covariant- linear maps (ICLM) and quantum channels (ICQC)]
\label{def1}
A linear map $\Phi \in \End \left[ \M(n,\mathbb{C})\right]$ is said to be irreducibly covariant
with respect to the unitary irrep $U:G\rightarrow \M(n,\mathbb{C})$ of a 
finite group $G$, if
\begin{equation}
\forall g\in G,\quad \forall X\in \M(n,\mathbb{C})\quad \Ad_{U(g)}[\Phi
(X)]=\Phi \lbrack \Ad_{U(g)}(X)],
\end{equation}
i.e.~$\Phi \in \Int_{G}(\Ad_{U})$.
Further, if the linear map $\Phi$ is completely positive and trace-preserving, then it is referred
to as an irreducibly covariant quantum channel. We denote an irreducibly covariant linear map
by the acronym ICLM, and an irreducibly covariant quantum channel by the acronym ICQC.
\end{definition}

\begin{remark}
	\label{R-ICLM}
	Note that the set of $ICLM$s form an algebra with composition of linear maps as a product.
\end{remark}

\section{Tools and results from group representation theory}
\label{S4}
Using tools from group representation theory, it is easy to prove the following proposition:
\begin{lemma}
\label{def7 copy(1)} A linear map $\Phi \in \End\left[ \M\left( n,\mathbb{C}\right)\right]$ is irreducibly covariant with respect to the irrep $
U:G\rightarrow \M(n,\mathbb{C})$ of a finite group $G$ (i.e.~$\Phi \in \Int_{G}(\Ad_{U}))$ if and only if 
\begin{align}
\mat(\Phi )& \in \Int_{G}\left( U\otimes
U^{c}\right) =\Int_{G}\left( U\otimes \overline{U}\right).
\end{align}
\end{lemma}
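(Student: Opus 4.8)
The plan is to translate the intertwining condition $\Psi \circ \Ad_U = \Ad_U \circ \Psi$ into a statement about the matrix representants $\mat(\Psi)$ and $\mat(\Ad_{U(g)})$, using the fact (established just before the statement) that $\mat(\Ad_{U(g)}) = U(g) \otimes \overline{U}(g)$, together with the behaviour of $\mat(\cdot)$ under composition of linear maps. The key algebraic fact I would need is that $\mat$ is an algebra homomorphism from $\End[\M(n,\mathbb{C})]$ (with composition as product) into $\M(n^2,\mathbb{C})$ (with matrix multiplication); that is, $\mat(\Psi_1 \circ \Psi_2) = \mat(\Psi_1)\,\mat(\Psi_2)$ for all $\Psi_1,\Psi_2 \in \End[\M(n,\mathbb{C})]$. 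This follows directly from \Cref{matRep}: applying $\mat(\Psi_1 \circ \Psi_2)$ to $\cV(X)$ gives $\cV((\Psi_1\circ\Psi_2)(X)) = \cV(\Psi_1(\Psi_2(X))) = \mat(\Psi_1)\cV(\Psi_2(X)) = \mat(\Psi_1)\mat(\Psi_2)\cV(X)$, and since this holds for all $X$ and $\cV$ is an isomorphism onto $\mathbb{C}^{n^2}$, the two matrices coincide.

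Granting this, the proof is a short two-way implication. For the forward direction, suppose $\Phi \in \Int_G(\Ad_U)$, i.e.\ $\Phi \circ \Ad_{U(g)} = \Ad_{U(g)} \circ \Phi$ for every $g \in G$. Applying $\mat$ to both sides and using the homomorphism property yields $\mat(\Phi)\,\mat(\Ad_{U(g)}) = \mat(\Ad_{U(g)})\,\mat(\Phi)$, i.e.\ $\mat(\Phi)\,(U(g)\otimes\overline{U}(g)) = (U(g)\otimes\overline{U}(g))\,\mat(\Phi)$ for all $g$, which is exactly the statement $\mat(\Phi) \in \Int_G(U \otimes \overline{U})$. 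For the converse, one runs the same chain of equalities backwards: if $\mat(\Phi)$ commutes with every $U(g)\otimes\overline{U}(g) = \mat(\Ad_{U(g)})$, then $\mat(\Phi \circ \Ad_{U(g)}) = \mat(\Ad_{U(g)} \circ \Phi)$, and since $\mat$ is injective (being induced by the isomorphism $\cV$), we recover $\Phi \circ \Ad_{U(g)} = \Ad_{U(g)} \circ \Phi$, i.e.\ $\Phi \in \Int_G(\Ad_U)$. Finally, the identification $\Int_G(U \otimes U^c) = \Int_G(U \otimes \overline{U})$ is immediate from \cref{contra}, since $U^c(g) = \overline{U}(g)$ by definition.

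The only genuinely substantive point — and the one I would single out as the main obstacle, though it is more of a bookkeeping lemma than a deep difficulty — is establishing the homomorphism property $\mat(\Psi_1 \circ \Psi_2) = \mat(\Psi_1)\mat(\Psi_2)$ cleanly, and in particular being careful that the conjugation in the $B^\nu$ factors of \cref{eq-mat} composes correctly. Writing $\Psi_i$ in the Kraus-like form $\Psi_i(X) = \sum_\nu A_i^\nu X (B_i^\nu)^\dagger$ and composing, the composite has operators $\{A_1^{\nu_1} A_2^{\nu_2}\}$ and $\{B_1^{\nu_1} B_2^{\nu_2}\}$, and one checks $\overline{B_1^{\nu_1}B_2^{\nu_2}} = \overline{B_1^{\nu_1}}\,\overline{B_2^{\nu_2}}$, so the tensor factors multiply as $(A_1^{\nu_1}\otimes\overline{B_1^{\nu_1}})(A_2^{\nu_2}\otimes\overline{B_2^{\nu_2}})$; summing over $\nu_1,\nu_2$ gives $\mat(\Psi_1)\mat(\Psi_2)$. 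Alternatively, and perhaps more transparently, one simply invokes the defining property in \Cref{matRep} that $\mat(\Psi)\cV(X) = \cV(\Psi(X))$ and the fact that $\cV$ is an isomorphism, as sketched above, which avoids the index manipulation entirely. Either way, once this is in hand the result follows with no further work.
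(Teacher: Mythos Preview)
Your proposal is correct and follows essentially the same route as the paper: both arguments use the vectorization identity $\mat(\Psi)\cV(X)=\cV(\Psi(X))$ together with the Kraus-like form from \Cref{L2} and the fact $\mat(\Ad_{U(g)})=U(g)\otimes\overline{U}(g)$ to pass between the two commutant conditions. The only cosmetic difference is that you isolate the homomorphism property $\mat(\Psi_1\circ\Psi_2)=\mat(\Psi_1)\mat(\Psi_2)$ as a standalone lemma, whereas the paper unfolds the same computation as an explicit chain of equivalences applied directly to $\Phi$ and $\Ad_{U(g)}$.
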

For sake of completeness, we include the proof of this lemma in~\Cref{appA}. The above lemma implies that
instead of studying the structure of the commutant $
\Int_{G}(\Ad_{U})$ in the space $\End\left[\M\left( n,\mathbb{C}\right)\right]$, it suffices to
study the structure of the commutant $\Int_{G}\left( U\otimes
U^{c}\right) $ in the matrix space $\M(n^{2},\mathbb{C})$, which is simpler
because we have to deal with matrices only.

It is known that the representation~\cite{NS}:
\begin{equation}
U\otimes U^{c}:G\rightarrow \M(n^{2},\mathbb{C})
\end{equation}%
is not irreducible and we have:
\begin{equation}
\label{gen_decomp}
U\otimes U^{c}=\bigoplus _{\alpha }m_{\alpha }\varphi ^{\alpha },
\end{equation}%
where $\varphi ^{\alpha}$ are unitary irreps of the group $G$: $\forall \, g \in G$, $%
\varphi ^{\alpha}(g)=\left( \varphi _{ij}^{\alpha }(g)\right) \in \M\left(
\left| \varphi^{\alpha }\right| ,\mathbb{C}\right) $,  and $m_{\alpha }$  is the multiplicity of the irrep $\varphi^{\alpha }$ of dimension $\left|\varphi^{\alpha } \right|\equiv \dim \varphi^{\alpha }$. The multiplicity $m_{\alpha}$ is given by the 
following expression~\cite{NS}:
\begin{equation}
\label{mult}
m_{\alpha }=\frac{1}{|G|}\sum_{g\in G}\chi ^{\alpha }\left( g^{-1}\right) \chi ^{\Ad}(g),
\end{equation}
where $\chi ^{\alpha }$ denotes the character of the irrep $\varphi ^{\alpha }(g) $,  and $\chi^{\Ad}(g)=\left|\chi^U(g) \right|^2$ is the   character of the
adjoint representation $\Ad_{U}^{G}$ defined in~\Cref{Prop5n}.  From this it follows that the commutant
of the representation $U\otimes U^{c}$:
\begin{equation}
\Int_{G}\left( U\otimes U^{c}\right) =\left\{ A\in \M\left(
n^{2},\mathbb{C}\right) :\forall g\in G\quad A\left(U(g)\otimes 
\overline{U}(g)\right)=\left(U(g)\otimes \overline{U}(g)\right)A\right\} 
\end{equation}%
is nontrivial i.e.~it is not one-dimensional. 
\smallskip

In fact, from the theory of group representations~\cite{NS}, one can deduce the following:
\begin{lemma}
\label{prop3} Let $U:G\rightarrow \M(n,\mathbb{C})$ be a
unitary irreducible representation of a given finite group $G$. Then we have
\begin{equation}
U\otimes U^{c}=\varphi ^{\id}\oplus _{\alpha \neq \id}m_{\alpha
}\varphi ^{\alpha },
\end{equation}%
i.e.~the identity irrep, $\varphi ^{\id}$, is always included
in the representation $U\otimes U^{c}$ with multiplicity one. Moreover, 
\begin{equation}
\dim \left[ \Int_{G}\left( U\otimes U^{c}\right) \right] =\frac{1}{|G|}%
\sum_{g\in G}\left\vert \chi ^{U}(g)\right\vert ^{4},
\end{equation}%
where $\chi ^{U}:G\rightarrow \mathbb{C}$ is the character of the
representation $U:G\rightarrow \M(n,\mathbb{C})$, and $|G|$ is the cardinality of the group $G$.
\end{lemma}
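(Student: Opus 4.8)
The plan is to use only elementary character theory — exactly the tools already invoked for \cref{mult} — together with Schur's lemma.

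\textbf{Multiplicity of the identity irrep.} First I would apply the multiplicity formula \cref{mult} with $\varphi^{\alpha}=\varphi^{\id}$. Since $\chi^{\id}(g^{-1})=1$ for all $g$, and since \Cref{Prop5n} gives $\chi^{\Ad}(g)=|\chi^{U}(g)|^{2}$, we obtain
\[
m_{\id}=\frac{1}{|G|}\sum_{g\in G}|\chi^{U}(g)|^{2}=\langle\chi^{U},\chi^{U}\rangle=1,
\]
the last equality being the orthonormality of the character of the irrep $U$. (Equivalently, one may note $U\otimes U^{c}\cong U\otimes\overline{U}$ and invoke Schur's lemma directly: the multiplicity of the trivial irrep in $U\otimes\overline{U}$ equals $\dim\operatorname{Hom}_{G}(U,U)=1$.) This establishes the first assertion.

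\textbf{Dimension of the commutant.} For the second claim I would use the standard fact that if a representation decomposes as $\rho=\bigoplus_{\alpha}m_{\alpha}\varphi^{\alpha}$ into pairwise inequivalent unitary irreps $\varphi^{\alpha}$, then $\dim\Int_{G}(\rho)=\sum_{\alpha}m_{\alpha}^{2}$; this follows from Schur's lemma applied block by block to \cref{gen_decomp}, since the space of intertwiners between the isotypic components $m_{\alpha}\varphi^{\alpha}$ and $m_{\beta}\varphi^{\beta}$ has dimension $m_{\alpha}m_{\beta}\,\delta_{\alpha\beta}$. Applying this to $\rho=U\otimes U^{c}$ and rewriting $\sum_{\alpha}m_{\alpha}^{2}$ via the character inner product gives
\[
\dim\Int_{G}(U\otimes U^{c})=\sum_{\alpha}m_{\alpha}^{2}=\langle\chi^{U\otimes U^{c}},\chi^{U\otimes U^{c}}\rangle=\frac{1}{|G|}\sum_{g\in G}|\chi^{U\otimes U^{c}}(g)|^{2}.
\]
Finally, since $\chi^{U\otimes U^{c}}(g)=\chi^{U}(g)\,\chi^{U^{c}}(g)=\chi^{U}(g)\,\overline{\chi^{U}(g)}=|\chi^{U}(g)|^{2}$ — using $\chi^{U^{c}}(g)=\chi^{U}(g^{-1})=\overline{\chi^{U}(g)}$ for unitary $U$, as in \cref{contra} and \cref{char-ad} — the right-hand side is precisely $\tfrac{1}{|G|}\sum_{g\in G}|\chi^{U}(g)|^{4}$, as claimed.

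\textbf{Main obstacle.} There is no real obstacle here: the argument is routine representation theory. The only points needing a little care are (i) the identity $\chi^{U^{c}}=\overline{\chi^{U}}$ for the contragredient of a unitary representation, which is already recorded in the excerpt, and (ii) the commutant-dimension formula $\dim\Int_{G}(\rho)=\sum_{\alpha}m_{\alpha}^{2}$ — if one prefers not to quote it, it is a one-line consequence of Schur's lemma applied to the decomposition \cref{gen_decomp}, writing a general intertwiner in block form with respect to the isotypic components.
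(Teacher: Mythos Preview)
Your proof is correct and complete. The paper does not actually prove this lemma: it is stated as a fact ``from the theory of group representations'' with a citation, so your argument using the multiplicity formula \cref{mult}, character orthonormality, and the Schur-lemma commutant dimension $\sum_\alpha m_\alpha^2$ simply fills in the standard details the paper omits.
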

\bigskip

\noindent
We illustrate the above lemma by the following two examples for the symmetric group $S(n)$ for $n=3, 4$. For the basics and notations of the representation theory of $S(n)$ see for example~\cite{JJF}.
\begin{example}
\label{ex5} For $G=S(3)$ and its two-dimensional, unitary irrep $U=\varphi ^{(2,1)}$
characterised by the partition $\lambda =(2,1)$ we have
\begin{equation}
U\otimes U^{c}=\varphi ^{\id}\oplus \varphi ^{\sgn}\oplus
\varphi ^{(2,1)},\quad \dim \left[ \Int_{S(3)}\left( U\otimes
U^{c}\right) \right] =3.
\end{equation}
\end{example}

\begin{example}
\label{ex6} For $G=S(4)$ and its two-dimensional, unitary irrep $U=\varphi ^{(2,2)}$
characterised by the partition $\lambda =(2,2)$ we have
\begin{equation}
U\otimes U^{c}=\varphi ^{\id}\oplus \varphi ^{\sgn}\oplus
\varphi ^{(2,2)},\quad \dim \left[ \Int_{S(4)}\left( U\otimes
U^{c}\right) \right] =3.
\end{equation}
\end{example}

In these examples all the multiplicities are equal to one so that the
decomposition of the representation $U\otimes U^{c}$ is simply reducible. This holds
also for all other irreps of the group $G=S(4)$.
However, in the case
of the group $S(5)$ there are irreps  $\varphi^{\alpha}$ for which 
the representation $U\otimes U^c$ is not simply reducible. 
\medskip

\noindent
{\bf{Notation}}
\smallskip

For notational simpliciity, henceforth, elements of $\End\left[ \M(n,\mathbb{C})\right]$, will be denoted by Greek letters,
whereas their matrix representations in $\M(n^2, \cC)$ will be denoted by the same Greek letters but with a tilde.  For example,
\begin{align}
\Gamma &=\sum_{g\in G}a_{g}\Ad_{U(g)}\in \End\left[ \M(n,\mathbb{C})\right], \quad a_g\in \mathbb{C},
\end{align}%
and 
\begin{align}
\tGamma\equiv \mat(\Gamma) & =\sum_{g\in G}a_{g}U(g)\otimes \overline{U}(g)\in 
\M(n^2, \mathbb{C}). 
\end{align}
The Schur orthogonality relations~\cite{FHa}, given below, are useful in proving the results in the following sections:
\begin{itemize}
\item{\begin{align}
	\label{or}
	\sum_{g\in G}\varphi _{ij}^{\alpha }\left( g^{-1}\right) \varphi _{kl}^{\beta }(g)&=\frac{
		|G|}{|\varphi^{\alpha}|}\delta^{\alpha \beta}\delta _{jk}\delta _{il},
	\end{align} where $|G|$ denotes cardinality of the group $G$.}
\item{\begin{align}
	\label{orr2}
	\frac{1}{|G|}\sum_{g\in G}\chi ^{\alpha }\left( g^{-1}\right)&=
	\begin{cases}
		1 \ if \ \alpha =\id,\\ 
		0 \ if~\ \alpha \neq \id,
	\end{cases}
	\end{align} where $\id$ denotes the identity irrep of the group $G$.}
\item
\begin{align}
\label{or2}
\sum_{\alpha \in \widehat{G}}\chi^{\alpha }(h)\chi^{\alpha }\bigl( g^{-1}\bigr) 
&=\frac{|G|}{|K(h)|}\delta_{K(h)K(g)},
\end{align}
where by $K(g)$ we denote the conjugacy class of the element $g \in G$,
\begin{align}
\label{conju}
K(g)=\left\lbrace a\in G \ | \ \exists h\in G \ \text{with} \ a
=hgh^{-1}\right\rbrace,
\end{align}
and sum in the~\cref{or2} runs over the set of all inequivalent irreps $\widehat{G}$.
\end{itemize}
\section{Intermediate Results}
\label{S5}
In this section we state some propositions and corollaries which are required to prove our main results. Their proofs are given in~\Cref{appB}. 
\begin{proposition}
	\label{prop8} Suppose that an unitary irrep $U:G\rightarrow \M(n,\mathbb{C})$, of a finite
	group $G$ is
	such that $U\otimes U^{c}$ is multiplicity-free, i.e.
	\begin{equation}
	U\otimes U^{c}=\bigoplus_{\alpha \in \Theta }\varphi ^{\alpha },
	\label{eq15}
	\end{equation}%
	where $\Theta $ is the index set of those irreps ($\varphi^\alpha$) of the group $G$ which
	appear (with multiplicity one) in the above decomposition of $U\otimes U^{c}$. Then,
	\begin{equation}
	\label{span_setM}
	\Int_{G}\left( U\otimes U^{c}\right) =\operatorname{span}_{\mathbb{C}}\left\{ 
	{\tPi}^{\alpha}:\alpha \in \Theta \right\} \quad {\hbox{and}} \quad\dim
	\left[ \Int_{G}\left( U\otimes U^{c}\right)\right]  =|\Theta |,
	\end{equation}
	where 
	\begin{equation}
	{\tPi}^{\alpha }=\frac{\left|\varphi ^{\alpha }\right|}{|G|}\sum_{g\in G}\chi
	^{\alpha }\left( g^{-1}\right) U(g)\otimes \overline{U}(g)\in \M( n^{2},%
	\mathbb{C}) .  \label{eq17}
	\end{equation}%
	The matrices ${\tPi}^{\alpha }$ have the following properties: 
	\begin{equation}
	{\tPi}^{\alpha }{\tPi}^{\beta }=\delta_{\alpha \beta }\tPi%
	^{\alpha },\qquad (\tPi^{\alpha })^{\dagger }=\tPi^{\alpha
	},\qquad \sum_{\alpha \in \Theta }\tPi^{\alpha }=\text{\noindent
	\(\mathds{1}\)}_{n^{2}},
	\label{eq18}
	\end{equation}%
	where $\text{\noindent
		\(\mathds{1}\)}_{n^{2}}$ is the identity operator on $\cC^{n^2}$. The set
	$\left\{{\tPi}^{\alpha }\,:\, \alpha \in \Theta\right\}$ is a complete set of orthogonal projectors and $\tr\tPi^{\alpha
	} =\left| \varphi^{\alpha }\right|$. 
\end{proposition}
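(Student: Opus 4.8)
The strategy is to pass to a basis of $\cC^{n^2}$ adapted to the decomposition \eqref{eq15} and let the Schur orthogonality relations do the work. Since $U\otimes U^c$ is a unitary representation, the equivalence \eqref{eq15} can be realized by a unitary $W\in\M(n^2,\cC)$, so that $W\bigl(U(g)\otimes\overline{U}(g)\bigr)W^\dagger=\bigoplus_{\alpha\in\Theta}\varphi^\alpha(g)$ for all $g\in G$, where now the $\varphi^\alpha=(\varphi^\alpha_{ij})$ are genuine unitary matrix irreps whose entries obey \eqref{or}; this is legitimate because the right-hand side of \eqref{eq17} depends on $\varphi^\alpha$ only through its character $\chi^\alpha$. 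Conjugating \eqref{eq17} by $W$ and writing $\chi^\alpha(g^{-1})=\sum_i\varphi^\alpha_{ii}(g^{-1})$, the $(k,l)$ entry of the $\beta$-block of $W\tPi^\alpha W^\dagger$ is $\tfrac{|\varphi^\alpha|}{|G|}\sum_{g\in G}\sum_i\varphi^\alpha_{ii}(g^{-1})\varphi^\beta_{kl}(g)$, which equals $\delta^{\alpha\beta}\delta_{kl}$ by \eqref{or}; the off-diagonal blocks vanish for the same reason. Hence $W\tPi^\alpha W^\dagger=\bigoplus_{\beta\in\Theta}\delta_{\alpha\beta}\,\iden_{|\varphi^\beta|}$, i.e.~$\tPi^\alpha$ is unitarily equivalent to the orthogonal projector onto the $\alpha$-isotypic block.

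Every stated property of the $\tPi^\alpha$ is then read off from this normal form, since idempotence, self-adjointness, mutual orthogonality, the relation $\sum_{\alpha}\tPi^\alpha=\iden_{n^2}$, and the trace are all preserved under conjugation by $W^\dagger$. This gives $\tPi^\alpha\tPi^\beta=\delta_{\alpha\beta}\tPi^\alpha$, $(\tPi^\alpha)^\dagger=\tPi^\alpha$, and $\sum_{\alpha\in\Theta}\tPi^\alpha=\iden_{n^2}$, the last using $\sum_{\alpha\in\Theta}|\varphi^\alpha|=n^2$ from the dimension count in \eqref{eq15}, together with $\tr\tPi^\alpha=|\varphi^\alpha|$. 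The trace may alternatively be obtained directly: $\tr\tPi^\alpha=\tfrac{|\varphi^\alpha|}{|G|}\sum_{g\in G}\chi^\alpha(g^{-1})\chi^{\Ad}(g)=|\varphi^\alpha|\,m_\alpha=|\varphi^\alpha|$, using \eqref{mult} and the multiplicity-free hypothesis $m_\alpha=1$.

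For the structure of the commutant, note that conjugation by $W$ is an algebra isomorphism of $\Int_G(U\otimes U^c)$ onto $\Int_G\bigl(\bigoplus_{\alpha\in\Theta}\varphi^\alpha\bigr)$, and since the $\varphi^\alpha$ are pairwise inequivalent irreps, Schur's lemma identifies the latter with the block-scalar algebra $\bigoplus_{\alpha\in\Theta}\cC\,\iden_{|\varphi^\alpha|}$, which has dimension $|\Theta|$. The matrices $W\tPi^\alpha W^\dagger$ computed above are precisely the standard idempotent basis of this algebra, hence linearly independent and spanning; transporting back by $W^\dagger$ yields $\Int_G(U\otimes U^c)=\operatorname{span}_{\cC}\{\tPi^\alpha:\alpha\in\Theta\}$ and $\dim[\Int_G(U\otimes U^c)]=|\Theta|$. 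The dimension count can also be deduced from \Cref{prop3} once one observes that $m_\alpha\in\{0,1\}$ here, and linear independence of the $\tPi^\alpha$ follows directly from $\tPi^\alpha\tPi^\beta=\delta_{\alpha\beta}\tPi^\alpha$.

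The only subtlety is the bookkeeping in the first step: one must fix the $\varphi^\alpha$ appearing in \eqref{eq15} to be exactly the unitary matrix realizations for which the entrywise relation \eqref{or} holds, and keep careful track of which block of $\cC^{n^2}$ a given index pair belongs to. Once the normal form $W\tPi^\alpha W^\dagger=\bigoplus_{\beta\in\Theta}\delta_{\alpha\beta}\,\iden_{|\varphi^\beta|}$ is established, nothing deeper than Schur orthogonality and Schur's lemma is needed.
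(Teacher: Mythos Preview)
Your proof is correct and, at bottom, uses the same two ingredients as the paper's proof---Schur orthogonality and Schur's lemma---so the approaches are essentially the same. The paper argues somewhat more abstractly: it cites \cite{NS} for the fact that the operators \eqref{eq17} are the isotypic projectors, invokes Schur's lemma for the dimension count, and separately verifies by direct computation (using that characters are class functions) that each $\tPi^\alpha$ lies in the commutant. Your route via an explicit unitary $W$ that block-diagonalizes $U\otimes U^c$, followed by the matrix-element Schur relation \eqref{or}, is a concrete realization of the same argument and has the virtue of being fully self-contained; once you have $W\tPi^\alpha W^\dagger=\bigoplus_{\beta}\delta_{\alpha\beta}\iden_{|\varphi^\beta|}$, all the listed properties---including membership in the commutant---drop out at once. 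One minor remark: in your first paragraph the phrase ``off-diagonal blocks vanish for the same reason'' is superfluous, since $W(U(g)\otimes\overline U(g))W^\dagger$ is already block-diagonal, so any linear combination of these matrices has no off-diagonal blocks to begin with.
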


\begin{corollary}
\label{Rprop8}
The identity irrep $\id$ always occurs in the decomposition~\cref{eq15}, so $\id \in \Theta$. Moreover 
\be
\label{++}
\gamma \notin \Theta \Rightarrow \widetilde{\Pi}^{\gamma}=0,
\ee
where $\widetilde{\Pi}^{\gamma}$ are the projectors defined through~\cref{eq17}.
\end{corollary}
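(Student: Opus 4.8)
The plan is to establish both claims directly from the character-theoretic formula~\cref{eq17} defining $\widetilde{\Pi}^{\gamma}$, together with the Schur orthogonality relations~\cref{orr2} and the multiplicity formula~\cref{mult}. For the first assertion ($\id\in\Theta$), I would invoke \Cref{prop3}, which has already been stated and which says precisely that the identity irrep appears in $U\otimes U^{c}$ with multiplicity one; since $\Theta$ is by definition the index set of irreps occurring in the multiplicity-free decomposition~\cref{eq15}, this immediately gives $\id\in\Theta$. Alternatively, and perhaps more in the spirit of keeping the corollary self-contained, I would compute $m_{\id}$ from~\cref{mult} as $\frac{1}{|G|}\sum_{g}\chi^{\id}(g^{-1})\chi^{\Ad}(g)=\frac{1}{|G|}\sum_{g}|\chi^{U}(g)|^{2}=1$ by the orthogonality of the character $\chi^{U}$ with itself (since $U$ is irreducible), so $\id$ occurs and hence lies in $\Theta$.

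For the second assertion, I would argue that $\widetilde{\Pi}^{\gamma}$ as defined by~\cref{eq17} makes sense for \emph{any} irrep $\gamma\in\widehat{G}$, not only those in $\Theta$, and that it equals the projector onto the $\gamma$-isotypic component of $U\otimes U^{c}$. The cleanest route is to note that $\widetilde{\Pi}^{\gamma}=\frac{|\varphi^{\gamma}|}{|G|}\sum_{g}\chi^{\gamma}(g^{-1})\,(U\otimes\overline{U})(g)$ is the standard isotypic projection operator from representation theory; when $\gamma$ does not appear in the decomposition~\cref{gen_decomp} of $U\otimes U^{c}$, this component is zero, hence the operator vanishes. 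Concretely, decomposing $U\otimes U^{c}=\bigoplus_{\alpha\in\Theta}\varphi^{\alpha}$ into irreducible blocks and applying the Schur orthogonality relation~\cref{or} block by block shows that on the $\varphi^{\alpha}$-block the operator $\widetilde{\Pi}^{\gamma}$ acts as $\delta_{\gamma\alpha}$ times the identity; if $\gamma\notin\Theta$ then $\delta_{\gamma\alpha}=0$ for every $\alpha\in\Theta$, so $\widetilde{\Pi}^{\gamma}=0$.

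I expect the only mild subtlety to be bookkeeping: \Cref{prop8} only defines $\widetilde{\Pi}^{\alpha}$ for $\alpha\in\Theta$, so I must first observe that the formula~\cref{eq17} is meaningful for any $\gamma\in\widehat{G}$ and then verify that it indeed computes the isotypic projector — this is a one-line consequence of Schur orthogonality applied to the irreducible decomposition of $U\otimes\overline{U}$, exactly as in the proof of \Cref{prop8} but now allowing the projector's label to range over all irreps. Once that identification is in place, both parts of the corollary are immediate. No serious obstacle is anticipated; the argument is a direct unwinding of definitions plus one application of the orthogonality relations, and the detailed computation can be relegated to \Cref{appB} as indicated in the text.
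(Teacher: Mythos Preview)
Your proposal is correct. For the first assertion you and the paper are aligned: the paper simply relies on \Cref{prop3} (equivalently on the direct computation $m_{\id}=\frac{1}{|G|}\sum_{g}|\chi^{U}(g)|^{2}=1$), exactly as you suggest.

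For the second assertion your route differs from the paper's. You propose to identify $\widetilde{\Pi}^{\gamma}$ with the isotypic projector and then argue block by block via~\cref{or} that it acts as $\delta_{\gamma\alpha}$ times the identity on each $\varphi^{\alpha}$-summand, hence vanishes when $\gamma\notin\Theta$. The paper instead computes the trace: from~\cref{eq17} one gets $\tr(\widetilde{\Pi}^{\gamma})=\frac{|\varphi^{\gamma}|}{|G|}\sum_{g}\chi^{\gamma}(g^{-1})|\chi^{U}(g)|^{2}=m_{\gamma}|\varphi^{\gamma}|$ by~\cref{mult}; if $\gamma\notin\Theta$ then $m_{\gamma}=0$, so $\tr(\widetilde{\Pi}^{\gamma})=0$, and since $\widetilde{\Pi}^{\gamma}$ is an orthogonal projector (the idempotent and Hermiticity relations in~\cref{eq18} follow from character orthogonality for \emph{any} $\gamma\in\widehat{G}$, not only $\gamma\in\Theta$), a projector of trace zero is zero. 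The paper's argument is a one-line trace computation and avoids invoking an explicit block decomposition or intertwiner; your argument is more constructive and makes the identification with the isotypic projector explicit, which is useful conceptually and is in any case what underlies the projector properties the paper uses. Either way the proof is short and routine.
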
	
This can be easily seen as follows: One can notice, that from~\cref{eq17} we can deduce that:
\be
\tr (\widetilde{\Pi}^{\gamma})=m_{\gamma}|\varphi^{\gamma}|,
\ee
where $m_{\gamma}$ is the multiplicity given in~\cref{mult}, therefore if $\gamma \notin \Theta$, then $m_{\gamma}=0$ and $\tr (\tPi^{\gamma})=0$. The latter implies, that $\tPi^{\gamma}=0$ because $\tPi^{\gamma}$ is a projector.

From~\Cref{prop8} we see why an assumption of the multiplicity freeness is so useful. This is because in the general scenario, i.e.~when multiplicities occurring in decomposition~\eqref{gen_decomp} satisfy $m_{\alpha} \geq 1$,   the spanning set of $\Int_{G}\left( U\otimes U^{c}\right)$ is larger than the set given through expression~\cref{span_setM}, since the multiplicity space is not trivial any more.
From the \Cref{prop8} and \Cref{def7 copy(1)} we get the following corollary.
\begin{corollary}
	\label{cor10} A linear map $\Phi \in \End\left[ \M( n,\mathbb{C})\right] $, which is irreducibly covariant with respect to a unitary irrep $U:G\rightarrow
	\M(n,\mathbb{C})$ of a finite group $G$, can be expressed in the form 
	\begin{equation}
	\Phi =l_{\id}\Pi^{\id}+\sum_{\alpha \in \Theta ,\alpha \neq \id}l_{\alpha }\Pi^{\alpha }:\quad l_{\alpha }\in \mathbb{C},  \label{eq21}
	\end{equation}
	where 
	\begin{equation}
	\label{Pi_a}
	\Pi^{\alpha }=\frac{\left| \varphi^{\alpha }\right|}{|G|}\sum_{g\in G}\chi ^{\alpha
	}\left( g^{-1}\right) \Ad_{U(g)}\in \End\left[ \M\left( n,\mathbb{C}\right)\right] ,\quad
	\alpha \in \Theta, 
	\end{equation}
	and the operators $\Pi^{\alpha }$ have the same properties as their matrix
	representants $\tPi^{\alpha } \equiv \mat (\Pi^\alpha)$, i.e.
	\begin{equation}
	\label{properties1}
	\Pi^{\alpha }\Pi^{\beta }=\delta_{\alpha \beta }\Pi^{\alpha },\qquad (\Pi^{\alpha
	})^{\ast }=\Pi^{\alpha },\qquad \sum_{\alpha \in \Theta }\Pi^{\alpha
}=\id_{\End\left[ \M(n,\mathbb{C})\right] },
\end{equation}
where $\id_{\End\left[ \M(n,\mathbb{C})\right] }$ denotes the identity map in ${\End\left[ \M(n,\mathbb{C})\right] }$.
\end{corollary}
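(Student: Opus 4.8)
The plan is to transport \Cref{prop8} from the matrix space $\M(n^{2},\mathbb{C})$ back to the operator space $\End[\M(n,\mathbb{C})]$ along the map $\mat$, identifying the two relevant commutants by means of \Cref{def7 copy(1)}. Before doing so I would record the basic functorial properties of $\mat$. From \Cref{matRep} we have $\cV(\Phi(X)) = \mat(\Phi)\,\cV(X)$ for every $X\in\M(n,\mathbb{C})$; since $\cV$ is an isomorphism this shows, first, that $\mat$ is injective, hence (as $\dim\End[\M(n,\mathbb{C})] = \dim\M(n^{2},\mathbb{C}) = n^{4}$) a linear bijection, and second, by composing, that $\cV(\Psi(\Phi(X))) = \mat(\Psi)\,\mat(\Phi)\,\cV(X)$, i.e.\ $\mat(\Psi\circ\Phi) = \mat(\Psi)\,\mat(\Phi)$, so that $\mat$ is an algebra isomorphism with respect to composition. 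Together with $\mat(\id_{\End[\M(n,\mathbb{C})]}) = \iden_{n^{2}}$ and \Cref{Pa} (which gives $\mat(\Phi^{\ast}) = \mat(\Phi)^{\dagger}$), these are all the properties of $\mat$ that are needed.

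Now let $\Phi$ be irreducibly covariant. By \Cref{def7 copy(1)}, $\mat(\Phi)\in\Int_{G}(U\otimes U^{c})$, so by \Cref{prop8} there are scalars $l_{\alpha}\in\mathbb{C}$ ($\alpha\in\Theta$) with $\mat(\Phi) = \sum_{\alpha\in\Theta} l_{\alpha}\,\tPi^{\alpha}$. Because $\mat$ is linear and $\mat(\Ad_{U(g)}) = U(g)\otimes\overline{U}(g)$ by \eqref{eq5}, the operator $\Pi^{\alpha}$ of \eqref{Pi_a} satisfies
\[
\mat(\Pi^{\alpha}) \;=\; \frac{|\varphi^{\alpha}|}{|G|}\sum_{g\in G}\chi^{\alpha}(g^{-1})\,U(g)\otimes\overline{U}(g) \;=\; \tPi^{\alpha},
\]
so applying $\mat^{-1}$ to the expansion of $\mat(\Phi)$ yields $\Phi = \sum_{\alpha\in\Theta} l_{\alpha}\,\Pi^{\alpha}$; splitting off the $\alpha=\id$ term — legitimate since $\id\in\Theta$ always, by \Cref{Rprop8} — gives precisely \eqref{eq21}.

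Finally, the relations \eqref{properties1} follow from \eqref{eq18} by applying $\mat^{-1}$: since $\mat$ respects composition and is injective, $\mat(\Pi^{\alpha}\Pi^{\beta}) = \tPi^{\alpha}\tPi^{\beta} = \delta_{\alpha\beta}\tPi^{\alpha} = \mat(\delta_{\alpha\beta}\Pi^{\alpha})$ forces $\Pi^{\alpha}\Pi^{\beta} = \delta_{\alpha\beta}\Pi^{\alpha}$; \Cref{Pa} turns $(\tPi^{\alpha})^{\dagger} = \tPi^{\alpha}$ into $(\Pi^{\alpha})^{\ast} = \Pi^{\alpha}$; and $\sum_{\alpha\in\Theta}\tPi^{\alpha} = \iden_{n^{2}} = \mat(\id_{\End[\M(n,\mathbb{C})]})$ turns into $\sum_{\alpha\in\Theta}\Pi^{\alpha} = \id_{\End[\M(n,\mathbb{C})]}$. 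I expect no genuine obstacle: the argument is pure bookkeeping once one knows $\mat$ is a $\ast$-algebra isomorphism, and the only point deserving an explicit line is the multiplicativity $\mat(\Psi\circ\Phi) = \mat(\Psi)\,\mat(\Phi)$, which is exactly what makes idempotence and mutual orthogonality of the projectors survive the transport. The converse — that every $\Phi$ of the form \eqref{eq21} is irreducibly covariant — comes for free, since \Cref{prop8} identifies $\operatorname{span}_{\mathbb{C}}\{\tPi^{\alpha}:\alpha\in\Theta\}$ with the whole commutant $\Int_{G}(U\otimes U^{c})$.
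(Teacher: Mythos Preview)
Your proposal is correct and matches the paper's intent: the paper presents \Cref{cor10} as an immediate consequence of \Cref{prop8} and \Cref{def7 copy(1)} without spelling out a proof, and you have simply filled in the transport argument along the isomorphism $\mat$ (including the multiplicativity $\mat(\Psi\circ\Phi)=\mat(\Psi)\mat(\Phi)$ and the use of \Cref{Pa}) that makes this work.
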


\begin{remark}
The expression given in~\cref{eq21} of~\Cref{cor10} is the spectral decomposition of $\Phi $: the coefficients $l_{\alpha }$ are
its eigenvalues, with $\Pi^\alpha$ being the corresponding projectors. 
\end{remark}

In the next step we will need the following statement describing the structure of the projectors $\widetilde{\Pi}^{\alpha}$ which span the commutant $\Int_{G}\left( U\otimes U^{c}\right) $.
\begin{proposition}[Spectral decomposition of the projectors $\tPi^{\alpha}$]
	\label{prop11} Let $\tPi^{\alpha }$ be a projector as in Proposition~\ref{prop8}.
	It has the following spectral decomposition:
	\begin{equation}
	\label{propertiesE}
	\tPi^{\alpha}=\sum_{i=1}^{\left| \varphi^{\alpha }\right|}\tPi_{i}^{\alpha },\qquad
	\tPi_{i}^{\alpha }\tPi_{j}^{\beta }=\delta^{\alpha \beta}\delta _{ij}\tPi_{i}^{\alpha },\quad \left(
	\tPi_{i}^{\alpha }\right) ^{\dagger }=\tPi_{i}^{\alpha },\quad \tr\left(
	\tPi_{i}^{\alpha }\right) =1,
	\end{equation}%
	where 
	\begin{equation}
	\label{expF}
	\tPi_{i}^{\alpha }=\frac{\left| \varphi^{\alpha }\right|}{|G|}\sum_{g\in G}\varphi
	_{ii}^{\alpha }\left( g^{-1}\right) U(g)\otimes \overline{U}(g)\in \M(n^{2},%
	\mathbb{C}).
	\end{equation}
\end{proposition}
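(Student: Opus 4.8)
The plan is to take formula~\eqref{expF} as the \emph{definition} of the matrices $\tPi_i^{\alpha}$ and then verify the four asserted properties one at a time, using only the Schur orthogonality relations~\eqref{or}, the unitarity of $U$ and of the irreps $\varphi^{\alpha}$, Lemma~\ref{Prop5n}, and Proposition~\ref{prop8}. \emph{Completeness of the decomposition} is immediate: since $\sum_{i=1}^{|\varphi^{\alpha}|}\varphi_{ii}^{\alpha}(g^{-1})=\chi^{\alpha}(g^{-1})$, summing~\eqref{expF} over $i$ gives $\sum_{i}\tPi_i^{\alpha}=\frac{|\varphi^{\alpha}|}{|G|}\sum_{g\in G}\chi^{\alpha}(g^{-1})\,U(g)\otimes\overline{U}(g)=\tPi^{\alpha}$ by~\eqref{eq17}.

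The orthogonality relation is the main computation and the only real obstacle. I would write $\tPi_i^{\alpha}\tPi_j^{\beta}$ as a double sum over $g,h\in G$, use that $g\mapsto U(g)\otimes\overline{U}(g)$ is a representation so that $\bigl(U(g)\otimes\overline{U}(g)\bigr)\bigl(U(h)\otimes\overline{U}(h)\bigr)=U(gh)\otimes\overline{U}(gh)$, and then substitute $k=gh$ (hence $h=g^{-1}k$) and sum over $g$ and $k$ independently. Expanding $\varphi_{jj}^{\beta}(k^{-1}g)=\sum_{m}\varphi_{jm}^{\beta}(k^{-1})\varphi_{mj}^{\beta}(g)$ and carrying out the $g$-sum via~\eqref{or}, namely $\sum_{g}\varphi_{ii}^{\alpha}(g^{-1})\varphi_{mj}^{\beta}(g)=\frac{|G|}{|\varphi^{\alpha}|}\delta^{\alpha\beta}\delta_{im}\delta_{ij}$, collapses everything: the Kronecker deltas force $\beta=\alpha$, $j=i$, $m=i$, leaving exactly $\delta^{\alpha\beta}\delta_{ij}\,\frac{|\varphi^{\alpha}|}{|G|}\sum_{k\in G}\varphi_{ii}^{\alpha}(k^{-1})\,U(k)\otimes\overline{U}(k)=\delta^{\alpha\beta}\delta_{ij}\,\tPi_i^{\alpha}$. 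The care needed here is purely bookkeeping of the four matrix indices in~\eqref{or}; nothing deeper is involved.

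Hermiticity follows from unitarity: $\bigl(U(g)\otimes\overline{U}(g)\bigr)^{\dagger}=U(g^{-1})\otimes\overline{U}(g^{-1})$, while unitarity of $\varphi^{\alpha}$ gives $\overline{\varphi_{ii}^{\alpha}(g^{-1})}=\varphi_{ii}^{\alpha}(g)$; taking the adjoint of~\eqref{expF} and relabelling $g\mapsto g^{-1}$ returns~\eqref{expF}, so $(\tPi_i^{\alpha})^{\dagger}=\tPi_i^{\alpha}$. For the trace, $\tr(\tPi_i^{\alpha})=\frac{|\varphi^{\alpha}|}{|G|}\sum_{g\in G}\varphi_{ii}^{\alpha}(g^{-1})\,\tr\bigl(U(g)\otimes\overline{U}(g)\bigr)=\frac{|\varphi^{\alpha}|}{|G|}\sum_{g\in G}\varphi_{ii}^{\alpha}(g^{-1})\,\chi^{\Ad}(g)$ by Lemma~\ref{Prop5n}; under the multiplicity-free hypothesis~\eqref{eq15} one has $\chi^{\Ad}=\sum_{\beta\in\Theta}\chi^{\beta}$, and for each $\beta\in\Theta$ Schur orthogonality gives $\sum_{g}\varphi_{ii}^{\alpha}(g^{-1})\chi^{\beta}(g)=\frac{|G|}{|\varphi^{\alpha}|}\delta^{\alpha\beta}$, so summing over $\beta\in\Theta$ (and using $\alpha\in\Theta$) yields $\frac{|G|}{|\varphi^{\alpha}|}$ and hence $\tr(\tPi_i^{\alpha})=1$. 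Together with the previous two items, each $\tPi_i^{\alpha}$ is therefore a nonzero Hermitian idempotent of trace one, i.e.\ a rank-one orthogonal projector, and these pieces refine the rank-$|\varphi^{\alpha}|$ projector $\tPi^{\alpha}$ into rank-one summands, which is precisely the claimed spectral decomposition. (Alternatively, the trace claim can be obtained from $\sum_i\tr(\tPi_i^{\alpha})=\tr(\tPi^{\alpha})=|\varphi^{\alpha}|$ in Proposition~\ref{prop8} once one knows each $\tPi_i^{\alpha}$ is a projector and is nonzero, but the direct character computation supplies both facts at once.)
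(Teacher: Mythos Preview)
Your proof is correct and follows essentially the same route as the paper: the orthogonality is obtained by the same substitution $k=gh$ followed by an application of the Schur relation~\eqref{or}, hermiticity comes from unitarity of $U$ and $\varphi^{\alpha}$, and the trace is computed via the decomposition~\eqref{eq15} of $\chi^{\Ad}$ together with~\eqref{or}. If anything, you are more explicit than the paper, which only writes out the orthogonality for $\alpha=\beta$ and merely asserts the trace identity; your treatment of the general $\alpha,\beta$ case and the full trace computation are welcome additions.
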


\begin{remark}
	\label{Rprop11}
	In particular as a remark of~\Cref{prop11} we can say that for $p,q,s,t\in\{1,\ldots,n\}$ we have:
	\be
	\widetilde{\Pi}^{\id}=\left( \widetilde{\Pi}^{\id}\right) _{pq,st}=\left( \frac{1}{|U|}\delta _{pq}\delta
	_{st}\right) \Rightarrow \left( \widetilde{\Pi}^{\id}\right) _{pq,pq}=\left( \frac{1}{|U|}\delta
	_{pq}\right) ,
	\ee
	where $|U|\equiv \dim U$. 
\end{remark}

Hence, the spectral decomposition of $\tPi^{\alpha }$ is
explicitly given by characteristics of the representation $%
U\otimes U^{c}$ of the group $G$ (which are known for any given irrep $U$).

\begin{corollary}
	\label{cor17}
	The linear maps $\Pi^{\alpha}_i\in \End\left[ \M(n,\mathbb{C})\right]$ such that $\mat\left[\Pi^{\alpha}_i \right] ={\tPi}_i^{\alpha}$, are given by:
	\be
	\label{Pii_a}
	\Pi^{\alpha}_i=\frac{\left| \varphi^{\alpha }\right|}{|G|}\sum_{g\in G}\varphi
	_{ii}^{\alpha }\left( g^{-1}\right) \Ad_{U(g)}.
	\ee 
	and satisfy relations analogous to~\cref{propertiesE} of their matrix representants ${\tPi}_i^{\alpha}$:
	\begin{equation}
	\label{chain}
	\Pi^{\alpha}=\sum_{i=1}^{\left| \varphi^{\alpha }\right|}\Pi_{i}^{\alpha },\qquad
	\Pi_{i}^{\alpha }\Pi_{j}^{\beta }=\delta^{\alpha \beta}\delta _{ij}\Pi_{i}^{\alpha },\quad \left(
	\Pi_{i}^{\alpha }\right) ^{\ast}=\Pi_{i}^{\alpha }.
	\end{equation}
\end{corollary}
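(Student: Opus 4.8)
The plan is to deduce the whole statement from the single structural fact, implicit in Definition~\ref{matRep}, that the assignment $\mat:\End[\M(n,\mathbb{C})]\to\M(n^{2},\mathbb{C})$ is a \emph{linear, injective algebra homomorphism} which moreover intertwines the adjoint $\ast$ with Hermitian conjugation $\dagger$. Linearity is immediate from the defining relation $\cV[\Phi(X)]=\mat(\Phi)\cV(X)$; injectivity follows because $\cV$ is an isomorphism, so $\mat(\Phi)=0$ forces $\Phi(X)=0$ for every $X$; multiplicativity $\mat(\Phi\circ\Psi)=\mat(\Phi)\mat(\Psi)$ follows by applying that relation twice, since $\cV[\Phi(\Psi(X))]=\mat(\Phi)\cV[\Psi(X)]=\mat(\Phi)\mat(\Psi)\cV(X)$ for all $X$; and the $\ast$\,/\,$\dagger$ compatibility is exactly Lemma~\ref{Pa}. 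Once this is in hand, Proposition~\ref{prop11} does all the work.

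First I would identify $\mat(\Pi^{\alpha}_i)$. Using linearity of $\mat$ together with the identity $\mat(\Ad_{U(g)})=U(g)\otimes\overline{U}(g)$ recorded in Section~\ref{irrCov}, one gets
\[
\mat(\Pi^{\alpha}_i)=\frac{|\varphi^{\alpha}|}{|G|}\sum_{g\in G}\varphi^{\alpha}_{ii}(g^{-1})\,\mat(\Ad_{U(g)})=\frac{|\varphi^{\alpha}|}{|G|}\sum_{g\in G}\varphi^{\alpha}_{ii}(g^{-1})\,U(g)\otimes\overline{U}(g),
\]
which is precisely the matrix $\tPi^{\alpha}_i$ of \cref{expF}. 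Hence $\mat(\Pi^{\alpha}_i)=\tPi^{\alpha}_i$, and injectivity of $\mat$ shows that the operators $\Pi^{\alpha}_i$ defined by \cref{Pii_a} are the unique preimages of the $\tPi^{\alpha}_i$, so the definition is consistent.

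The relations in \cref{chain} are then obtained by transporting the relations \cref{propertiesE} of Proposition~\ref{prop11} back through $\mat$. Concretely, $\mat(\Pi^{\alpha}_i\circ\Pi^{\beta}_j)=\tPi^{\alpha}_i\tPi^{\beta}_j=\delta^{\alpha\beta}\delta_{ij}\tPi^{\alpha}_i=\mat(\delta^{\alpha\beta}\delta_{ij}\Pi^{\alpha}_i)$, so injectivity gives $\Pi^{\alpha}_i\circ\Pi^{\beta}_j=\delta^{\alpha\beta}\delta_{ij}\Pi^{\alpha}_i$; likewise $\mat((\Pi^{\alpha}_i)^{\ast})=(\tPi^{\alpha}_i)^{\dagger}=\tPi^{\alpha}_i=\mat(\Pi^{\alpha}_i)$ gives $(\Pi^{\alpha}_i)^{\ast}=\Pi^{\alpha}_i$; and $\mat(\sum_i\Pi^{\alpha}_i)=\sum_i\tPi^{\alpha}_i=\tPi^{\alpha}=\mat(\Pi^{\alpha})$ gives $\Pi^{\alpha}=\sum_i\Pi^{\alpha}_i$. (The trace normalization is a statement about the matrix representants and is already contained in \cref{propertiesE}.)

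I do not expect a genuine obstacle: the entire content is that $\mat$ is a faithful representation of the algebra $\End[\M(n,\mathbb{C})]$, and the only step needing an explicit line is injectivity, which is immediate from $\cV$ being an isomorphism. A fully self-contained alternative would avoid $\mat$ altogether and verify $\Pi^{\alpha}_i\circ\Pi^{\beta}_j=\delta^{\alpha\beta}\delta_{ij}\Pi^{\alpha}_i$ by expanding $\Ad_{U(g)}\circ\Ad_{U(h)}=\Ad_{U(gh)}$ and applying the Schur orthogonality relation \cref{or} to $\sum_{g,h}\varphi^{\alpha}_{ii}(g^{-1})\varphi^{\beta}_{jj}(h^{-1})$; this merely repeats the computation already performed for Proposition~\ref{prop11}, which is exactly why the corollary comes essentially for free.
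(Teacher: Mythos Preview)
Your proof is correct, but it takes a genuinely different route from the paper. The paper proves \cref{chain} by direct computation at the level of operators in $\End[\M(n,\mathbb{C})]$: for the product $\Pi_i^\alpha \Pi_j^\beta$ it expands $\Ad_{U(g)}\circ\Ad_{U(h)}=\Ad_{U(gh)}$, substitutes $s=gh$, and applies Schur orthogonality \cref{or}; for self-adjointness it writes out $(X,\Pi_i^\alpha(Y))$ in the Hilbert--Schmidt inner product and manipulates the sum over $G$ directly. In other words, the paper essentially repeats the proof of Proposition~\ref{prop11} one level up, exactly the ``self-contained alternative'' you mention in your last paragraph. Your approach is more structural: you observe once and for all that $\mat$ is an injective $\ast$-algebra homomorphism (linearity and multiplicativity from the defining relation via $\cV$, the $\ast/\dagger$ compatibility from Lemma~\ref{Pa}), and then transport \cref{propertiesE} back through it. This is cleaner and avoids duplicating the Schur-orthogonality computation; the paper's version is more explicit but redundant given that Proposition~\ref{prop11} is already in hand.
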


The following proposition gives the spectral decomposition of the rank one
projectors $\Pi_{i}^{\alpha } \in \End\left[ \M(n, \cC)\right]$.

\begin{proposition}
	\label{propP} Let $V_{i}^{\alpha }\in \M(n,\mathbb{C})$ denote the normalised in the Hilbert-Schmidt norm
	eigenvectors of the projector $%
	\Pi_{i}^{\alpha }\in \End\left[ \M\left( n,\mathbb{C}\right)\right] $, corresponding to the eigenvalue $1$, i.e.
	\begin{align}
	\Pi_i^\alpha V_{j}^{\beta } &= \delta^{\alpha \beta}\delta_{ij}V_{i}^{\alpha }.
	\end{align}
	Then $V_{i}^{\alpha }$ has the following form: there exists a pair $(s,t)$ with $s,t \in \left\lbrace 1,\ldots ,n\right\rbrace$ such that:
	\begin{equation}
	\label{11}
	\quad V_{i}^{\alpha }\equiv V_{i}^{\alpha }(s,t)= \frac{1}{\sqrt{\left(\tPi_i^\alpha\right)_{st,st}}}\frac{\left| \varphi^{\alpha }\right|}{|G|}\sum_{g\in G}\varphi _{ii}^{\alpha
	}\left( g^{-1}\right) U_{C(s)}(g)U_{R(t)}\left( g^{-1}\right)
	\neq 0,
	\end{equation}
	where $U_{C(s)}(g)$ and $U_{R(t)}(g)$ respectively denote the $s^{th}$ column and the $t^{th}$ row of the matrix $U(g) \in \M(n, \cC)$. 
	If  $(s,t)$ and $(p,q)$ (with $s,t,p,q\in \{1,\ldots,n\}$) are pairs for which $V^{\alpha}_i(s,t)\neq 0$ and $V^{\alpha}_i(p,q)\neq 0$, then the following orthonormality relation holds:
	\be
	\label{VHS}
	(V_{i}^{\alpha }(s,t),V_{j}^{\beta }(p,q))=\delta ^{\alpha \beta }\delta
	_{ij}e^{\operatorname{i}\zeta },
	\ee
	where $\operatorname{i}^2=-1$ and $\zeta$ is some phase factor. 
\end{proposition}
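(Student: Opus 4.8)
The plan is to compute the projector $\Pi_i^\alpha$ explicitly on a convenient basis of $\M(n,\cC)$ and extract an eigenvector from its matrix elements, exactly paralleling how one extracts a rank-one projection in the matrix picture $\tPi_i^\alpha$. First I would recall from \Cref{prop11} and \Cref{cor17} that $\tPi_i^\alpha$ is a rank-one orthogonal projector on $\cC^{n^2}\simeq\M(n,\cC)$, so it has the form $\tPi_i^\alpha = \ketbra{w_i^\alpha}{w_i^\alpha}$ for a unit vector $w_i^\alpha$; in matrix-element language $(\tPi_i^\alpha)_{kl,pq} = (w_i^\alpha)_{kl}\overline{(w_i^\alpha)_{pq}}$. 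Hence for any fixed pair $(s,t)$ with $(\tPi_i^\alpha)_{st,st}\neq 0$, the $(s,t)$-``column'' of $\tPi_i^\alpha$, namely the matrix with entries $(\tPi_i^\alpha)_{kl,st}$, equals $(w_i^\alpha)_{kl}\,\overline{(w_i^\alpha)_{st}}$, which is a nonzero scalar multiple of $w_i^\alpha$. Normalising in Hilbert--Schmidt norm gives a unit eigenvector, and since $(\tPi_i^\alpha)_{st,st} = |(w_i^\alpha)_{st}|^2$, the normalisation constant is $1/\sqrt{(\tPi_i^\alpha)_{st,st}}$ up to a phase. Translating the vectorised formula \cref{expF} back through $\cV^{-1}$, the matrix with entries $(\tPi_i^\alpha)_{kl,st}$ is precisely $\frac{|\varphi^\alpha|}{|G|}\sum_{g\in G}\varphi_{ii}^\alpha(g^{-1})\, U_{C(s)}(g)\,U_{R(t)}(g^{-1})$, because $\cV^{-1}$ applied to the ``$(s,t)$-column'' of $U(g)\otimes\overline U(g)$ is $\ket{U_{C(s)}(g)}\bra{U_{R(t)}(g^{-1})}$ after using unitarity $\overline{u_{tq}(g)} = u_{qt}(g^{-1})$. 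That establishes \cref{11}, with the caveat that one must justify the existence of \emph{some} pair $(s,t)$ with $(\tPi_i^\alpha)_{st,st}\neq 0$ — this follows because $\tr\tPi_i^\alpha = 1 \neq 0$, so some diagonal entry is nonzero.

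For the orthonormality relation \cref{VHS}, I would argue as follows. If $(s,t)$ and $(p,q)$ both give nonzero vectors, then $V_i^\alpha(s,t)$ and $V_i^\alpha(p,q)$ are both unit eigenvectors of the \emph{rank-one} projector $\Pi_i^\alpha$ for eigenvalue $1$, hence both lie in the same one-dimensional eigenspace and therefore differ by a phase $e^{\mathrm i\zeta}$; taking the Hilbert--Schmidt inner product gives $(V_i^\alpha(s,t),V_i^\alpha(p,q)) = e^{\mathrm i\zeta}$. For distinct $(\alpha,i)\neq(\beta,j)$, orthogonality $(V_i^\alpha,V_j^\beta)=0$ is immediate from the defining relation $\Pi_i^\alpha V_j^\beta = \delta^{\alpha\beta}\delta_{ij}V_i^\alpha$ together with self-adjointness of the projectors: $(V_i^\alpha,V_j^\beta) = (\Pi_i^\alpha V_i^\alpha, V_j^\beta) = (V_i^\alpha, \Pi_i^\alpha V_j^\beta) = \delta^{\alpha\beta}\delta_{ij}(V_i^\alpha,V_j^\beta)$. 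Combining the two cases yields \cref{VHS}.

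The main obstacle I anticipate is purely bookkeeping: carefully tracking the index conventions of the vectorisation map $\cV$ (\Cref{prop2 copy(1)}), the convention $(X\otimes Y)_{kl,ij} = x_{ki}y_{lj}$, and the placement of $g$ versus $g^{-1}$ so that the ``column of the projector matrix'' really unvectorises to the claimed rank-one matrix $\ket{U_{C(s)}(g)}\bra{U_{R(t)}(g^{-1})}$. In particular one has to be consistent about whether $(\tPi_i^\alpha)_{kl,st}$ has $(k,l)$ as the output (row) multi-index and $(s,t)$ as the input (column) multi-index, and then use $\overline{U}(g)$ on the second tensor factor together with $\overline{u_{tq}(g)} = u_{qt}(g^{-1})$ from unitarity. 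Once the conventions are pinned down, the computation is a direct substitution of \cref{expF} into the ``extract a column'' recipe, and the remaining claims are the soft linear-algebra arguments above. I would also remark that the identity $\varphi_{ii}^\alpha$ (rather than a full matrix element $\varphi_{kl}^\alpha$) is what guarantees $\tPi_i^\alpha$ is a projector of rank exactly $1$, which is the structural input that makes the eigenvector unique up to phase.
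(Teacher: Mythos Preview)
Your proof is correct but follows a different route from the paper's. The paper verifies the eigenvalue equation $\Pi_r^\alpha\bigl(V_r^\alpha(s,t)\bigr)=V_r^\alpha(s,t)$ by a direct computation: it substitutes the definitions of $\Pi_r^\alpha$ and $V_r^\alpha(s,t)$, reindexes the double sum over $G$, and collapses one summation using the Schur orthogonality relation~\cref{or}. It then computes the Hilbert--Schmidt norm explicitly to obtain $\lVert V_r^\alpha(s,t)\rVert_2^2=(\tPi_r^\alpha)_{st,st}$, which simultaneously gives the normalisation and the existence of a nonzero choice of $(s,t)$. For the orthonormality relation~\cref{VHS}, the paper again computes $(V_i^\alpha(s,t),V_j^\beta(p,q))$ explicitly, uses Schur orthogonality to force $\alpha=\beta$, $i=j$, and reduces the surviving case to the ratio $(\tPi_i^\alpha)_{st,pq}\big/\sqrt{(\tPi_i^\alpha)_{st,st}(\tPi_i^\alpha)_{pq,pq}}$, then invokes a standard matrix-analysis fact (equality in Cauchy--Schwarz for rank-one positive matrices, citing Horn) to conclude this has modulus one.

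Your argument is more structural: you exploit directly that $\tPi_i^\alpha$ is a rank-one orthogonal projector, so any nonzero column is a scalar multiple of the unique unit eigenvector, and any two unit eigenvectors differ by a phase. This bypasses both the Schur-orthogonality computation for the eigenvalue equation and the explicit inner-product calculation for~\cref{VHS}, replacing them with soft linear algebra. The trade-off is that your approach leans harder on the index bookkeeping for $\cV$ and the tensor convention $(X\otimes Y)_{kl,ij}=x_{ki}y_{lj}$, which you correctly flag as the main hazard; the paper's computational route makes those identifications implicitly but checks them against the group-theoretic machinery. Both arguments ultimately rest on the same structural fact ($\rank\tPi_i^\alpha=1$ from \Cref{prop11}), but yours uses it earlier and more globally, while the paper rederives its consequences by hand.
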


\begin{corollary}
	\label{RpropP}
	In~\cref{VHS} of~\Cref{propP}, the phase factor $\zeta $ depends on the indices $s,t,p,q\in
	\{1,...,n\}$ and if $(s,t)=(p,q)$ then $\zeta =0$, so that
	\be
	\label{astt}
	V_{i}^{\alpha }(p,q)=e^{\operatorname{i}\zeta }V_{i}^{\alpha }(s,t).
	\ee
\end{corollary}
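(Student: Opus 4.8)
The plan is to start from the explicit formula \eqref{11} for $V_i^\alpha(s,t)$ and compute the Hilbert–Schmidt inner product $(V_i^\alpha(s,t), V_i^\alpha(p,q))$ directly, tracking the phase. First I would write out both vectors using \eqref{11}, so that the inner product becomes a double sum over $g, h \in G$ of products of four matrix elements of $U$ (two from the $s,t$ factor at group element $g$, two from the $p,q$ factor at $h$), weighted by $\overline{\varphi_{ii}^\alpha(g^{-1})}\,\varphi_{ii}^\alpha(h^{-1})$ and the normalisation constants $\bigl((\tPi_i^\alpha)_{st,st}(\tPi_i^\alpha)_{pq,pq}\bigr)^{-1/2}$. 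Because $\Pi_i^\alpha$ is a self-adjoint rank-one projector (Corollary \ref{cor17}), the inner product of any two of its unit eigenvectors must be a complex number of modulus one, so the whole computation is really just about isolating that scalar; this is already known abstractly, and indeed is restated as \eqref{VHS} in \Cref{propP}. The new content here is only the dependence on the indices.

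The key step is the specialisation $(s,t) = (p,q)$. Setting $p = s$ and $q = t$ in the formula, the two normalisation constants coincide and become $(\tPi_i^\alpha)_{st,st}^{-1}$, and the inner product is then exactly $(V_i^\alpha(s,t), V_i^\alpha(s,t)) = \|V_i^\alpha(s,t)\|_2^2$. But $V_i^\alpha(s,t)$ was defined in \eqref{11} to be normalised in the Hilbert–Schmidt norm — that is precisely the role of the prefactor $\bigl((\tPi_i^\alpha)_{st,st}\bigr)^{-1/2}$, which one checks by noting that $\|\tPi_i^\alpha \cV(E_{ts})\|^2 = \langle \cV(E_{ts}), \tPi_i^\alpha \cV(E_{ts})\rangle = (\tPi_i^\alpha)_{st,st}$ since $\tPi_i^\alpha$ is a projector, and $V_i^\alpha(s,t)$ is (up to normalisation) the matrix whose vectorisation is $\tPi_i^\alpha \cV(E_{ts})$. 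Hence $\|V_i^\alpha(s,t)\|_2^2 = 1$, and the phase $\zeta$ in \eqref{VHS} must be $0$ in this case, giving $e^{\mathrm i \zeta} = 1$.

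For the final display \eqref{astt}: since $V_i^\alpha(s,t)$ and $V_i^\alpha(p,q)$ are both unit eigenvectors of the \emph{same} rank-one projector $\Pi_i^\alpha$ for the eigenvalue $1$, the eigenspace is one-dimensional, so they must be scalar multiples of one another; that scalar has modulus one by the previous paragraph applied with the two index pairs interchanged (or directly from \eqref{VHS}), so $V_i^\alpha(p,q) = e^{\mathrm i \zeta} V_i^\alpha(s,t)$ with $\zeta = \zeta(s,t,p,q)$ the phase appearing in \eqref{VHS}, and this phase vanishes exactly when the two pairs agree. The only mild obstacle is bookkeeping: one has to be careful that the vector $V_i^\alpha(s,t)$ in \eqref{11} is genuinely the (normalised) image $\Pi_i^\alpha(E_{?})$ of the appropriate elementary matrix rather than something transposed or conjugated, because the columns-and-rows notation $U_{C(s)}(g)U_{R(t)}(g^{-1})$ hides an application of $\Ad_{U(g)}$ to $E_{st}$ (or $E_{ts}$); once that identification is pinned down, everything else is the routine observation that a projector's diagonal matrix element equals the squared norm of the projected basis vector.
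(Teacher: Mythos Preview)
Your proposal is correct and takes essentially the same approach as the paper. The paper's proof of Corollary~\ref{RpropP} simply refers back to the computation in the proof of Proposition~\ref{propP}, where the inner product is reduced to $\dfrac{(\tPi_i^\alpha)_{st,pq}}{\sqrt{(\tPi_i^\alpha)_{st,st}(\tPi_i^\alpha)_{pq,pq}}}$, which is manifestly $1$ when $(s,t)=(p,q)$, and then invokes the rank-one eigenspace argument for \eqref{astt}; your direct normalisation argument and one-dimensional eigenspace observation are exactly these two points, just phrased without first writing the general off-diagonal quotient.
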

	Note that from~\cref{11} of~\Cref{propP} we can deduce that
	\be
	\label{RpropP2}
	V^{\id}(s,s)=\frac{1}{\sqrt{|U|}}\text{\noindent
		\(\mathds{1}\)}_{n}.
	\ee

For any $\beta \in \Theta$ and $i\in \{1,2,\ldots,|\varphi^{\beta}|\}$, let us define the set
\begin{align}\label{setS}
\cS_{\beta,i} := \left\{ (s,t)\in \{1,\ldots,n\}\times \{1,\ldots,n\}\,:\, 
\left(\widetilde{\Pi}_i^{\beta} \right)_{st,st}\neq 0 \right\}.
\end{align}
Then for any $\beta \in \Theta$ and $i\in \{1,2,\ldots, \left| \varphi^\beta\right|\}$ the vector $V_{i}^{\beta}$ is uniquely parametrized (up to a phase) by a given pair in the set $\cS_{\beta,i}$. The phase turns out to be irrelevant in our characterization of irreducibly covariant linear maps or quantum channels (see \Cref{independence}). The set $\cS_{\beta,i}$ parametrizes the non-zero vectors $V_{i}^{\beta }(s,t)$ because we
have (see~\cref{thm-30-1} in~\Cref{thm21}):
\be
\left| \left| V_{i}^{\beta }(s,t)\right| \right| _{2}^{2}=\left( \widetilde{\Pi} _{i}^{\beta }\right) _{st,st},
\ee
so if $\left( \widetilde{\Pi} _{i}^{\beta }\right) _{st,st}\neq 0$ then the coreponding vector $V_{i}^{\beta }(s,t)$ is well defined.
\begin{corollary}
	\label{n2}
	The set of $n^{2}$ matrices
\be
\label{sV}
\left\{V_{i}^{\beta} \equiv V_i^\beta(s,t)\,:\,(s,t) \in \cS_{\beta,i}, \   \beta \in \Theta, \  i\in \{1,2,\ldots, | \varphi^\beta|\}\right\},
\ee
constitute an orthonormal basis of the
	linear space $\M(n,\mathbb{C})$. 
\end{corollary}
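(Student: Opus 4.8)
The plan is to recognise the family $\{V_i^\beta\}$ as a complete set of normalised eigenvectors for a complete family of mutually orthogonal rank-one projectors acting on the Hilbert space $(\M(n,\mathbb{C}),(\cdot,\cdot))$, after which the statement is immediate from elementary linear algebra. The two genuine inputs are \Cref{cor17} (the projectors $\Pi_i^\alpha$ and their algebra) and \Cref{propP} together with \Cref{RpropP} (which identify their one-dimensional ranges with the matrices $V_i^\beta(s,t)$).

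First I would record, from relations \cref{chain} and \cref{properties1}, that the operators $\Pi_i^\alpha\in\End[\M(n,\mathbb{C})]$ satisfy $\Pi_i^\alpha\Pi_j^\beta=\delta^{\alpha\beta}\delta_{ij}\Pi_i^\alpha$, $(\Pi_i^\alpha)^\ast=\Pi_i^\alpha$, and $\sum_{\alpha\in\Theta}\sum_{i=1}^{|\varphi^\alpha|}\Pi_i^\alpha=\id_{\End[\M(n,\mathbb{C})]}$; hence each $\Pi_i^\alpha$ is an orthogonal projector on $\M(n,\mathbb{C})$ and distinct ones are mutually orthogonal. Since $\Pi_i^\alpha$ is represented by the idempotent matrix $\tPi_i^\alpha$ with $\tr\tPi_i^\alpha=1$ (relation \cref{propertiesE}), its rank is one, so its range is a one-dimensional subspace of $\M(n,\mathbb{C})$. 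Counting dimensions, $\sum_{\alpha\in\Theta}|\varphi^\alpha|=\sum_{\alpha\in\Theta}\tr\tPi^\alpha=\tr\iden_{n^2}=n^2$ (using $\tr\tPi^\alpha=|\varphi^\alpha|$ from \Cref{prop8} and $\sum_{\alpha}\tPi^\alpha=\iden_{n^2}$), so $\{\Pi_i^\alpha:\alpha\in\Theta,\ i\le|\varphi^\alpha|\}$ is a family of exactly $n^2$ mutually orthogonal rank-one orthogonal projectors summing to the identity of the $n^2$-dimensional space $\M(n,\mathbb{C})$.

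Then for each pair $(\alpha,i)$ the range of $\Pi_i^\alpha$, i.e. its eigenspace for the eigenvalue $1$, is one-dimensional and, by \Cref{propP}, is spanned by the Hilbert–Schmidt-unit matrix $V_i^\alpha(s,t)$ for any $(s,t)$ with $(\tPi_i^\alpha)_{st,st}\neq 0$; such a pair exists because $\sum_{s,t}(\tPi_i^\alpha)_{st,st}=\tr\tPi_i^\alpha=1\neq 0$, so the set $\cS_{\alpha,i}$ of \cref{setS} is non-empty. By \Cref{RpropP} all these representatives $V_i^\alpha(s,t)$, $(s,t)\in\cS_{\alpha,i}$, coincide up to a phase, so they determine a single element $V_i^\alpha$ of the range of $\Pi_i^\alpha$ up to a unit scalar. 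As these ranges are mutually orthogonal and each $V_i^\alpha$ has norm one (this is also what \cref{VHS} asserts directly), the $n^2$ matrices listed in \cref{sV} constitute an orthonormal system of the right cardinality in $\M(n,\mathbb{C})$, hence an orthonormal basis. The only point requiring a little care — and essentially the only obstacle — is making the phase ambiguity in $V_i^\alpha(s,t)$ harmless, which is exactly what \Cref{RpropP} (and the remark preceding \Cref{n2}) provides; everything else is bookkeeping on the spectral data already established.
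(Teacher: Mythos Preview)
Your argument is correct and is exactly the reasoning the paper intends: \Cref{n2} is stated without explicit proof because it follows immediately from \Cref{prop8}, \Cref{prop11}/\Cref{cor17} and \Cref{propP}/\Cref{RpropP} in just the way you describe --- the $\Pi_i^\alpha$ are $n^2$ mutually orthogonal rank-one self-adjoint projectors summing to the identity on $\M(n,\mathbb{C})$, and the $V_i^\alpha(s,t)$ are their normalised range vectors, well-defined up to the phase controlled by \Cref{RpropP}. Your handling of the one genuine subtlety (non-emptiness of $\cS_{\alpha,i}$ via $\tr\tPi_i^\alpha=1$, and the phase ambiguity in the choice of $(s,t)$) is precisely what the paper's subsequent Remark makes explicit when it says that fixing one pair $(s,t)\in\cS_{\beta,i}$ for each $(\beta,i)$ fixes the basis.
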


\begin{remark}
	In order to construct the basis given in~\cref{sV} of~\Cref{n2} one has to construct the projectors $\widetilde{\Pi} _{i}^{\beta }$ defined in~\cref{expF} of~\Cref{prop11}, and then choose the indices $(s,t)$ such that $\left( \widetilde{\Pi}_{i}^{\beta }\right) _{st,st}\neq 0$.
	 In particular $\forall \beta \in \Theta$ and $\forall i=1,\ldots,|\varphi^{\beta}|$ by fixing a pair $(s,t)$ from the above set we also fix the basis.
\end{remark}
From \Cref{propP},~\Cref{prop2 copy(1)} and~\Cref{matRep} we obtain the following corollary:
\begin{corollary}
	\label{corr16}
	The vector $\cV(V_{i}^{\alpha })\in\mathbb{C}^{n^{2}}$ is an eigenvector of $\mat({\Pi}_{i}^{\alpha })\equiv \widetilde{\Pi}_{i}^{\alpha }\in \M\left( n^{2},\mathbb{C}\right) $
with eigenvalue $1$, i.e.
	\begin{equation}
	\widetilde{\Pi}_{i}^{\alpha }\cV(V_{i}^{\alpha })=\cV(V_{i}^{\alpha }).
	\end{equation}
\end{corollary}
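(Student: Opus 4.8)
The plan is to connect the two notions of eigenvector appearing in the statement: $V_i^\alpha \in \M(n,\mathbb{C})$ is an eigenvector of the \emph{linear map} $\Pi_i^\alpha \in \End[\M(n,\mathbb{C})]$ with eigenvalue $1$ (by the defining property in \Cref{propP}), whereas $\cV(V_i^\alpha) \in \mathbb{C}^{n^2}$ is a vector on which the \emph{matrix} $\widetilde{\Pi}_i^\alpha = \mat(\Pi_i^\alpha) \in \M(n^2,\mathbb{C})$ acts. The bridge between these is exactly the defining relation of the matrix representation from \Cref{matRep}: for any $\Psi \in \End[\M(n,\mathbb{C})]$ and any $X \in \M(n,\mathbb{C})$ one has $\mat(\Psi)\,\cV(X) = \cV(\Psi(X))$ (componentwise, $[\mat(\Psi)\cV(X)]_{ij} = \Psi(X)_{ij}$).

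First I would apply this identity with $\Psi = \Pi_i^\alpha$ and $X = V_i^\alpha$, which is legitimate since $\mat(\Pi_i^\alpha) = \widetilde{\Pi}_i^\alpha$ by \Cref{cor17}. This gives
\be
\widetilde{\Pi}_i^\alpha \,\cV(V_i^\alpha) = \mat(\Pi_i^\alpha)\,\cV(V_i^\alpha) = \cV\!\left(\Pi_i^\alpha(V_i^\alpha)\right).
\ee
Next, I would invoke the eigenvector property of $V_i^\alpha$ from \Cref{propP}: taking $\beta = \alpha$ and $j = i$ in the relation $\Pi_i^\alpha V_j^\beta = \delta^{\alpha\beta}\delta_{ij} V_i^\alpha$ yields $\Pi_i^\alpha(V_i^\alpha) = V_i^\alpha$. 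Substituting this into the previous display gives $\widetilde{\Pi}_i^\alpha \cV(V_i^\alpha) = \cV(V_i^\alpha)$, which is exactly the claim.

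There is essentially no obstacle here — the corollary is a direct transcription of the map-level eigenvalue equation into matrix-vector form via the functorial property of $\mat(\cdot)$ and $\cV(\cdot)$. The only point requiring a modicum of care is to state explicitly that $\cV$ is injective (in fact an isomorphism, by \Cref{prop2 copy(1)}), so that the identity at the level of vectors in $\mathbb{C}^{n^2}$ is genuinely equivalent to the identity at the level of matrices in $\M(n,\mathbb{C})$; but since we only need the forward direction, even this is not strictly necessary. One should also note that $V_i^\alpha \neq 0$ (guaranteed by \Cref{propP}, for the appropriate choice of parametrizing pair $(s,t) \in \cS_{\alpha,i}$), so that $\cV(V_i^\alpha)$ is a genuine (nonzero) eigenvector rather than the trivial one.
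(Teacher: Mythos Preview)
Your proof is correct and follows exactly the approach indicated in the paper, which simply states that the corollary follows from \Cref{propP}, \Cref{prop2 copy(1)} and \Cref{matRep} without spelling out the details. You have merely made explicit the one-line argument: apply the defining relation $\mat(\Psi)\cV(X)=\cV(\Psi(X))$ with $\Psi=\Pi_i^\alpha$ and $X=V_i^\alpha$, then use $\Pi_i^\alpha(V_i^\alpha)=V_i^\alpha$ from \Cref{propP}.
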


\medskip

\noindent
The following proposition gives a necessary and sufficient condition for 
an irreducibly covariant linear map (ICLM) to be trace-preserving.
\begin{proposition}
	\label{prop12} An ICLM $\Phi =l_{\id}\Pi^{\id}+\sum_{\alpha \in
		\Theta ,\alpha \neq \id}l_{\alpha }\Pi^{\alpha }\in \Int_{G}\left(
	\Ad_{U}\right) $ is trace preseving if and only if $l_{\id}=1$, so that it is
	of the form:
	\begin{equation}\label{decomp}
	\Phi =\Pi^{\id}+\sum_{\alpha \in \Theta ,\alpha \neq \id}l_{\alpha
	}\Pi^{\alpha },
	\end{equation}%
	where the coefficient $l_{\alpha }$ for $\alpha \in \Theta$, with $\alpha \neq \id$, can be arbitrary.
\end{proposition}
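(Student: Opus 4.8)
The plan is to compute the trace of $\Phi(X)$ directly using the spectral decomposition \cref{eq21} and the structure of the projectors $\Pi^\alpha$, and to extract the condition $l_{\id}=1$ from the requirement $\tr\Phi(X)=\tr X$ for all $X$. First I would recall that trace-preservation of $\Phi$ is equivalent to the statement that the adjoint $\Phi^\ast$ is unital, i.e.\ $\Phi^\ast(\iden_n)=\iden_n$; this follows from $(\iden_n,\Phi(X))=\tr\Phi(X)$ and $(\Phi^\ast(\iden_n),X)=\tr(\Phi^\ast(\iden_n)^\dagger X)$, so that $\tr\Phi(X)=\tr X$ for all $X$ iff $\Phi^\ast(\iden_n)=\iden_n$. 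Since each $\Pi^\alpha$ is self-adjoint (see \cref{properties1} of \Cref{cor10}), we have $\Phi^\ast=\overline{l_{\id}}\,\Pi^{\id}+\sum_{\alpha\neq\id}\overline{l_\alpha}\,\Pi^\alpha$, so it suffices to evaluate each $\Pi^\alpha(\iden_n)$.

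The key computation is that $\Pi^\alpha(\iden_n)=\delta_{\alpha,\id}\,\iden_n$. Indeed, from \cref{Pi_a},
\[
\Pi^{\alpha}(\iden_n)=\frac{|\varphi^\alpha|}{|G|}\sum_{g\in G}\chi^\alpha(g^{-1})\,U(g)\iden_n U^\dagger(g)=\frac{|\varphi^\alpha|}{|G|}\left(\sum_{g\in G}\chi^\alpha(g^{-1})\right)\iden_n,
\]
since $U(g)U^\dagger(g)=\iden_n$. By the orthogonality relation \cref{orr2}, $\frac{1}{|G|}\sum_{g\in G}\chi^\alpha(g^{-1})$ equals $1$ if $\alpha=\id$ and $0$ otherwise; hence $\Pi^{\id}(\iden_n)=\iden_n$ (using $|\varphi^{\id}|=1$) and $\Pi^\alpha(\iden_n)=0$ for $\alpha\neq\id$. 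Therefore $\Phi^\ast(\iden_n)=\overline{l_{\id}}\,\iden_n$, and unitality forces $\overline{l_{\id}}=1$, i.e.\ $l_{\id}=1$. Conversely, if $l_{\id}=1$ then $\Phi^\ast(\iden_n)=\iden_n$ regardless of the values of the remaining $l_\alpha$, so $\Phi$ is trace-preserving and \cref{decomp} holds. This also shows the $l_\alpha$ with $\alpha\neq\id$ are unconstrained by trace-preservation alone.

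I do not anticipate a serious obstacle: the argument is essentially a one-line character-orthogonality calculation once the spectral form \cref{eq21} and the self-adjointness of the $\Pi^\alpha$ are in hand. The only points requiring a little care are (i) correctly relating trace-preservation of $\Phi$ to unitality of $\Phi^\ast$ via the Hilbert–Schmidt inner product, and (ii) noting that $\iden_n$ is (up to normalization) precisely the eigenvector $V^{\id}(s,s)$ spanning the range of $\Pi^{\id}$ — cf.\ \cref{RpropP2} — which makes the identity $\Pi^{\id}(\iden_n)=\iden_n$ conceptually transparent. An alternative, equally short route is to apply $\tr(\cdot)$ directly to \cref{eq21} evaluated on an arbitrary $X$, using $\tr\Pi^\alpha(X)=\tr\big(\tfrac{|\varphi^\alpha|}{|G|}\sum_g\chi^\alpha(g^{-1})U(g)XU^\dagger(g)\big)=\tfrac{|\varphi^\alpha|}{|G|}\big(\sum_g\chi^\alpha(g^{-1})\big)\tr X=\delta_{\alpha,\id}\tr X$, and then matching coefficients; I would present whichever is cleaner in context.
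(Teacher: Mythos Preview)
Your proposal is correct and essentially coincides with the paper's argument: the paper computes $\tr[\Phi(X)]$ directly using $\tr(U(g)XU^\dagger(g))=\tr X$ and the character orthogonality relation \cref{orr2} to obtain $\tr[\Phi(X)]=l_{\id}\tr X$, which is precisely your ``alternative route.'' Your main route via $\Phi^\ast(\iden_n)$ is just the dual reformulation of the same one-line calculation.
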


To establish a necessary and sufficient condition for an irreducibly covariant linear (ICLM) map 
$\Phi \in \End \left[ \M(n, \cC)\right] $ to be completely positive, it is convenient to consider the Choi-Jamio{\l}kowski image $J(\Phi)$ 
of the map defined through \cref{choi}, since the complete positivity of $\Phi$ is equivalent to positive semi-definiteness of 
$J(\Phi)$. Restricting ourselves an ICLM, $\Phi$, which is trace-preserving, by the linearity of the Choi-Jamio{\l}kowski transformation we get
\begin{equation}\label{CJ-decomp}
J(\Phi)=J\left( \Pi^{\id}\right) +\sum_{\alpha \in \Theta ,\alpha \neq 
	\id}l_{\alpha }J(\Pi^{\alpha }),
\end{equation}
an explicit form for which is given in the following proposition.
\begin{proposition}
	\label{prop13} 
The Choi-Jamio{\l}kowski image of a trace-preserving ICLM $\Phi \in \End\left[ \M(n, \cC)\right]$ (as given by \Cref{prop12})
is given by 
	\begin{equation}
	\label{71}
	J(\Phi )=\frac{1}{|U|}\text{\noindent
		\(\mathds{1}\)}_{n}\otimes \text{\noindent
		\(\mathds{1}\)}_{n} +\frac{1}{|G|}%
	\sum_{ij}E_{ij}\otimes \sum_{g\in G}\left( \sum_{\alpha \in \Theta ,\alpha
		\neq \id}l_{\alpha }\left|\varphi^{\alpha} \right| \chi ^{\alpha }\left(
	g^{-1}\right) \right) U_{C(i)}(g)\left( U_{C(j)}(g)\right) ^{\dagger },
	\end{equation}%
	where $U_{C(i)}(g)=\left( u_{ki}(g)\right)_k$
	denotes the $i^{\text{th}}$ column of the matrix $U(g)$.
\end{proposition}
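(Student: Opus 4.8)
The plan is to exploit the linear decomposition \cref{CJ-decomp} of $J(\Phi)$ and compute the Choi--Jamio{\l}kowski images of the individual projectors $\Pi^{\id}$ and $\Pi^{\alpha}$ (for $\alpha\neq\id$) separately, feeding in the explicit group-averaged expressions for these projectors from \Cref{cor10}.

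First I would dispose of the terms with $\alpha\neq\id$. From $\Pi^{\alpha}=\frac{|\varphi^{\alpha}|}{|G|}\sum_{g\in G}\chi^{\alpha}(g^{-1})\Ad_{U(g)}$ we get $\Pi^{\alpha}(E_{ij})=\frac{|\varphi^{\alpha}|}{|G|}\sum_{g\in G}\chi^{\alpha}(g^{-1})\,U(g)E_{ij}U(g)^{\dagger}$. The only non-trivial ingredient is the elementary identity
\be
U(g)E_{ij}U(g)^{\dagger}=U(g)\ket{i}\bra{j}U(g)^{\dagger}=U_{C(i)}(g)\bigl(U_{C(j)}(g)\bigr)^{\dagger},
\ee
which merely records that $U(g)\ket{i}$ is the $i$-th column $U_{C(i)}(g)$ of $U(g)$. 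Substituting this into $J(\Pi^{\alpha})=\sum_{ij}E_{ij}\otimes\Pi^{\alpha}(E_{ij})$ immediately yields $J(\Pi^{\alpha})=\frac{|\varphi^{\alpha}|}{|G|}\sum_{ij}E_{ij}\otimes\sum_{g\in G}\chi^{\alpha}(g^{-1})\,U_{C(i)}(g)\bigl(U_{C(j)}(g)\bigr)^{\dagger}$.

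Next I would compute $J(\Pi^{\id})$, which is the one place where a little care is needed. Using \Cref{Rprop11}, which gives $\bigl(\widetilde{\Pi}^{\id}\bigr)_{pq,st}=\frac{1}{|U|}\delta_{pq}\delta_{st}$, together with the defining property of the matrix representant in \Cref{matRep} ($\mat(\Pi^{\id})=\widetilde{\Pi}^{\id}$), one reads off $\Pi^{\id}(E_{st})=\sum_{pq}\bigl(\widetilde{\Pi}^{\id}\bigr)_{pq,st}E_{pq}=\frac{\delta_{st}}{|U|}\,\mathds{1}_{n}$. Hence $J(\Pi^{\id})=\sum_{st}E_{st}\otimes\Pi^{\id}(E_{st})=\frac{1}{|U|}\sum_{s}E_{ss}\otimes\mathds{1}_{n}=\frac{1}{|U|}\,\mathds{1}_{n}\otimes\mathds{1}_{n}$, which is the first term of \cref{71}.

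Finally I would recombine the contributions via \cref{CJ-decomp} and interchange the two finite sums, over $\alpha\in\Theta$ with $\alpha\neq\id$ and over $g\in G$, pulling the factor $\frac{1}{|G|}$ out front so that the $g$-sum picks up the aggregated character weight $\sum_{\alpha\in\Theta,\alpha\neq\id}l_{\alpha}|\varphi^{\alpha}|\chi^{\alpha}(g^{-1})$; this reproduces \cref{71} exactly. I do not expect any genuine obstacle here: the argument is a short direct computation, and the only points to keep straight are the column-vector convention $U(g)\ket{i}=U_{C(i)}(g)$ and the correct reading of the index pattern of $\widetilde{\Pi}^{\id}$ in \Cref{Rprop11}, which is what collapses $J(\Pi^{\id})$ to $\frac{1}{|U|}\mathds{1}_{n}\otimes\mathds{1}_{n}$.
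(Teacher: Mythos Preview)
Your proposal is correct and follows essentially the same route as the paper: compute $J(\Pi^{\alpha})$ for general $\alpha$ via $U(g)E_{ij}U(g)^{\dagger}=U_{C(i)}(g)\bigl(U_{C(j)}(g)\bigr)^{\dagger}$, then treat $\alpha=\id$ separately, and recombine by linearity. The only cosmetic difference is that for $J(\Pi^{\id})$ the paper specializes the general formula to $\alpha=\id$ and applies Schur orthogonality \cref{or} directly to collapse $\frac{1}{|G|}\sum_{g}U_{C(i)}(g)U_{R(j)}(g^{-1})$ to $\frac{1}{|U|}\delta_{ij}\mathds{1}_{n}$, whereas you read off $\Pi^{\id}(E_{st})=\frac{\delta_{st}}{|U|}\mathds{1}_{n}$ from the matrix elements of $\widetilde{\Pi}^{\id}$ given in \Cref{Rprop11}; since that remark is itself an immediate consequence of Schur orthogonality, the two arguments are equivalent.
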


	\begin{remark}
		\label{Ch-slad}
	The trace of $J(\Phi )$ from~\cref{71} of~\Cref{prop13} depends only on the dimension of the irrep $U$,
	and is independent of the ICLM $\Phi$, i.e.
	\begin{equation}
	\label{t-Choi}
	\tr\left(J( \Phi) \right)=|U|.
	\end{equation}%
	\end{remark}
The above relation can be easily  obtained by a direct calculation of the trace, using the orthogonality relation, \cref{orr2} for irreducible characters.

The following proposition deals with the eigenvalue problem of the Choi-Jamio{\l}kowski images
of the projectors $\Pi^\alpha$ arising in the decomposition in~\cref{decomp} of the trace-preserving ICLM $\Phi$.
It provides the first step towards finding a necessary and sufficient condition for the positive semi-definiteness of $J(\Phi)$. 
Let us start from the following definition
\begin{definition}
	\label{dprop14}
Let $V_{i}^{\alpha } \in \M(n , \cC)$ be the
normalised eigenvectors of the operators $\Pi_{i}^{\alpha }\in \End\left[ \M( n,
\mathbb{C})\right] $ (given in Proposition \ref{propP}), 
which form an orthonormal basis of $\M(n,\mathbb{C})$ (as stated in \Cref{n2}). Let us define the set of  $n^2$ vectors
\begin{equation}
\label{malutkie}
|v_{i}^{\beta }\>  \equiv \sum_{k,l=1}^{n}\left( V_{i}^{\beta }\right)
_{kl}\cV(E_{lk})\in \mathbb{C}^{n^{2}},\quad \beta \in \Theta ,\quad i=1,\ldots ,|\varphi^{\beta}|.
\end{equation}
\end{definition}
Using arguments similar to those used in the proof of~\cref{VHS} of~\Cref{propP}, we can show that the vectors defined by~\cref{malutkie} of~\Cref{dprop14}
satisfy the following orthonormality relation:
\begin{lemma}
	\label{vec_v}
	For a given basis $\{V_{i}^{\alpha }\equiv V_{i}^{\alpha
	}(s_{\alpha },t_{\alpha }):\alpha \in \Theta ,$ $i=1,\ldots,|\varphi^{\alpha} |\}$ of the
	linear space $\M(n,\mathbb{C}),$ the corresponding set of vectors $\{|v_{i}^{\alpha }\>:\alpha \in \Theta$, $
	i=1,\ldots,|\varphi^{\alpha} |\}\subset \mathbb{C}^{n^{2}}$ defined in~\Cref{dprop14} form an orthonormal basis of the linear space $\mathbb{C}^{n^{2}}$, i.e.,
	\be
	\<v_{i}^{\alpha }|v_{j}^{\beta }\>=\delta ^{\alpha \beta }\delta _{ij}.
	\ee
\end{lemma}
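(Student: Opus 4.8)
The plan is to reduce the orthonormality of the $\{|v_i^\alpha\rangle\}$ to the already-established orthonormality of the matrices $\{V_i^\alpha\}$ in the Hilbert-Schmidt inner product (\Cref{n2}, \Cref{propP}). First I would unwind the definition \cref{malutkie}: since $\cV(E_{lk}) = \ket{l}\otimes\ket{k}$, we have $|v_i^\beta\rangle = \sum_{k,l}(V_i^\beta)_{kl}\,\ket{l}\otimes\ket{k}$. The key observation is that this is exactly $\cV\big((V_i^\beta)^T\big)$ up to the index swap, or more usefully, that
\be
\langle v_i^\alpha | v_j^\beta\rangle = \sum_{k,l}\sum_{k',l'} \overline{(V_i^\alpha)_{kl}}\,(V_j^\beta)_{k'l'}\,(\langle l|\otimes\langle k|)(\ket{l'}\otimes\ket{k'}) = \sum_{k,l}\overline{(V_i^\alpha)_{kl}}\,(V_j^\beta)_{kl}.
\ee
This last double sum is precisely $\tr\big((V_i^\alpha)^\dagger V_j^\beta\big) = (V_i^\alpha, V_j^\beta)$, the Hilbert-Schmidt scalar product.

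Then I would invoke \Cref{propP} (specifically \cref{VHS}): for the fixed choice of representatives $(s_\alpha,t_\alpha)$ defining the basis, $(V_i^\alpha, V_j^\beta) = \delta^{\alpha\beta}\delta_{ij}e^{\operatorname{i}\zeta}$, and since we are pairing each basis vector with itself when $(\alpha,i)=(\beta,j)$ — i.e. $(s,t)=(p,q)$ in the notation of \Cref{RpropP} — the phase is $\zeta=0$ by \Cref{RpropP}. Hence $(V_i^\alpha, V_j^\beta)=\delta^{\alpha\beta}\delta_{ij}$, giving $\langle v_i^\alpha|v_j^\beta\rangle = \delta^{\alpha\beta}\delta_{ij}$. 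That these $n^2$ vectors form a basis is then immediate: \Cref{n2} guarantees the $\{V_i^\alpha\}$ are $n^2$ in number and span $\M(n,\mathbb{C})$, the map $V \mapsto \sum_{kl}V_{kl}\cV(E_{lk})$ is linear and injective (it is the composition of transposition with the vectorization isomorphism $\cV$), so it carries a basis to a basis; alternatively, $n^2$ orthonormal vectors in $\mathbb{C}^{n^2}$ automatically form a basis.

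I do not expect a genuine obstacle here — the statement is essentially bookkeeping. The one point requiring a little care is the index bookkeeping in the contraction $(\langle l|\otimes\langle k|)(\ket{l'}\otimes\ket{k'}) = \delta_{ll'}\delta_{kk'}$ and confirming that the swapped-index definition \cref{malutkie} still produces the Hilbert-Schmidt product rather than, say, $\tr(V_i^\alpha (V_j^\beta)^\dagger)$ or a transpose thereof; since $\tr(X^\dagger Y)=\sum_{kl}\overline{x_{kl}}y_{kl}$ is symmetric under simultaneously transposing both arguments, the swap is harmless. The phrase "using arguments similar to those used in the proof of \cref{VHS}" in the statement suggests the authors re-derive the phase-vanishing directly; in the plan above I instead quote \Cref{RpropP} to supply $\zeta=0$ on the diagonal, which is cleaner.
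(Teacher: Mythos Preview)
Your proof is correct. The computation $\langle v_i^\alpha|v_j^\beta\rangle = \sum_{k,l}\overline{(V_i^\alpha)_{kl}}(V_j^\beta)_{kl} = \tr\bigl((V_i^\alpha)^\dagger V_j^\beta\bigr)$ is clean, and invoking \Cref{propP} together with \Cref{RpropP} (which gives $\zeta=0$ when the same representative pair $(s_\alpha,t_\alpha)$ is used on both sides) finishes the argument.

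The paper does not spell out a proof; it only says the argument is ``similar to those used in the proof of~\cref{VHS}'', which would mean redoing the explicit group-sum manipulation (substitution $gh^{-1}=w^{-1}$, Schur orthogonality, identification with a matrix element of $\tPi_i^\alpha$) directly for the vectors $|v_i^\alpha\rangle$. Your route is more economical: by recognising that the map $V\mapsto \sum_{kl}V_{kl}\,\cV(E_{lk})$ is an isometry from $(\M(n,\mathbb{C}),\,(\cdot,\cdot)_{\mathrm{HS}})$ to $(\mathbb{C}^{n^2},\,\langle\cdot|\cdot\rangle)$, you reduce the statement to the already-established orthonormality of the $\{V_i^\alpha\}$ and avoid repeating any representation-theoretic computation. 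The paper's implicit approach and yours are logically equivalent, but yours packages the content as a single isometry observation rather than a parallel calculation.
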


\begin{proposition}
	\label{prop14} Let $\Phi \in \End\left[ \M(n, \cC)\right]$ be an ICLM which is not necessarily trace-preserving (see \cref{eq21}).
	 Then $\forall \alpha \in\Theta$, the Choi-Jamio{\l}koski images of the operators $\Pi^{\alpha}$ satisfy
	\be
	J(\Pi^\alpha)|v_{i}^{\beta }\>=\mu_i(\alpha,\beta)|v_{i}^{\beta }\>.
	\ee
	The vectors $|v_{i}^{\beta }\>$ are common eigenvectors of all $J(\Pi^{\alpha})$ with eigenvalues $\mu_i(\alpha,\beta)$ given by
	\begin{equation}\label{egval}
	\mu_{i}(\alpha ,\beta )=\frac{\left|\varphi^{\alpha} \right|}{|G|}\sum_{g\in G}\chi
	^{\alpha }\left( g^{-1}\right) \left\vert \tr\left( V_{i}^{\beta }
	U^{\dagger }(g)\right)\right\vert ^{2}.
	\end{equation}
\end{proposition}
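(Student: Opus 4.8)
The plan is to push the eigenvalue equation from $\mathbb{C}^{n^{2}}$ down into the matrix space $\M(n,\cC)$, where the explicit shape of $V_i^\beta$ from \Cref{propP} can be used directly. The two facts I would rely on are the isometry property $\langle\cV(X)|\cV(Y)\rangle=\tr(X^{\dagger}Y)$ of vectorization and the identity $\cV(AXB)=(A\otimes B^{T})\cV(X)$, together with the observation --- immediate from \cref{malutkie} --- that $|v_i^\beta\> = \cV\big((V_i^\beta)^{T}\big)$. First I would record the Choi image of the building block $\Ad_{U(g)}$: one has $J(\Ad_{U(g)})=\sum_{ij}E_{ij}\otimes U(g)E_{ij}U^{\dagger}(g)=|\Omega_g\>\<\Omega_g|$ with $|\Omega_g\>:=(\iden_n\otimes U(g))\cV(\iden_n)=\cV\big(U(g)^{T}\big)$, so by linearity of $J$ and \cref{Pi_a}, $J(\Pi^\alpha)=\tfrac{|\varphi^\alpha|}{|G|}\sum_{g}\chi^\alpha(g^{-1})\,|\Omega_g\>\<\Omega_g|$. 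Applying this to $|v_i^\beta\>$ and using $\<\Omega_g|v_i^\beta\>=\tr\big(V_i^\beta U^{\dagger}(g)\big)$ together with $(\iden_n\otimes U(g))\cV(\iden_n)=\cV\big(U(g)^{T}\big)$, and then collecting the $g$-sum with the help of $\cV(AXB)=(A\otimes B^{T})\cV(X)$, I get
\be
J(\Pi^\alpha)\,|v_i^\beta\> = \cV\big(M_\alpha^{T}\big),\qquad M_\alpha := \frac{|\varphi^\alpha|}{|G|}\sum_{g\in G}\chi^\alpha(g^{-1})\,\tr\big(V_i^\beta U^{\dagger}(g)\big)\,U(g)\ \in\ \M(n,\cC),
\ee
so it suffices to prove $M_\alpha=\mu_i(\alpha,\beta)\,V_i^\beta$, since then $\cV(M_\alpha^{T})=\mu_i(\alpha,\beta)\cV\big((V_i^\beta)^{T}\big)=\mu_i(\alpha,\beta)\,|v_i^\beta\>$ (and the general statement for $J(\Phi)=\sum_\alpha l_\alpha J(\Pi^\alpha)$ then follows by linearity).

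The crux is showing that $M_\alpha$ is proportional to $V_i^\beta$. Here I would substitute the explicit form from \cref{11} of \Cref{propP}, namely that $V_i^\beta=V_i^\beta(s,t)$ is a fixed scalar multiple of $\Pi_i^\beta(E_{st})=\tfrac{|\varphi^\beta|}{|G|}\sum_h\varphi_{ii}^\beta(h^{-1})U(h)E_{st}U^{\dagger}(h)$. Using $\tr\big(U(h)E_{st}U^{\dagger}(h)U^{\dagger}(g)\big)=u_{ts}(h^{-1}g^{-1}h)$ and then the conjugation substitution $g\mapsto hk^{-1}h^{-1}$ (equivalently $k=h^{-1}g^{-1}h$, which still runs through $G$ for each fixed $h$): since $\chi^\alpha$ is a class function, $\chi^\alpha(g^{-1})=\chi^\alpha(k)$, while $U(g)=U(h)U^{\dagger}(k)U^{\dagger}(h)$, so the $h$-sum collapses to $\sum_h\varphi_{ii}^\beta(h^{-1})\Ad_{U(h)}\big(U^{\dagger}(k)\big)=\tfrac{|G|}{|\varphi^\beta|}\Pi_i^\beta\big(U^{\dagger}(k)\big)$ by \cref{Pii_a}. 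This leaves $M_\alpha$ equal to a constant times $\sum_{k\in G}\chi^\alpha(k)\,u_{ts}(k)\,\Pi_i^\beta\big(U^{\dagger}(k)\big)$. Finally, since $\tr(\tPi_i^\beta)=1$ (see \cref{propertiesE}), the operator $\Pi_i^\beta\in\End[\M(n,\cC)]$ is the rank-one orthogonal projector onto $\operatorname{span}\{V_i^\beta\}$ in the Hilbert--Schmidt space, so $\Pi_i^\beta\big(U^{\dagger}(k)\big)=\big(V_i^\beta,\,U^{\dagger}(k)\big)\,V_i^\beta$; this exhibits $M_\alpha$ as a scalar multiple of $V_i^\beta$.

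The scalar is then read off with no further computation: because $\|V_i^\beta\|_2=1$, it equals $\big(V_i^\beta,\,M_\alpha\big)=\tfrac{|\varphi^\alpha|}{|G|}\sum_g\chi^\alpha(g^{-1})\tr\big(V_i^\beta U^{\dagger}(g)\big)\tr\big((V_i^\beta)^{\dagger}U(g)\big)$, and using $\tr\big((V_i^\beta)^{\dagger}U(g)\big)=\overline{\tr\big(V_i^\beta U^{\dagger}(g)\big)}$ this becomes exactly $\tfrac{|\varphi^\alpha|}{|G|}\sum_g\chi^\alpha(g^{-1})\big|\tr\big(V_i^\beta U^{\dagger}(g)\big)\big|^{2}=\mu_i(\alpha,\beta)$, i.e.\ \cref{egval}; in particular this depends on $V_i^\beta$ only through $|\tr(V_i^\beta U^{\dagger}(g))|^{2}$, so it is insensitive to the phase ambiguity and to the chosen pair $(s,t)\in\cS_{\beta,i}$ parametrizing $V_i^\beta$. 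The step I expect to be the main obstacle is the second one: carrying out the conjugation change of variables correctly and checking that the $h$-average really is $\Pi_i^\beta$ applied to $U^{\dagger}(k)$; the bookkeeping of $g$ versus $g^{-1}$ and of transposes in passing between $|v_i^\beta\>$, $\cV(\cdot)$ and $\M(n,\cC)$ is where a slip is easiest to make.
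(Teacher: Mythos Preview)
Your proof is correct and follows essentially the same route as the paper: both reduce the eigenvalue equation to showing that the matrix $M_\alpha$ (called $X^{\alpha}(V_i^{\beta})$ in the paper) lies in the range of the rank-one projector $\Pi_i^\beta$, hence is proportional to $V_i^\beta$, and then read off the scalar as $(V_i^\beta,M_\alpha)$, yielding \cref{egval}. The only cosmetic difference is that the paper verifies $\Pi_k^\gamma[M_\alpha]=\delta^{\gamma\beta}\delta_{ki}M_\alpha$ by a direct orthogonality computation (without inserting the explicit form of $V_i^\beta$), whereas you substitute \cref{11} and use the conjugation change of variable $g\mapsto hk^{-1}h^{-1}$ to exhibit $M_\alpha$ as a sum of terms $\Pi_i^\beta(U^{\dagger}(k))$; both arguments are equally valid.
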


	\begin{remark}
		\label{Rprop14}
	The vectors defined in~\cref{malutkie} of~\Cref{dprop14} are simultaneous eigenvectors of  $J(\Pi^\alpha)$, $\forall\,\alpha \in \Theta$, so
	\begin{align}\label{CJ-commute}
	\forall\,\alpha, \beta \in \Theta, \quad [J(\Pi^\alpha), J(\Pi^\beta)] = 0.
	\end{align}	
	\end{remark}

\begin{remark}
	\label{independence}
	From the structure of
	the right hand side  of~\cref{egval}, more precisely, from the presence of the modulus in it, and from~\cref{astt} in~\Cref{propP}, it follows that the eigenvalues $\mu _{i}(\alpha ,\beta )$ do not depend
	on the particular choice of the pair $(s,t) \in \cS_{\beta,i}$
used to parametrize $V_i^\beta \equiv V_i^\beta(s,t)$.
\end{remark}

From~\Cref{dprop14},~\Cref{vec_v} and~\Cref{prop14} we obtain the following corollary:
\begin{corollary}
	\label{normal}
	The operators $J(\Pi ^{\alpha })\in \M(n^{2},\mathbb{C})$, $\alpha \in \Theta $ are normal.
\end{corollary}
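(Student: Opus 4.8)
The plan is to deduce normality directly from the finite-dimensional spectral theorem: an operator on a complex inner-product space is normal precisely when it admits an orthonormal basis of eigenvectors. So it suffices to exhibit such a basis for each $J(\Pi^{\alpha})$ with $\alpha\in\Theta$, and all the ingredients have already been assembled in the preceding results.

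First I would fix $\alpha\in\Theta$ and recall from \Cref{prop14} that every vector $|v_{i}^{\beta}\rangle\in\mathbb{C}^{n^{2}}$ of \Cref{dprop14} (with $\beta\in\Theta$ and $i=1,\dots,|\varphi^{\beta}|$) is an eigenvector of $J(\Pi^{\alpha})$, namely $J(\Pi^{\alpha})|v_{i}^{\beta}\rangle=\mu_{i}(\alpha,\beta)|v_{i}^{\beta}\rangle$ with $\mu_{i}(\alpha,\beta)$ given by \cref{egval}. Next I would invoke \Cref{vec_v}, which guarantees that the whole family $\{|v_{i}^{\beta}\rangle:\beta\in\Theta,\ i=1,\dots,|\varphi^{\beta}|\}$ is an orthonormal basis of $\mathbb{C}^{n^{2}}$; this is consistent with the dimension count, since the multiplicity-free decomposition $U\otimes U^{c}=\bigoplus_{\beta\in\Theta}\varphi^{\beta}$ of \Cref{prop8} gives $\sum_{\beta\in\Theta}|\varphi^{\beta}|=n^{2}$.

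Having an orthonormal eigenbasis in hand, I would write the spectral expansion
\[
J(\Pi^{\alpha})=\sum_{\beta\in\Theta}\sum_{i=1}^{|\varphi^{\beta}|}\mu_{i}(\alpha,\beta)\,|v_{i}^{\beta}\rangle\langle v_{i}^{\beta}|,
\]
in which the rank-one operators $|v_{i}^{\beta}\rangle\langle v_{i}^{\beta}|$ are mutually orthogonal Hermitian projections summing to the identity (by \Cref{vec_v}). Taking the Hermitian conjugate gives $J(\Pi^{\alpha})^{\dagger}=\sum_{\beta,i}\overline{\mu_{i}(\alpha,\beta)}\,|v_{i}^{\beta}\rangle\langle v_{i}^{\beta}|$, and multiplying the two expansions in either order while using $\langle v_{i}^{\beta}|v_{j}^{\gamma}\rangle=\delta^{\beta\gamma}\delta_{ij}$ produces, in both orders,
\[
\sum_{\beta\in\Theta}\sum_{i=1}^{|\varphi^{\beta}|}|\mu_{i}(\alpha,\beta)|^{2}\,|v_{i}^{\beta}\rangle\langle v_{i}^{\beta}|,
\]
so $J(\Pi^{\alpha})$ commutes with its adjoint and is normal. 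Equivalently, one may phrase this as $J(\Pi^{\alpha})=W D_{\alpha} W^{\dagger}$ with $W$ the unitary whose columns are the $|v_{i}^{\beta}\rangle$ and $D_{\alpha}$ diagonal, using that diagonal matrices commute with their conjugates.

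There is essentially no obstacle here: the substance of the argument is already contained in \Cref{prop14} and \Cref{vec_v}. The one point deserving a word of care is that the eigenvalues $\mu_{i}(\alpha,\beta)$ need not be real and may be repeated; but neither complexity nor multiplicity plays any role, since orthonormality of the eigenbasis by itself forces normality. (This is also why the stronger assertion that $J(\Pi^{\alpha})$ is Hermitian is not made: that would require the characters $\chi^{\alpha}$ to be real, which fails for groups possessing complex irreps.)
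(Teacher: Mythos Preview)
Your proof is correct and follows essentially the same route as the paper: the corollary is stated there as an immediate consequence of \Cref{dprop14}, \Cref{vec_v}, and \Cref{prop14}, i.e., each $J(\Pi^{\alpha})$ is diagonalized by the common orthonormal basis $\{|v_{i}^{\beta}\rangle\}$ and is therefore normal. Your explicit spectral expansion and the remark about possibly complex eigenvalues simply unpack this inference.
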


Using~\cref{11} of~\Cref{propP}, the eigenvalues $\mu_i(\alpha,\beta)$ (given in~\Cref{prop14}) can also be expressed equivalently as follows:
\begin{remark}
For any $(s,t)$ in the set $\cS_{\beta,i}$, defined through~\cref{setS},
	\begin{align}
\mu _{i}(\alpha,\beta )=\frac{1}{\left( \widetilde{\Pi }_{i}^{\beta }\right) _{st,st}}%
\frac{|\varphi^{\alpha} ||\varphi^{\beta} |}{|G|^{2}}\sum_{g,h\in G}\chi ^{\alpha
}\left( g^{-1}\right) \varphi _{ii}^{\beta }\left( h^{-1}\right) u_{ts}\left( g^{-1}\right) u_{st}\left( hgh^{-1}\right) ,\qquad
\alpha ,\beta \in \Theta.
	\label{rhs}
	\end{align}
	From \Cref{independence} it follows that the right hand side of~\cref{rhs}, does not depend on the particular choice of the pair $(s,t) \in \cS_{\beta,i}$.
\end{remark}

By \Cref{prop14}, the Choi-Jamio{\l}kowski image $J(\Phi )$
given by~\cref{CJ-decomp} is
a linear combination of mutually {\em{commuting}} matrices whose eigenvalues are known (i.e.~given by \cref{egval}). Hence,
we can explicitly write down the conditions for positive-semidefiniteness of $J(\Phi )$ as follows:
\begin{corollary}
	\label{cor15} The Choi-Jamio{\l}kowski image $J(\Phi )$ given by~\cref{CJ-decomp}
	is positive semi-definite if and only if its eigenvalues, which we denote by $\epsilon_i^{\beta}$, are non-negative, i.e.~for any $\beta \in \Theta $, and $i=1,\ldots ,|\varphi^{\beta} |$ 
	\begin{equation}
	\label{pos2}
	\epsilon_i^{\beta}\equiv \sum_{\alpha \in \Theta }l_{\alpha }\mu_{i}(\alpha ,\beta )=\frac{1}{|G|%
	}\sum_{g\in G}\left( \sum_{\alpha \in \Theta }l_{\alpha }\left|\varphi^{\alpha} \right| \chi
	^{\alpha }\left( g^{-1}\right) \right) \left\vert \tr\left( V_{i}^{\beta
	} U^{\dagger }(g)\right)\right\vert ^{2}\geq 0,
	\end{equation}
	where $V_i^{\beta} \in \M(n,\mathbb{C})$ is the normalized eigenvector (see~\Cref{propP}) of the projector $\Pi_i^{\beta}$. 
\end{corollary}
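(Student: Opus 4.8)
The plan is to read off the spectral decomposition of $J(\Phi)$ directly from results already established, so that no genuinely new computation is required. First I would invoke \Cref{prop12}: since $\Phi$ is trace-preserving we have $l_{\id}=1$, and hence the decomposition \cref{CJ-decomp} can be written uniformly as $J(\Phi)=\sum_{\alpha\in\Theta}l_{\alpha}J(\Pi^{\alpha})$. Then I would apply \Cref{prop14}, which tells us that each vector $|v_{i}^{\beta}\>$ of \Cref{dprop14} is a \emph{simultaneous} eigenvector of all the $J(\Pi^{\alpha})$, with $J(\Pi^{\alpha})|v_{i}^{\beta}\>=\mu_{i}(\alpha,\beta)|v_{i}^{\beta}\>$. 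Consequently each $|v_{i}^{\beta}\>$ is an eigenvector of $J(\Phi)$ as well, with eigenvalue $\epsilon_{i}^{\beta}:=\sum_{\alpha\in\Theta}l_{\alpha}\mu_{i}(\alpha,\beta)$.

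Next I would use \Cref{vec_v}, according to which $\{|v_{i}^{\beta}\>:\beta\in\Theta,\ i=1,\dots,|\varphi^{\beta}|\}$ is an orthonormal basis of $\cC^{n^{2}}$. This upgrades the eigenrelation to the full spectral decomposition
\be
J(\Phi)=\sum_{\beta\in\Theta}\sum_{i=1}^{|\varphi^{\beta}|}\epsilon_{i}^{\beta}\,\ketbra{v_{i}^{\beta}}{v_{i}^{\beta}} ,
\ee
so that $\{\epsilon_{i}^{\beta}\}$ is, with multiplicity, precisely the spectrum of $J(\Phi)$. (The dimension count $\sum_{\beta\in\Theta}|\varphi^{\beta}|=\sum_{\beta\in\Theta}\tr(\tPi^{\beta})=\tr\,\iden_{n^{2}}=n^{2}$, using \cref{eq18}, confirms that no eigenvalue is missed.) In particular $J(\Phi)$ is unitarily diagonalisable, hence normal, consistent with \Cref{normal}. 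A normal operator presented in such an orthonormal eigenbasis is positive semi-definite exactly when all its eigenvalues are non-negative reals; combined with the standard equivalence that $\Phi$ is completely positive iff $J(\Phi)\geq 0$, this yields the asserted criterion $\epsilon_{i}^{\beta}\geq 0$ for all $\beta\in\Theta$, $i=1,\dots,|\varphi^{\beta}|$.

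Finally I would record the closed form of $\epsilon_{i}^{\beta}$ by substituting formula \cref{egval} for $\mu_{i}(\alpha,\beta)$ and interchanging the two finite summations over $\alpha\in\Theta$ and $g\in G$; the $\alpha=\id$ contribution is absorbed using $l_{\id}=1$, $|\varphi^{\id}|=1$ and $\chi^{\id}(g^{-1})=1$, which is exactly why the inner sum over $\alpha$ in \cref{pos2} may range over all of $\Theta$. I do not expect a real obstacle here: the only points needing a little care are (a) verifying that the vectors $|v_{i}^{\beta}\>$ genuinely exhaust $\cC^{n^{2}}$, so that the list $\{\epsilon_{i}^{\beta}\}$ carries no spurious or missing eigenvalue, and (b) making explicit that possessing an orthonormal eigenbasis with real non-negative eigenvalues is equivalent to positive semi-definiteness --- both of which follow immediately from \Cref{vec_v}, \Cref{prop14} and \Cref{Rprop14}.
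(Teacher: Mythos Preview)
Your proposal is correct and follows essentially the same route as the paper: the corollary is stated immediately after \Cref{prop14} and \Cref{Rprop14}, with the paper simply remarking that $J(\Phi)$ is a linear combination of mutually commuting matrices $J(\Pi^{\alpha})$ sharing the common eigenvectors $|v_i^{\beta}\rangle$, so its eigenvalues are the corresponding linear combinations. Your write-up makes the same argument explicit, invoking \Cref{vec_v} to confirm the eigenvectors form an orthonormal basis and then reading off \cref{pos2} by substituting \cref{egval}.
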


	\begin{remark}
	From~\cref{t-Choi} of~\Cref{Ch-slad} and~\cref{pos2} of~\Cref{cor15} it follows, that
	\be
	\label{sumE}
	\sum_{\beta \in \Theta } \ \sum_{i=1,\ldots,|\beta |}\epsilon _{i}^{\beta }=|U|.
	\ee
\end{remark}
\noindent
Using~\cref{11} of~\Cref{propP}, we get the following equivalent expression for the eigenvalues $\epsilon_i^{\beta}$.
\begin{lemma}
	\label{r28}
	For some $k, l \in \left\{1,\ldots ,n\right\}$ for which $\left( \tPi_{i}^{\beta }\right)
	_{st,st}\neq 0$, we have 
	\begin{equation}
	\label{eq281}
	\epsilon_i^{\beta}=\frac{1}{|G|^2%
	}\sum_{g\in G}\sum_{\alpha \in \Theta }l_{\alpha }\left|\varphi^{\alpha} \right| \chi ^{\alpha
}\left( g^{-1}\right) \left\vert \frac{\left|\varphi^{\beta} \right|}{\sqrt{%
	\left( \tPi_{i}^{\beta }\right) _{st,st}}}\sum_{h\in G}\varphi _{ii}^{\beta
}\left( h^{-1}\right) u_{ts}\left( h^{-1}g^{-1}h\right) \right\vert ^{2}\geq
0,
\end{equation}%
or more explicitly
\be
\label{eq282}
\begin{split}
&\epsilon _{i}^{\beta }=\sum_{\alpha \in \Theta }l_{\alpha }\mu _{i}(\alpha
,\beta )=\\
&=\frac{1}{\left( \widetilde{\Pi }_{i}^{\beta }\right) _{st,st}}\frac{|\varphi^{\beta} |}{%
	|G|^{2}}\sum_{g,h\in G}\left[ \sum_{\alpha \in \Theta }l_{\alpha }|\varphi^{\alpha} |\chi
^{\alpha }\left( g^{-1}\right) \right] \varphi _{ii}^{\beta
}\left( h^{-1}\right) u_{ts}\left( g^{-1}\right) u_{st}\left( hgh^{-1}\right)\geq 0 , 
\end{split}
\ee
for $\beta \in \Theta
, \ i=1,\ldots,\left|\varphi^{\beta} \right|$. 
\end{lemma}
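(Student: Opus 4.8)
The plan is to derive both displayed identities by substituting the closed form~\eqref{11} for the normalised eigenvector $V_i^{\beta}\equiv V_i^{\beta}(s,t)$ (from~\Cref{propP}) into the eigenvalue formula~\eqref{pos2} of~\Cref{cor15}. Here $(s,t)$ may be taken to be any pair in the set $\cS_{\beta,i}$ of~\eqref{setS}; such a pair exists because, by~\Cref{n2}, the matrices $V_i^{\beta}(s,t)$ with $(s,t)\in\cS_{\beta,i}$ already span $\M(n,\cC)$, and by~\Cref{independence} the resulting expressions do not depend on the chosen pair. The only genuine computation needed is an evaluation of the Hilbert--Schmidt overlap $\tr\!\big(V_i^{\beta}(s,t)\,U^{\dagger}(g)\big)$, which I would carry out first.

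To compute this overlap I would use that unitarity of $U$ gives $U^{\dagger}(g)=U(g^{-1})$ and $\overline{u_{ab}(g)}=u_{ba}(g^{-1})$. Each summand of~\eqref{11} is $\varphi^{\beta}_{ii}(h^{-1})$ times the rank-one matrix $U_{C(s)}(h)\,U_{R(t)}(h^{-1})$, whose $(a,b)$ entry is $u_{as}(h)\,u_{tb}(h^{-1})$; hence its trace against $U^{\dagger}(g)$ is $\sum_{a,b}u_{as}(h)u_{tb}(h^{-1})u_{ba}(g^{-1})$, and performing first the $b$-sum and then the $a$-sum by means of the homomorphism property $U(x)U(y)=U(xy)$ collapses this to $u_{ts}(h^{-1}g^{-1}h)$. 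Summing over $h$ against $\varphi^{\beta}_{ii}(h^{-1})$ and reinstating the normalisation factor of~\eqref{11} gives
\[
\tr\!\big(V_i^{\beta}(s,t)\,U^{\dagger}(g)\big)=\frac{1}{\sqrt{\big(\tPi_i^{\beta}\big)_{st,st}}}\,\frac{|\varphi^{\beta}|}{|G|}\sum_{h\in G}\varphi^{\beta}_{ii}(h^{-1})\,u_{ts}(h^{-1}g^{-1}h).
\]

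Substituting this into~\eqref{pos2} and squaring the modulus produces~\eqref{eq281}, the inequality $\epsilon_i^{\beta}\ge 0$ being nothing but the positive-semidefiniteness statement of~\Cref{cor15}. For the second form~\eqref{eq282}, the quickest route is to recall that $\epsilon_i^{\beta}=\sum_{\alpha\in\Theta}l_{\alpha}\mu_i(\alpha,\beta)$ and to insert the expression~\eqref{rhs} for $\mu_i(\alpha,\beta)$ already established just before the statement, after which pulling $\sum_{\alpha}l_{\alpha}|\varphi^{\alpha}|$ inside the bracket yields exactly~\eqref{eq282}. Alternatively one can reach~\eqref{eq282} directly from~\eqref{eq281} by writing $|z|^{2}=z\bar z$, using $\overline{\varphi^{\beta}_{ii}(h^{-1})}=\varphi^{\beta}_{ii}(h)$ and $\overline{u_{ts}(x)}=u_{st}(x^{-1})$, and then collapsing the double sum over $G\times G$ to a single one via the Schur orthogonality relation~\eqref{or} for $\varphi^{\beta}$, which becomes available after the change of variables $g\mapsto hgh^{-1}$ (permissible since $\sum_{\alpha}l_{\alpha}|\varphi^{\alpha}|\chi^{\alpha}(g^{-1})$ is a class function). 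I expect the trace evaluation — keeping the column/row indices straight through the repeated applications of unitarity — to be the one step requiring real care; route (i) additionally needs the class-function substitution that lets Schur orthogonality reduce the double group average to a single one, whereas route (ii) sidesteps this entirely and makes the passage to~\eqref{eq282} immediate once~\eqref{rhs} is granted.
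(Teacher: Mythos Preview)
Your proposal is correct, and for~\eqref{eq281} it coincides with the paper's argument: both simply evaluate $\tr\!\big(V_i^{\beta}(s,t)\,U^{\dagger}(g)\big)$ from~\eqref{11} and feed the result into~\eqref{pos2}.

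For~\eqref{eq282} there is a small difference. Your primary route (ii) --- just summing the already-available identity~\eqref{rhs} against $l_\alpha$ --- is a legitimate shortcut that the paper does not take. The paper instead follows the direct route you label~(i): it expands $|z|^2$ in~\eqref{eq281}, performs the substitutions $r=h^{-1}g^{-1}h$ (using that $\chi^{\alpha}$ is a class function) and then $f=w^{-1}h$, and applies the Schur orthogonality relation~\eqref{or} to collapse the extra group sum. Your route (ii) is cleaner and avoids rederiving what~\eqref{rhs} already encodes; the paper's route is self-contained in that it does not invoke~\eqref{rhs}. Either way the argument goes through.
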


	\begin{remark}
	From~\Cref{propP} and~\Cref{RpropP} it follows that the right hand side of 
	the above equations do not depend on the particular choice of the pair $s,t \in \cS_{\beta,i}$.
	\end{remark}

The properties of the eigenvalues $\mu_{i}(\alpha ,\beta )$ depend on
the properties of the particular irrep (of the group $G$) which is considered, as discussed below.

\begin{lemma}[Hermiticity of the Choi-Jamio{\l}kowski image]
	\label{hem}
	The Choi-Jamio{\l }kowski image $J(\Phi)$ of an ICLM $\Phi\in \End
	\left( \M(n,\mathbb{C})\right) $ is Hermitian if and only if
	\begin{equation}
	\forall X\in \M(n,\mathbb{C}), \quad \Phi(X)^{\dagger }=\Phi(X^{\dagger }).
	\end{equation}
	\label{Herm-Choi}
\end{lemma}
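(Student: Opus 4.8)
The plan is to prove the equivalence by working directly from the definition of the Choi–Jamio{\l}kowski image and the properties of the matrices $E_{ij}$, in particular the identities $E_{ij}^\dagger = E_{ji}$ and $\overline{E_{ij}} = E_{ij}$. First I would spell out what Hermiticity of $J(\Phi)$ means: $J(\Phi) = \sum_{ij} E_{ij}\otimes \Phi(E_{ij})$ is Hermitian iff $J(\Phi)^\dagger = \sum_{ij} E_{ij}^\dagger \otimes \Phi(E_{ij})^\dagger = \sum_{ij} E_{ji}\otimes \Phi(E_{ij})^\dagger = J(\Phi)$. Re-indexing the first tensor factor ($i\leftrightarrow j$), this says $\sum_{ij} E_{ij}\otimes \Phi(E_{ji})^\dagger = \sum_{ij} E_{ij}\otimes \Phi(E_{ij})$, and since the $E_{ij}$ in the first factor are linearly independent, this is equivalent to the family of identities $\Phi(E_{ji})^\dagger = \Phi(E_{ij})$ for all $i,j$. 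Noting that $E_{ij} = E_{ji}^\dagger$, this is exactly $\Phi(X)^\dagger = \Phi(X^\dagger)$ evaluated on the basis elements $X = E_{ji}$.

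Next I would promote this basis-level statement to the assertion for all $X\in\M(n,\mathbb{C})$: since both $X\mapsto \Phi(X)^\dagger$ and $X\mapsto \Phi(X^\dagger)$ are (conjugate-)linear in $X$ in the same way — each is the composition of the conjugate-linear dagger with a linear map — they agree on all of $\M(n,\mathbb{C})$ iff they agree on the basis $\{E_{ij}\}$. Concretely, for $X = \sum_{ij} x_{ij} E_{ij}$ one has $\Phi(X^\dagger) = \sum_{ij}\overline{x_{ij}}\,\Phi(E_{ji})$ and $\Phi(X)^\dagger = \sum_{ij}\overline{x_{ij}}\,\Phi(E_{ij})^\dagger$, so the general identity follows from the basis identity $\Phi(E_{ji}) = \Phi(E_{ij})^\dagger$. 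This closes both directions of the equivalence simultaneously, since every step above is reversible.

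Alternatively, and perhaps more cleanly, I would package the computation through the matrix representation: using \Cref{L2}, write $\Phi(X) = \sum_\nu A^\nu X (B^\nu)^\dagger$, so $\Phi(X)^\dagger = \sum_\nu B^\nu X^\dagger (A^\nu)^\dagger$ while $\Phi(X^\dagger) = \sum_\nu A^\nu X^\dagger (B^\nu)^\dagger$; comparing these via \Cref{matRep} (two maps are equal iff their matrix representants are) reduces the condition $\Phi(X)^\dagger = \Phi(X^\dagger)\ \forall X$ to $\sum_\nu B^\nu \otimes \overline{A^\nu} = \sum_\nu A^\nu\otimes\overline{B^\nu}$, i.e. to $\mat(\Phi)$ being invariant under the swap $A^\nu\leftrightarrow B^\nu$. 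On the other side, $J(\Phi)$ and $\mat(\Phi)$ are related by a fixed partial transpose / reshuffling, under which Hermiticity of $J(\Phi)$ translates into the same symmetry of $\mat(\Phi)$; I would make this correspondence explicit by computing $J(\Phi) = \sum_\nu \cV(A^\nu)\cV(B^\nu)^\dagger$ (a consequence of \eqref{choi} and $\Phi(E_{ij}) = \sum_\nu A^\nu E_{ij}(B^\nu)^\dagger$), from which $J(\Phi)^\dagger = \sum_\nu \cV(B^\nu)\cV(A^\nu)^\dagger$, and Hermiticity is again the $A^\nu\leftrightarrow B^\nu$ symmetry.

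The computation is essentially bookkeeping, so I do not expect a genuine obstacle; the only point that needs care is the bookkeeping itself, namely tracking the index relabeling in the first tensor slot and making sure the conjugate-linearity of the dagger operation is handled consistently on both sides of the claimed identity. I would therefore present the first, basis-level argument as the main proof (it is shortest and self-contained), and if space permits remark that the matrix-representation viewpoint shows the condition is equivalent to the symmetry $\sum_\nu A^\nu\otimes\overline{B^\nu} = \sum_\nu B^\nu\otimes\overline{A^\nu}$ of $\mat(\Phi)$, which is a convenient reformulation for checking Hermiticity in examples.
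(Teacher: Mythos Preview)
Your proposal is correct and follows essentially the same approach as the paper: compute $J(\Phi)^\dagger = \sum_{ij} E_{ji}\otimes \Phi(E_{ij})^\dagger$, use $E_{ij}^\dagger = E_{ji}$, and compare with $J(\Phi)$ via the linear independence of the $E_{ij}$ in the first tensor factor. The paper's own proof is slightly terser (it does not spell out the extension from basis elements to general $X$, nor the alternative $\mat(\Phi)$ viewpoint you sketch), but the argument is the same.
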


Note that for any $X\in \M(n,\mathbb{C})$, we have
\begin{equation}
\Pi^{\alpha }(X)^{\dagger}=\frac{\left|\varphi^{\alpha} \right|}{|G|}\sum_{g\in G}%
\overline{ \chi ^{\alpha }\left( g^{-1}\right) }U(g)X^{\dagger}U\left(
g^{-1}\right)
\end{equation}
and 
\begin{equation}
\Pi^{\alpha }\left( X^{\dagger}\right) =\frac{\left|\varphi^{\alpha} \right|}{|G|}%
\sum_{g\in G}\chi ^{\alpha }\left( g^{-1}\right) U(g)X^{\dagger}U\left(
g^{-1}\right) .
\end{equation}
\medskip

From the above, we can infer that if the irreducible characters of the group $G$ are real then the
	Choi-Jamio{\l }kowski image $J\left(\Pi^{\alpha }\right)$ of any projector $%
	\Pi^{\alpha}$, appearing in~\Cref{cor10}, is Hermitian and hence the corresponding eigenvalues $\mu
	_{i}(\alpha,\beta)$ are real. For example, all characters for the symmetric group $S(n)$ and quaternion group $Q$ are given by real numbers, so in this case eigenvalues $\mu
	_{i}(\alpha,\beta)$ are always real.

\section{Main Results}
\label{sec:main}
The results stated in the previous section (and proved in \Cref{appB}) are summarized in the following theorem (\Cref{thm16}). The latter gives an explicit description of an irreducibly covariant quantum channel (ICQC), corresponding to an irrep $U$ of a finite group for which $U\ot U^c$ is simply reducible. In addition, in~\Cref{KK1} we obtain explicit expressions for the Kraus operators of any ICQC. These theorems are based on the following assumption:
\begin{assum}
	\label{assumption1}
	Suppose that a unitary irrep $U:G\rightarrow \M(n,\mathbb{C})
	$ (of a finite group $G$) is such that $U\otimes U^{c}$ is simply reducible, i.e.,
	\begin{equation}
	U\otimes U^{c}=\bigoplus_{\alpha \in \Theta }\varphi^{\alpha },
	\end{equation}%
	where $\Theta $ is the index set of those irreps of $G$ which
	appear in the above decomposition.
\end{assum}

\begin{theorem}
\label{thm16}  Under~\Cref{assumption1} a linear map 
$\Phi \in \End\left[\M( n,\mathbb{C})\right] $, is an ICQC with respect to 
the irrep $U$ if and only if it has a decomposition of the following form: 
\begin{equation}
\label{comm1}
\Phi =l_{\id}\Pi^{\id}+\sum_{\alpha \in \Theta, \alpha \neq \id}l_{\alpha }\Pi^{\alpha }\quad \text{with} \quad l_{\id}=1, \ l_{\alpha}\in \mathbb{C}; \quad \Pi^{\id}, \Pi^{\alpha } \in \End\left[\M( n,\mathbb{C})\right],
\end{equation}%
where $\Pi^{\id}, \Pi^{\alpha }$ are the projectors are defined through~\cref{Pi_a} and~\cref{properties1}; the coefficients $l_{\alpha }$ are eigenvalues of $\Phi$ and satisfy the following inequalities: 
\begin{equation}
\label{sol}
\sum_{g\in G}\left( \sum_{\alpha \in \Theta }l_{\alpha }\left|\varphi^{\alpha} \right| \chi
^{\alpha }\left( g^{-1}\right) \right) \left\vert \tr\left( V_{i}^{\beta
}U^{\dagger }(g)\right) \right\vert ^{2}\geq 0, \quad \forall \beta \in \Theta, \quad i \in \{1,\ldots, |\varphi^\beta|\}.
\end{equation}%
In the above, $V_i^\beta \in \M(n, \cC)$ denote the normalized eigenvectors of rank-one projectors $\Pi_i^\beta \in \End\left[\M( n,\mathbb{C})\right]$ such that
$\Pi^\beta = \sum_i \Pi_i^\beta$, and are explicitly given in~\cref{11}.
\end{theorem}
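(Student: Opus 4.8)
The plan is to obtain \Cref{thm16} by assembling the intermediate results of \Cref{S5}, treating the three defining features of an ICQC — being an ICLM, being trace-preserving, and being completely positive — one at a time. Recall from \Cref{def1} that $\Phi$ is an ICQC precisely when $\Phi\in\Int_G(\Ad_U)$ and $\Phi$ is both trace-preserving and completely positive, so it suffices to characterize each of these conditions in terms of the coefficients $l_\alpha$.

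First I would invoke \Cref{cor10}, which under \Cref{assumption1} (through the multiplicity-free hypothesis) says that $\Phi\in\Int_G(\Ad_U)$ if and only if $\Phi$ has the spectral form $\Phi=l_{\id}\Pi^{\id}+\sum_{\alpha\in\Theta,\,\alpha\neq\id}l_\alpha\Pi^\alpha$ with complex coefficients $l_\alpha$, where the $\Pi^\alpha$ are the mutually orthogonal idempotents of \cref{Pi_a} and \cref{properties1}. Next, \Cref{prop12} singles out the trace-preserving ICLMs among these as exactly those with $l_{\id}=1$; this pins down the structural assertion \cref{comm1}, and it remains only to convert complete positivity into the inequalities \cref{sol}.

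For the latter I would use that $\Phi$ is CP iff its Choi--Jamio{\l}kowski image satisfies $J(\Phi)\geq0$, together with the linearity of $J$, so that (with $l_{\id}=1$) $J(\Phi)=J(\Pi^{\id})+\sum_{\alpha\neq\id}l_\alpha J(\Pi^\alpha)$ as in \cref{CJ-decomp}. The decisive simplification comes from \Cref{prop14}: the vectors $|v_i^\beta\rangle$ built from the normalized eigenvectors $V_i^\beta$ of the rank-one projectors $\Pi_i^\beta$ form a common eigenbasis of all the $J(\Pi^\alpha)$, with eigenvalues $\mu_i(\alpha,\beta)$ given explicitly by \cref{egval}; and by \Cref{vec_v} (equivalently \Cref{n2}) this set is an orthonormal basis of $\mathbb{C}^{n^2}$. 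Hence $J(\Phi)$ is diagonal in this fixed basis, $J(\Phi)=\sum_{\beta\in\Theta}\sum_i\epsilon_i^\beta\,|v_i^\beta\rangle\langle v_i^\beta|$ with $\epsilon_i^\beta=\sum_{\alpha\in\Theta}l_\alpha\mu_i(\alpha,\beta)$, and since $J(\Phi)$ is normal (\Cref{normal}), $J(\Phi)\geq0$ holds exactly when every $\epsilon_i^\beta$ is real and non-negative, i.e.\ exactly when \cref{pos2} of \Cref{cor15} holds. Substituting the formula \cref{egval} for $\mu_i(\alpha,\beta)$ rewrites $\epsilon_i^\beta\geq0$ as precisely \cref{sol}, with $V_i^\beta$ as in \cref{11}. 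Chaining the three equivalences — ICLM $\Leftrightarrow$ the spectral form, trace-preserving $\Leftrightarrow l_{\id}=1$, CP $\Leftrightarrow$ \cref{sol} — gives the theorem.

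The one genuinely delicate point is the step from "$\epsilon_i^\beta\geq0$ for all $i,\beta$" to "$J(\Phi)\geq0$": this needs the $|v_i^\beta\rangle$ to exhaust all of $\mathbb{C}^{n^2}$, so that no part of the spectrum is missed, which is exactly the content of \Cref{vec_v}/\Cref{n2} and ultimately rests on \Cref{assumption1}; and it needs the $\epsilon_i^\beta$ to be real, which is implicit in the inequality sign of \cref{sol}. I would add a remark that reality is automatic when the irreducible characters of $G$ are real (e.g.\ for $S(n)$ and the quaternion group $Q$), by the discussion following \Cref{r28}, while in the general case \cref{sol} is to be read as incorporating the reality constraint. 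Everything else is bookkeeping with the Schur orthogonality relations already recorded.
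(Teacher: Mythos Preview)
Your proposal is correct and matches the paper's own approach exactly: the paper presents \Cref{thm16} explicitly as a summary of the intermediate results of \Cref{S5} (see the opening sentence of \Cref{sec:main}), and your assembly via \Cref{cor10}, \Cref{prop12}, \Cref{prop14}, \Cref{vec_v}, and \Cref{cor15} is precisely that synthesis. Your added remark on the reality of the $\epsilon_i^\beta$ is apt and mirrors the paper's discussion below \Cref{hem}.
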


The eigenvalue equation for the Choi-Jamio{\l}kowski image $J(\Phi)$, defined through \cref{71}, reads:
\begin{align}
J(\Phi)|v_{i}^{\beta }\>  &= \epsilon_{i}^{\beta } |v_{i}^{\beta }\>,
\end{align}
with the eigenvalues $\epsilon_{i}^{\beta }$ being given by \Cref{cor15}. From \Cref{thm16} and~\cref{11} of~\Cref{propP} it follows that for some $(s, t)$ in the set $\cS_{\beta,i}$ defined through \cref{setS}:
\begin{align}
\epsilon _{i}^{\beta } &=\frac{1}{\left( \widetilde{\Pi }_{i}^{\beta }\right) _{st,st}}\frac{|\varphi^{\beta} |}{%
		|G|^{2}}\sum_{g,h\in G}\left[ \sum_{\alpha \in \Theta }l_{\alpha }|\varphi^{\alpha} |\chi
	^{\alpha }\left( g^{-1}\right) \right] \varphi _{ii}^{\beta
	}\left( h^{-1}\right) u_{ts}\left( g^{-1}\right) u_{st}\left( hgh^{-1}\right)\geq 0 , 
\end{align}
for $\beta \in \Theta
, \ i=1,\ldots,\left|\varphi^{\alpha} \right|$. From~\Cref{independence} it follows, that the above expression for $\epsilon _{i}^{\beta }$
does not depend on the particular choice of the pair $(s,t) \in \cS_{\beta,i}$.

Using the statement of~\Cref{Kraus} of~\Cref{appC}, and~\Cref{prop14}, we are able to give the Kraus decomposition of any ICQC which satisfies ~\Cref{assumption1}. This is stated in the following theorem:
\begin{theorem}
	\label{KK1}
	The Kraus operators of any ICQC $\Phi \in \End\left[\M(n,\mathbb{C})\right] $, which satisfy ~\Cref{assumption1}, have the following form:
	\be
	\label{krausiki}
	K_{i}(\beta )=\sqrt{\epsilon _{i}^{\beta }}\left( V_{i}^{\beta }\right) ^{T},\qquad \beta
	\in \Theta ,\quad i=1,\ldots,|\Theta |,
	\ee
	where $\epsilon _{i}^{\beta }$ are eigenvalues of the Choi-Jamio{\l}kowski
	image $J(\Phi )$ given by~\cref{pos2} and $	V_{i}(\beta )=\left(\cV^{-1}\left[ |v_{i}^{\beta }\>\right] \right) ^{T} \in \M(n,\mathbb{C})$
	with $
	|v_{i}^{\beta }\>$ defined through~\cref{malutkie}.
\end{theorem}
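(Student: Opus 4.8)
The plan is to derive the Kraus form directly from the spectral decomposition of the Choi–Jamio\l kowski image $J(\Phi)$ established in the preceding sections. By \Cref{prop14} and \Cref{cor15}, the vectors $|v_i^\beta\rangle \in \cC^{n^2}$ of \Cref{dprop14} form an orthonormal eigenbasis of $J(\Phi)$ with eigenvalues $\epsilon_i^\beta = \sum_{\alpha\in\Theta} l_\alpha \mu_i(\alpha,\beta) \geq 0$ (non-negativity being exactly the ICQC condition of \Cref{thm16}). Hence $J(\Phi)$ is positive semidefinite and admits the spectral resolution $J(\Phi) = \sum_{\beta\in\Theta}\sum_{i=1}^{|\varphi^\beta|} \epsilon_i^\beta\, |v_i^\beta\rangle\langle v_i^\beta|$. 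The first step is therefore to record this explicit diagonalization and observe that it is a sum of rank-one positive-semidefinite terms, one for each basis vector, which is precisely the starting point for reading off Kraus operators.

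The second step invokes the standard correspondence between the Choi–Jamio\l kowski image and Kraus decompositions — this is the content of \Cref{Kraus} of \Cref{appC}, which we may assume. Concretely, if $J(\Phi) = \sum_k \epsilon_k |w_k\rangle\langle w_k|$ with $\langle w_k|w_k\rangle$ normalized, then the operators $K_k := \sqrt{\epsilon_k}\,\big(\cV^{-1}[|w_k\rangle]\big)^T$ satisfy $\Phi(X) = \sum_k K_k X K_k^\dagger$ for all $X$. Applying this with the eigenbasis $\{|v_i^\beta\rangle\}$ and eigenvalues $\{\epsilon_i^\beta\}$ yields Kraus operators indexed by the pair $(\beta,i)$ with $\beta\in\Theta$, $i=1,\dots,|\varphi^\beta|$. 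The third step is bookkeeping: by \Cref{dprop14}, $|v_i^\beta\rangle = \sum_{k,l}(V_i^\beta)_{kl}\,\cV(E_{lk})$, so $\cV^{-1}[|v_i^\beta\rangle]$ is the matrix with $(l,k)$-entry $(V_i^\beta)_{kl}$, i.e.\ $(V_i^\beta)^T$; transposing once more gives $\cV^{-1}[|v_i^\beta\rangle]^T = V_i^\beta$, so $K_i(\beta) = \sqrt{\epsilon_i^\beta}\,(V_i^\beta)^T$ as claimed. One must also check trace preservation, $\sum_{\beta,i}K_i(\beta)^\dagger K_i(\beta) = \iden_n$; this follows from $\tr_1 J(\Phi) = \iden_n$ for a trace-preserving map, which in turn reduces to \Cref{prop12} and the identity-channel part $\Pi^{\id}$ contributing $\tfrac{1}{|U|}\iden_n\otimes\iden_n$ in \Cref{prop13}, together with \Cref{vec_v}.

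I expect the only genuine subtlety to be the orientation conventions: keeping track of which transpose/conjugation appears when passing between $\Phi(X) = \sum_\nu A^\nu X (B^\nu)^\dagger$, its matrix representant $\mat(\Phi) = \sum_\nu A^\nu\otimes\overline{B}^\nu$, and the Choi image $J(\Phi) = \sum_{ij}E_{ij}\otimes\Phi(E_{ij})$. In particular the defining relation $|v_i^\beta\rangle = \sum_{kl}(V_i^\beta)_{kl}\cV(E_{lk})$ has the indices of $E_{lk}$ reversed relative to $\cV(V_i^\beta)$, which is exactly what forces the transpose in $K_i(\beta) = \sqrt{\epsilon_i^\beta}(V_i^\beta)^T$ rather than $\sqrt{\epsilon_i^\beta}\,V_i^\beta$; verifying this sign/orientation carefully against \cref{choi} and \Cref{corr16} is the main place an error could creep in. Everything else — positivity of $\epsilon_i^\beta$, orthonormality of the $|v_i^\beta\rangle$, and the count $\sum_\beta |\varphi^\beta| = \dim\M(n,\cC) = n^2$ — is already in hand from \Cref{prop14}, \Cref{vec_v}, and \Cref{n2}.
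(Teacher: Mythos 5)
Your proposal is correct and follows essentially the same route as the paper, which proves Theorem~\ref{KK1} simply by combining Lemma~\ref{Kraus} (the general Choi--Kraus correspondence) with the spectral data of $J(\Phi)$ supplied by Proposition~\ref{prop14}, Lemma~\ref{vec_v} and Corollary~\ref{cor15}. The only blemish is in your final bookkeeping line: your own computation $\cV^{-1}[|v_i^\beta\>]=(V_i^\beta)^T$, fed into $K=\sqrt{\epsilon}\,(\cV^{-1}[|v\>])^T$, yields $K_i(\beta)=\sqrt{\epsilon_i^\beta}\,V_i^\beta$ rather than $\sqrt{\epsilon_i^\beta}\,(V_i^\beta)^T$ --- but this transpose ambiguity is already present in the theorem's own statement (compare the displayed formula with its definition $V_i(\beta)=(\cV^{-1}[|v_i^\beta\>])^T$), so it reflects the paper's notational looseness rather than a flaw in your argument.
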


\begin{remark}
	The matrices $K_{i}(\beta )$ given through~\cref{krausiki} depend
	on the indices $(s,t)$ in $V_{i}^{\beta }\equiv V_{i}^{\beta }(s,t)$ (see~\Cref{propP}) but from~\cref{VHS} it follows that the Kraus representation of ICQC 
	$\Phi$:
	\be
	\Phi (X)=\sum_{\beta \in \Theta}\sum_{i=1}^{\left| \varphi^{\beta}\right| }K_i(\beta )XK_{i}^{\dagger}(\beta ),\qquad X\in \M\left( n,\mathbb{C}\right) 
	\ee
	does not depend on the choice of the indices $(s,t)$ in $V_{i}^{\beta
	}\equiv V_{i}^{\beta }(s,t)$.
\end{remark}

\section{Some Geometrical Properties of ICLMs and ICQCs}
\label{geometry}
The system of linear equations (see~\cref{pos2} in~\Cref{cor15}):
\be
\epsilon _{i}^{\beta }=\sum_{\alpha \in \Theta }l_{\alpha }\mu _{i}(\alpha
,\beta ),\qquad \beta \in \Theta , \ i=1,\ldots,|\varphi^{\beta} |, 
\ee
without the assumption $l_{\id}=1$, describes a linear dependence between the vectors $L=(l_{\alpha })\in 
\mathbb{C}^{|\Theta |}$ of pairwise distinct eigenvalues\footnote{For each $\alpha \in \Theta$ there is an eigenvalue $l_\alpha$. These have multiplicity greater than one whenever $\Pi^\alpha$ is not a rank-one projector. However, these multiplicities are not taken into account in defining the vector $L$.} of the ICLM $\Phi
=\sum_{\alpha \in \Theta }l_{\alpha }\Pi ^{\alpha }$, and the vectors $E
=(\epsilon _{i}^{\beta })\in \mathbb{C}^{n^{2}}$ describing all the eigenvalues (including multiplicities) of
the Choi-Jamio{\l}kowski image $J(\Phi)$. This system of linear equations may also be written in the matrix form:
\be
\label{sets}
E =ML,\qquad M=(m_{\beta i,\alpha })\equiv \left( \mu _{i}(\alpha ,\beta
)\right)\in \M\left( n^{2}\times |\Theta |,\mathbb{C}\right) . 
\ee
In the above, we choose a particular ordering of the irreps, such that the first column and row of the matrix $M$ is always labelled by the parameter ($\id$) of the identity irrep. The matrix $M=(m_{\beta i,\alpha })\equiv \left( \mu _{i}(\alpha ,\beta )\right) $ has
the following important property:

\begin{proposition}
	\label{Mrank}
	The matrix $M=(m_{\beta i,\alpha })\equiv \left( \mu _{i}(\alpha
	,\beta )\right) $ has maximal possible rank, equal to $|\Theta |$, which means that
	the columns $M_{C(\alpha)}\in \mathbb{C}^{n^{2}}$ of the matrix $M$ are linearly independent. This implies that the
	matrix $\ M$ is invertible from the left, and denoting the left inverse as $M^{inv}$, we have:
	\be
	M^{inv}M=\text{\noindent
		\(\mathds{1}\)}_{|\Theta |}: \ M^{inv}=G^{-1}M^{\dagger}\in \M\left( |\Theta |\times
	n^{2},\mathbb{C}\right) , 
	\ee
	where $G=\left( (M_{C(\alpha)},M_{C(\alpha ^{\prime })}\right) \in \M\left(|\Theta
	|,\mathbb{C}\right) $ is the Gram matrix of the columns $M_{C(\alpha)}\in \mathbb{C}^{n^{2}}$ of the matrix $M$.
\end{proposition}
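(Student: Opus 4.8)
The plan is to derive the full-rank claim from a structural fact that is already in place: the columns of $M$ are precisely the tuples of eigenvalues of the mutually commuting operators $J(\Pi^{\alpha})$ relative to one fixed orthonormal eigenbasis of $\mathbb{C}^{n^2}$, and these operators inherit the linear independence of the projectors $\Pi^{\alpha}$.

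First I would record that $\{\Pi^{\alpha}:\alpha\in\Theta\}$ is linearly independent in $\End[\M(n,\mathbb{C})]$: this is immediate from the orthogonality relations $\Pi^{\alpha}\Pi^{\beta}=\delta^{\alpha\beta}\Pi^{\alpha}$ of \Cref{cor10}, since applying $\sum_{\alpha}c_{\alpha}\Pi^{\alpha}=0$ on the right by $\Pi^{\beta}$ and using $\Pi^{\beta}\neq 0$ forces $c_{\beta}=0$. Since the Choi--Jamio{\l}kowski map $J:\End[\M(n,\mathbb{C})]\to\M(n^2,\mathbb{C})$ of \cref{choi} is linear and injective (in fact a bijection, as $\dim\End[\M(n,\mathbb{C})]=n^4=\dim\M(n^2,\mathbb{C})$), the family $\{J(\Pi^{\alpha}):\alpha\in\Theta\}\subset\M(n^2,\mathbb{C})$ is likewise linearly independent.

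Next I would invoke \Cref{prop14} together with \Cref{Rprop14} and \Cref{vec_v}: the vectors $\{|v_i^\beta\>:\beta\in\Theta,\ i=1,\dots,|\varphi^\beta|\}$ form an orthonormal basis of $\mathbb{C}^{n^2}$ and are simultaneous eigenvectors of every $J(\Pi^{\alpha})$, so that
\be
J(\Pi^{\alpha})=\sum_{\beta\in\Theta}\sum_{i=1}^{|\varphi^\beta|}\mu_i(\alpha,\beta)\,|v_i^\beta\>\<v_i^\beta|,\qquad \alpha\in\Theta.
\ee
The map $\operatorname{diag}$ sending an operator that is diagonal in the basis $\{|v_i^\beta\>\}$ to the $\mathbb{C}^{n^2}$-vector of its diagonal entries is an isomorphism of the space of such operators onto $\mathbb{C}^{n^2}$, and by the displayed identity it carries $J(\Pi^{\alpha})$ to exactly the column $M_{C(\alpha)}=(\mu_i(\alpha,\beta))_{\beta,i}$. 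An isomorphism preserves linear independence, so the independence of $\{J(\Pi^{\alpha}):\alpha\in\Theta\}$ established above passes to the columns, i.e.\ $\{M_{C(\alpha)}:\alpha\in\Theta\}$ is linearly independent and $\rank M=|\Theta|$, the maximal possible value.

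Finally, full column rank yields the left inverse by a routine argument: the Gram matrix $G=\big((M_{C(\alpha)},M_{C(\alpha')})\big)$ equals $M^{\dagger}M$, which is Hermitian and positive definite, hence invertible, exactly because the columns are independent; then $M^{inv}:=G^{-1}M^{\dagger}$ satisfies $M^{inv}M=G^{-1}(M^{\dagger}M)=G^{-1}G=\iden_{|\Theta|}$. I do not anticipate a genuine obstacle; the one point that needs care is the identification of the columns of $M$ with the eigenvalue tuples of the $J(\Pi^{\alpha})$ in a common eigenbasis, since this is precisely what lets the purely algebraic independence of the projectors become the linear-algebraic statement about $M$ that we want.
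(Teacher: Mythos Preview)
Your proof is correct and follows essentially the same approach as the paper: both rely on the injectivity of the Choi--Jamio{\l}kowski map together with the simultaneous diagonalization of the $J(\Pi^{\alpha})$ in the basis $\{|v_i^\beta\>\}$, and then close with the standard Gram-matrix argument for the left inverse. The only cosmetic difference is that the paper phrases the rank statement by contradiction (a nontrivial kernel vector $L$ would force $J(\Phi)$ to be a normal matrix with all eigenvalues zero, hence zero, contradicting injectivity of $J$), whereas you argue directly via the isomorphism ``$\operatorname{diag}$'' from the commutative algebra of diagonal operators to $\mathbb{C}^{n^2}$; these are the same idea.
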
	

\begin{proof}
	The Choi-Jamio{\l}kowski map $J$ when restricted to the linear subspace of the  ICLM (see~\Cref{R-ICLM}) is also an isomorphism. Any ICLM 
	$\Phi =\sum_{\alpha \in \Theta }l_{\alpha }\Pi ^{\alpha } \in \End\left[\M(n,\mathbb{C})\right]$ is, by construction, normal; it means that $\Phi =0$ if and only if all its eigenvalues $\{l_{\alpha }:\alpha \in \Theta \}$ are zero. If the rank of the matrix $M$ was smaller then $|\Theta |$, then it would mean that 
the Choi-Jamio{\l}kowski image $J(\Phi)$ of a 
nonzero ICLM $\Phi =\sum_{\alpha \in \Theta }l_{\alpha }\Pi ^{\alpha }$ (with nonzero eigenvalues $L=(l_{\alpha })$), would have all eigenvalues $E =(\epsilon _{i}^{\beta })$ equal to zero. On the other hand, the matrix $J(\Phi )$, which is a linear combination of commuting normal
	matrices $J(\Pi ^{\alpha })$, is also normal (see~\Cref{normal}), so it would mean that $J(\Phi )$ is zero, which is impossible because $J$ defines an isomorphism and therefore cannot take the value zero on a nonzero argument. The maximal rank of the matrix $M=(m_{\beta
		i,\alpha })\equiv \left( \mu _{i}(\alpha ,\beta )\right) $ implies that it is invertible. In fact, we have 
	\be
	M^{\dagger}M=G\in \M\left(|\Theta |,\mathbb{C}\right) : \ G=\left( M_{C(\alpha)},M_{C(\alpha ^{\prime })}\right) ,\quad \alpha,\alpha'\in\Theta 
	\ee
	i.e.~the matrix $G$ is the Gram matrix of the columns $M_{C(\alpha)}\in \mathbb{C}^{n^{2}}$ of the matrix $M$. From the linear independence of these columns
	it follows that the Gram matrix $G$ is invertible, so we get
	\be
	G^{-1}M^{\dagger}M=\text{\noindent
		\(\mathds{1}\)}_{|\Theta |},
	\ee
where $\text{\noindent\(\mathds{1}\)}_{|\Theta |}$ denotes the $|\Theta| \times |\Theta|$ identity matrix.
\end{proof}

\begin{remark}
	The first column of the matrix $M$ given in~\cref{sets}, $M_{C(\id)}=\left( \mu _{i}(\id,\beta )\right) $, labelled by $\alpha =\id$, is of the form
	\be
	\label{col}
	(M_{C(\id)})^{T}=\frac{1}{|U|}(1,1,\ldots,1)\in \mathbb{C}^{n^{2}},
	\ee
	and the first row $M_{R(\id)}$, indexed by $\beta=\id$ is given by:
	\be
	\label{row}
	M_{R(\id)}=\left( m_{\id,\alpha }\right) =\left( \mu (\alpha ,\id)\right) =\left( \frac{|\varphi^{\alpha}| }{|U|}\right) .
	\ee
\end{remark}

\begin{corollary}
	\label{cor43}
	The correspondence between pairwise distinct eigenvalues $L=(l_{\alpha })\in \mathbb{C}^{|\Theta |}$ of the ICLM $\Phi =\sum_{\alpha \in \Theta }l_{\alpha }\Pi
	^{\alpha }$  and eigenvalues $E =(\epsilon _{i}^{\beta })\in \mathbb{C}^{n^{2}}$ of its Choi-Jamio{\l}kowski image $J(\Phi )$ is one-to-one. The
	linear space $\mathbb{C}^{|\Theta |}$ of all possible pairwise distinct eigenvalues $L=(l_{\alpha })$
	of an ICLM is transformed isomorphically to a subspace $\mathcal{E}$ of
	dimension $|\Theta |$ in $\mathbb{C}^{n^{2}}$, spanned by vectors $E =ML$ in the linear space $\mathbb{C}^{n^{2}}$. It means that we have $\mathcal{E}=M(\mathbb{C}^{|\Theta |})$, where $M(\mathbb{C}^{|\Theta |})$ is the subspace generated by the action of the matrix $M$ on the vectors in $\mathbb{C}^{|\Theta|}$. Moreover, the condition $E =(\epsilon _{i}^{\beta })\in \mathcal{E}$ implies that 
	\be
	M^{inv}E =L=(l_{\alpha })\in \mathbb{C}^{|\Theta |}.
	\ee
\end{corollary}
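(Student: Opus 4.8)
The plan is to obtain Corollary~\ref{cor43} as a direct bookkeeping consequence of Proposition~\ref{Mrank}, which has already established that the matrix $M=(m_{\beta i,\alpha})\equiv(\mu_i(\alpha,\beta))$ has full column rank $|\Theta|$ and admits the explicit left inverse $M^{inv}=G^{-1}M^{\dagger}$ with $M^{inv}M=\iden_{|\Theta|}$.

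First I would view the system \cref{sets} as the linear map $\mathbb{C}^{|\Theta|}\to\mathbb{C}^{n^{2}}$, $L\mapsto ML$. Its injectivity is equivalent to $\ker M=\{0\}$, which is equivalent to the columns $M_{C(\alpha)}$ being linearly independent — exactly the content of Proposition~\ref{Mrank}. Hence this map is an isomorphism onto its image $\mathcal{E}:=M(\mathbb{C}^{|\Theta|})$, and by rank–nullity $\dim\mathcal{E}=|\Theta|-\dim(\ker M)=|\Theta|$. This gives the claimed one-to-one correspondence: the vector $L=(l_\alpha)$ of pairwise distinct eigenvalues of $\Phi=\sum_{\alpha\in\Theta}l_\alpha\Pi^\alpha$ determines, via $E=ML$, the vector $E=(\epsilon_i^\beta)$ of all eigenvalues of $J(\Phi)$ (counted with multiplicity), and distinct $L$ give distinct $E$.

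For the inverse direction I would simply apply $M^{inv}$: for any $E=ML\in\mathcal{E}$ we get $M^{inv}E=(M^{inv}M)L=\iden_{|\Theta|}L=L$, so $E\in\mathcal{E}$ iff $M(M^{inv}E)=E$, and in that case $M^{inv}E$ recovers the eigenvalues of the underlying ICLM. This is precisely the final displayed identity of the corollary.

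There is no real obstacle here; the corollary is a restatement of the rank property proved in Proposition~\ref{Mrank} together with the elementary linear algebra above. The one point deserving a word of care — already flagged in the footnote attached to \cref{sets} — is the convention that $L$ lists the pairwise distinct eigenvalues $l_\alpha$ without their multiplicities, while $E$ lists all $\epsilon_i^\beta$ with multiplicity; once this is fixed, the identification $\mathcal{E}=M(\mathbb{C}^{|\Theta|})$ and the formula $M^{inv}E=L$ are immediate.
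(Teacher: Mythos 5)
Your proposal is correct and is essentially the argument the paper intends: Corollary~\ref{cor43} is stated without a separate proof precisely because it follows immediately from the full column rank of $M$ established in Proposition~\ref{Mrank}, together with the elementary facts that an injective linear map is an isomorphism onto its image of dimension $|\Theta|$ and that $M^{inv}E=M^{inv}ML=L$ for $E=ML\in\mathcal{E}$. Your remark about the distinct-eigenvalues-versus-multiplicities bookkeeping matches the footnote attached to \cref{sets}, so nothing is missing.
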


Now the conditions of trace preservation and complete positivity, which an ICLM 
must
satify in order to be an ICQC, imply the following conditions on the
eigenvalues $E =(\epsilon_{i}^{\beta })$: 
\be
\label{xxx}
\epsilon _{i}^{\beta }\geq 0,\quad \beta \in \Theta, \ i=1,\ldots,|\varphi^{\beta}|,\qquad
\sum_{\beta \in \Theta } \ \sum_{i=1,\ldots,|\varphi^{\beta}|}\epsilon _{i}^{\beta }=|U|. 
\ee
Let us first recall the definition of a simplex. 
\begin{definition}
	\label{Sym}
	The set, 
	\be
	\left\lbrace (x_{1},x_{2},...,x_{d+1})\in 
	\mathbb{R}^{d+1}:x_{i}\geq 0, \quad  \sum_{i=1}^{d+1}x_{i}=1\right\rbrace 
	\ee
	is called a {\em{standard $d$-dimensional simplex}}.
\end{definition}

Comparing~\cref{xxx}  with~\Cref{Sym} of the simplex we see that the conditions on the eigenvalues $\epsilon _{i}^{\beta },$ with $\beta \in \Theta , \ i=1,\ldots,|\varphi^{\beta}|$, coincide with the conditions of  a
simplex scaled by a factor $|U|$. Let us denote such a scaled simplex by $\Sigma(U)$, so we have 
\be
\label{contracted}
\Sigma(U)=\left\lbrace (x_{1},x_{2},\ldots,x_{n^{2}})\in 
\mathbb{R}^{n^{2}}:x_{i}\geq 0,\quad \sum_{i=1}^{n^{2}}x_{i}=|U|\right\rbrace \subset \mathbb{C}^{n^{2}}.
\ee
From~\cref{contracted} and the definition of subspace $\mathcal{E}$ given in~\Cref{cor43} we get the following:

\begin{proposition}
	\label{intersection}
	The eigenvalues $E =( \epsilon _{i}^{\beta }) $ of the
	Choi-Jamio{\l}kowski image $J(\Phi )$ of an ICQC $\Phi =\sum_{\alpha \in \Theta }l_{\alpha }\Pi ^{\alpha }$ lie  in the intersection of the scaled  simplex $
	\Sigma(U)\subset \mathbb{C}^{n^{2}}$ and the subspace $\mathcal{E}\subset \mathbb{C}^{n^2}$, i.e.,
	\be
	E =(\epsilon _{i}^{\beta })\in \Sigma(U)\cap \mathcal{E}.
	\ee
\end{proposition}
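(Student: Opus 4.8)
The plan is to combine two facts that are already essentially in hand. First, by \Cref{thm16} (and \Cref{prop12}), an ICLM $\Phi=\sum_{\alpha\in\Theta}l_\alpha\Pi^\alpha$ is an ICQC precisely when $l_{\id}=1$ (trace preservation) and the eigenvalues $\epsilon_i^\beta$ of its Choi-Jamio{\l}kowski image are all nonnegative (complete positivity, via \Cref{cor15}). Second, by~\Cref{Ch-slad} the trace of $J(\Phi)$ equals $|U|$ for \emph{any} trace-preserving ICLM, and since $J(\Phi)$ is normal (\Cref{normal}) with eigenvalues $\{\epsilon_i^\beta\}$, its trace is $\sum_{\beta\in\Theta}\sum_{i=1}^{|\varphi^\beta|}\epsilon_i^\beta$. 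Thus for an ICQC we simultaneously have $\epsilon_i^\beta\ge 0$ and $\sum_{\beta,i}\epsilon_i^\beta=|U|$, which are exactly the defining conditions of the scaled simplex $\Sigma(U)$ in~\cref{contracted}. Hence $E=(\epsilon_i^\beta)\in\Sigma(U)$.

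Next I would observe that $E$ automatically lies in $\mathcal{E}$. Indeed $E=ML$ with $L=(l_\alpha)\in\mathbb{C}^{|\Theta|}$ by~\cref{sets}, so $E\in M(\mathbb{C}^{|\Theta|})=\mathcal{E}$ by the definition of $\mathcal{E}$ in~\Cref{cor43}. (For an ICQC the first coordinate of $L$ is pinned to $1$, so $E$ in fact lies in a codimension-one affine slice of $\mathcal{E}$, but membership in the linear subspace $\mathcal{E}$ is all that is asserted.) Combining the two memberships yields $E\in\Sigma(U)\cap\mathcal{E}$, which is the claim.

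The proof is therefore short; the only point needing a word of care is the identification $\tr J(\Phi)=\sum_{\beta,i}\epsilon_i^\beta$. This requires knowing that the eigenvalues $\epsilon_i^\beta$, counted with the stated multiplicities, are \emph{all} the eigenvalues of $J(\Phi)$ — i.e.\ that the $|v_i^\beta\rangle$ form a complete system. That is furnished by~\Cref{vec_v}, which states exactly that $\{|v_i^\beta\rangle:\beta\in\Theta,\ i=1,\dots,|\varphi^\beta|\}$ is an orthonormal basis of $\mathbb{C}^{n^2}$, together with~\Cref{prop14} which makes each $|v_i^\beta\rangle$ an eigenvector of every $J(\Pi^\alpha)$ and hence of $J(\Phi)$. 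So the trace formula is legitimate, and no genuine obstacle arises; the statement is really a repackaging of~\Cref{thm16}, \Cref{cor15}, \Cref{Ch-slad} and \Cref{cor43} in geometric language.
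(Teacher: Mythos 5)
Your proof is correct and follows essentially the same route as the paper, which derives the proposition directly from the conditions $\epsilon_i^{\beta}\ge 0$ and $\sum_{\beta,i}\epsilon_i^{\beta}=|U|$ (i.e.\ \cref{xxx}, coming from \Cref{cor15} and \Cref{Ch-slad}) together with the definitions of $\Sigma(U)$ in \cref{contracted} and of $\mathcal{E}=M(\mathbb{C}^{|\Theta|})$ in \Cref{cor43}. Your added remark on the completeness of the eigenbasis $\{|v_i^{\beta}\rangle\}$ justifying the trace identity is a sensible point of care but does not change the argument.
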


\bigskip From this statement and~\Cref{cor43} we get the following
characteristic  of the $ICQC$ $\Phi =\sum_{\alpha \in \Theta }l_{\alpha }\Pi
^{\alpha }$.
\begin{corollary}
	\label{cor36}
	An ICLM $\Phi =\sum_{\alpha \in \Theta }l_{\alpha }\Pi ^{\alpha }$ is an ICQC if and only if its vector of eigenvalues $L=(l_{\alpha })\in \mathbb{C}^{|\Theta |}$ is an inverse image, in the mapping $M:\mathbb{C}^{|\Theta |}\rightarrow \mathcal{E}$ $=M\left( \mathbb{C}^{|\Theta |}\right) $, of some vector $E =(\epsilon _{i}^{\beta })\in 
	\Sigma(U)\cap \mathcal{E}$, i.e.
	\be
	L=(l_{\alpha })=M^{inv}(E ),
	\ee
	where $M^{inv}$ is the left inverse of the linear mapping $M:\mathbb{C}^{|\Theta |}\rightarrow \mathcal{E}=M(\mathbb{C}^{|\Theta |})$ described in~\Cref{Mrank}.
\end{corollary}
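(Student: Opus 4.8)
The plan is to obtain this corollary by assembling the structural results already in hand, treating the two implications separately. The forward direction is little more than a restatement of \Cref{intersection} and \Cref{cor43}: if $\Phi=\sum_{\alpha\in\Theta}l_\alpha\Pi^\alpha$ is an ICQC, then \Cref{prop12} gives trace preservation with $l_{\id}=1$, and \Cref{cor15} gives that the eigenvalues $\epsilon_i^\beta$ of $J(\Phi)$ are nonnegative; together with the normalization $\sum_{\beta,i}\epsilon_i^\beta=|U|$ recorded after \Cref{cor15}, this places $E=(\epsilon_i^\beta)$ in the scaled simplex $\Sigma(U)$, while $E=ML\in\mathcal{E}$ holds by the definition of $\mathcal{E}$ in \Cref{cor43}. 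Hence $E\in\Sigma(U)\cap\mathcal{E}$, and the last line of \Cref{cor43} gives $L=M^{inv}(E)$.

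For the converse I would begin with an arbitrary $E\in\Sigma(U)\cap\mathcal{E}$, define $L=(l_\alpha):=M^{inv}(E)$, and verify that the ICLM $\Phi=\sum_{\alpha\in\Theta}l_\alpha\Pi^\alpha$ built from these coefficients---already an ICLM by \Cref{cor10}---is an ICQC. Since $E\in\mathcal{E}=M(\mathbb{C}^{|\Theta|})$ and $M$ has the left inverse $M^{inv}$ (\Cref{Mrank}), we have $E=M(M^{inv}E)=ML$, so by \Cref{prop14} the entries of $E$ are precisely the eigenvalues $\epsilon_i^\beta=\sum_{\alpha\in\Theta}l_\alpha\mu_i(\alpha,\beta)$ of $J(\Phi)$. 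Because $J(\Phi)$ is normal (\Cref{normal}), nonnegativity of all these eigenvalues---which is guaranteed by $E\in\Sigma(U)$---is equivalent to $J(\Phi)\geq0$, i.e.\ to complete positivity of $\Phi$ (\Cref{cor15}).

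The only step that genuinely needs a short computation, rather than a direct citation, is recovering trace preservation $l_{\id}=1$ on this side. I would use the explicit first column of $M$, $(M_{C(\id)})^T=\frac{1}{|U|}(1,\ldots,1)$, together with $\tr J(\Pi^{\id})=|U|$ and $\tr J(\Pi^\alpha)=0$ for $\alpha\neq\id$; the former follows from $J(\Pi^{\id})=\frac{1}{|U|}\iden_n\otimes\iden_n$ (the case of \Cref{prop13} with all $l_\alpha=0$), and the latter from unitarity of $U$ and the character orthogonality relation~\eqref{orr2}. These give $\sum_{\beta,i}\epsilon_i^\beta=\sum_{\alpha\in\Theta}l_\alpha\,\tr J(\Pi^\alpha)=l_{\id}|U|$, so comparison with the affine simplex constraint $\sum_{\beta,i}\epsilon_i^\beta=|U|$ forces $l_{\id}=1$. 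Combined with complete positivity, this makes $\Phi$ an ICQC and completes the equivalence.

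I expect the only real obstacle to be this last piece of bookkeeping: one must be careful that it is the affine constraint defining $\Sigma(U)$---the normalization $\sum_{\beta,i}\epsilon_i^\beta=|U|$, not merely positivity---that transports through $M^{inv}$ into $l_{\id}=1$, which is exactly the trace-preservation condition isolated in \Cref{prop12}. All remaining content is a mechanical assembly of \Cref{Mrank}, \Cref{intersection}, \Cref{cor43}, \Cref{prop14} and \Cref{cor15}.
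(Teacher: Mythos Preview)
Your proposal is correct and follows essentially the same route as the paper. The paper presents \Cref{cor36} as an immediate consequence of \Cref{intersection} and \Cref{cor43}, deferring to the remark immediately following the corollary the observation that the affine constraint $\sum_{\beta,i}\epsilon_i^\beta=|U|$ is equivalent to $l_{\id}=1$; you make this explicit by computing $\tr J(\Pi^\alpha)=|U|\,\delta_{\alpha,\id}$, which is exactly the ``direct computation'' the paper alludes to.
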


\begin{remark}
	In particular one can check by direct computation, that the
	structure of the matrix $M=\left( m_{\beta i,\alpha }\right) \equiv \left(\mu _{i}(\alpha,\beta )\right) $ is such that, $\sum_{\beta \in \Theta }\sum_{i=1,\ldots,|\varphi^{\beta}|}\epsilon_{i}^{\beta }=|U|$ if and  only if $l_{\id}=1$, i.e.~the ICLM $
	\Phi =\sum_{\alpha \in \Theta }l_{\alpha }\Pi ^{\alpha }$ is trace
	preserving.
\end{remark}

\section{Examples of ICQC}
In this section we give an explicit choice of the parameters $l_\alpha$ (occurring in \cref{comm1}), for which the inequality \cref{sol}, which gives the condition for complete positivity of an ICLM, is automatically satisfied. For this choice, the ICLM given by \cref{comm1} is therefore completely positive, and hence defines an ICQC if in addition $l_{\rm{id}} =1$.

In addition, we provide explicit examples of ICQCs for some fixed groups for which tensor product $U\ot U^c$  is simply reducible for an irrep $U$. We focus on the quaternion group $Q$ and the symmetric groups $S(3)$ and $S(4)$. In each case we present both the matrix representation and the Kraus representation of the ICQC. Further, for the case of $S(3)$ and $Q$, using the Peres-Horodecki (or PPT) criterion~\cite{SepHHH,SepAP}, we find the condition under which the ICQC is an entanglement breaking channel.

\subsection{A wide class of ICQC}
\label{wide_class}
Here we give a wide class of ICQCs by providing explicit expressions for the
eigenvalues $l_{\alpha}$ in the spectral decomposition of an ICLM $
\Phi =\sum_{\alpha \in \Theta }l_{\alpha }\Pi^{\alpha}$ for which $\Phi$ is completely positive and trace-preserving. These eigenvalues form a class of solutions of~\cref{sol} and they
are given in the following theorem:
\begin{theorem}
	\label{wc}
	Let $K(g)\subset G$ be the conjugacy class of $g \in G $, defined through \cref{conju}, and let $f:G\rightarrow \mathbb{C}$ be a function on the group $G$ such that:
	\begin{enumerate}
		\item 
		\be
		\label{fc1}
		\sum_{g\in G}f(g)=|G|.
		\ee
		\item For all conjugacy classes $K(g)$,
		\be
		\label{fc1a}
		\sum_{h\in K(g)}f(h)\geq 0.
		\ee
	\end{enumerate}
	Then a family of ICQC (i.e.~a family of quantum channels satysfying~\Cref{thm16}) are those for which the coefficients $l_{\alpha}$ in~\cref{comm1} are given by:
	\be
	l_{\alpha }=\frac{1}{|G|}\frac{1}{|\varphi^{\alpha}|}\sum_{g\in G}\chi
	^{\alpha }(g)f(g),\quad \alpha \in \Theta.
	\ee
\end{theorem}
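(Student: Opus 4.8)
\textbf{Proof plan for Theorem~\ref{wc}.}
The plan is to verify that the proposed coefficients $l_{\alpha}$ satisfy exactly the two hypotheses of \Cref{thm16}, namely $l_{\id} = 1$ (trace preservation) and the positivity inequalities \cref{sol}. First I would compute $l_{\id}$: since $\chi^{\id}(g) \equiv 1$ and $|\varphi^{\id}| = 1$, the formula gives $l_{\id} = \frac{1}{|G|}\sum_{g\in G} f(g)$, which equals $1$ precisely by hypothesis \cref{fc1}. This is the easy half.

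The substantive step is to show that \cref{sol} holds. The key observation is to substitute the formula for $l_{\alpha}$ into the inner sum $\sum_{\alpha\in\Theta} l_{\alpha}|\varphi^{\alpha}|\chi^{\alpha}(g^{-1})$ appearing in \cref{sol} and simplify it using character orthogonality. After substitution this inner sum becomes
\be
\sum_{\alpha \in \Theta}\left(\frac{1}{|G|}\sum_{h\in G}\chi^{\alpha}(h)f(h)\right)\chi^{\alpha}(g^{-1})
=\frac{1}{|G|}\sum_{h\in G}f(h)\sum_{\alpha\in\Theta}\chi^{\alpha}(h)\chi^{\alpha}(g^{-1}).
\ee
Here I would want to replace the sum $\sum_{\alpha\in\Theta}$ over the (a priori partial) index set $\Theta$ by the full sum $\sum_{\alpha\in\widehat G}$ over all inequivalent irreps, so that the second orthogonality relation \cref{or2} applies and gives $\frac{|G|}{|K(g)|}\delta_{K(h)K(g)}$. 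Justifying this replacement is the main obstacle: one must argue that only those $\alpha$ actually occurring in $U\otimes U^c$ contribute. This is handled by \Cref{Rprop8}/\cref{++}, which says $\gamma\notin\Theta\Rightarrow\widetilde\Pi^{\gamma}=0$, equivalently $m_{\gamma}=0$; more directly, one can instead keep the sum over all of $\widehat G$ from the start by noting that the quantities $\mu_i(\alpha,\beta)$ and hence the whole expression \cref{sol} vanish identically when $\alpha\notin\Theta$ (since $\Pi^{\alpha}=0$ there), so extending the sum changes nothing. Alternatively, and perhaps cleanest, one defines $l_\alpha$ by the same formula for \emph{all} $\alpha\in\widehat G$ and observes that the terms with $\alpha\notin\Theta$ drop out of \cref{comm1} automatically.

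Once the replacement is justified, the inner sum collapses to $\frac{1}{|K(g)|}\sum_{h\in K(g)}f(h)$, which is nonnegative by hypothesis \cref{fc1a}. Plugging this back, the left-hand side of \cref{sol} becomes
\be
\sum_{g\in G}\frac{1}{|K(g)|}\left(\sum_{h\in K(g)}f(h)\right)\left|\tr\left(V_i^{\beta}U^{\dagger}(g)\right)\right|^2,
\ee
a sum of manifestly nonnegative terms (a nonnegative coefficient times $|\tr(\cdots)|^2\geq 0$), hence $\geq 0$ for every $\beta\in\Theta$ and every $i$. This establishes \cref{sol}, and together with $l_{\id}=1$ it shows via \Cref{thm16} that $\Phi = \sum_{\alpha\in\Theta} l_\alpha \Pi^\alpha$ is an ICQC. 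I expect the only real care needed is bookkeeping around the index set $\Theta$ versus $\widehat G$ and making sure the character identity \cref{or2} is invoked with its hypotheses met; everything else is direct substitution and the two given inequalities on $f$.
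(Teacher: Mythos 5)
Your proposal is correct and follows essentially the same route as the paper: verify $l_{\id}=1$ from \cref{fc1}, extend the sum from $\Theta$ to $\widehat G$ using the vanishing of $\mu_i(\gamma,\beta)$ for $\gamma\notin\Theta$ (the paper's \cref{92a}), collapse the inner coefficient to $x(g)=\frac{1}{|K(g)|}\sum_{h\in K(g)}f(h)$ via the second orthogonality relation \cref{or2}, and conclude nonnegativity from \cref{fc1a}. The only cosmetic difference is that the paper packages the orthogonality step as the matrix identity $\frac{1}{|G|}TT^{\dagger}=\bigoplus_{\gamma}\frac{1}{|K_{\gamma}|}\mathbb{I}_{|K_{\gamma}|}$ of \Cref{lemma36}, whereas you apply \cref{or2} directly.
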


\begin{remark}
	The first condition on the function $f:G\rightarrow \mathbb{C}$ given  in~\cref{fc1,fc1a} guarantees that the ICLM $\Phi$ is trace-preserving, whereas the second condition in~\cref{fc1,fc1a} imply that $\Phi $ is also completely positive.
\end{remark}

In order to prove~\Cref{wc} we first make the following trivial extension of the expression~\cref{decomp}, which is implied by~\Cref{prop8} and~\Cref{cor10}.
\begin{corollary}
	\label{C13}
An ICLM $\Phi \in \End\left[\M(n,\mathbb{C})\right] $ can be expressed as follows:
	\be
	\Phi=\sum_{\gamma \in \widehat{G}} l_{\gamma}\Pi^{\gamma}=\sum_{\alpha \in \Theta}l_{\alpha}\Pi^{\alpha}+\sum_{\gamma \notin \Theta} l_{\gamma}\Pi^{\gamma} \ :\quad l_{\alpha},l_{\gamma}\in \mathbb{C},
	\ee
	where $\widehat{G}$ is the set of all irreps of the group $G$, and where the $l_{\gamma}$, with $\gamma \notin \Theta$, are arbitrary. 
\end{corollary}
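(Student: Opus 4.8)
The plan is to treat this exactly as the surrounding text advertises it — as a trivial extension of the spectral decomposition from \Cref{cor10} — so the argument will be short, and the only ingredient carrying any content is the vanishing of the projectors $\Pi^{\gamma}$ for $\gamma\notin\Theta$.

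First I would recall \Cref{cor10}: every ICLM $\Phi\in\End[\M(n,\mathbb{C})]$ admits the decomposition $\Phi=l_{\id}\Pi^{\id}+\sum_{\alpha\in\Theta,\alpha\neq\id}l_{\alpha}\Pi^{\alpha}=\sum_{\alpha\in\Theta}l_{\alpha}\Pi^{\alpha}$, with coefficients $l_{\alpha}\in\mathbb{C}$ and the $\Pi^{\alpha}$ given by~\cref{Pi_a}. This already accounts for the irreps in $\Theta$, so it remains only to append harmless terms indexed by $\gamma\in\widehat{G}\setminus\Theta$.

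Next I would extend the definition of $\Pi^{\gamma}$ to all $\gamma\in\widehat{G}$ using the same formula~\cref{Pi_a}, and then invoke \Cref{Rprop8} together with the trace computation $\tr\widetilde{\Pi}^{\gamma}=m_{\gamma}|\varphi^{\gamma}|$ stated just before it: for $\gamma\notin\Theta$ the multiplicity $m_{\gamma}$ vanishes, so $\widetilde{\Pi}^{\gamma}$ is a projector of zero trace, hence $\widetilde{\Pi}^{\gamma}=0$. Since $\mat(\Pi^{\gamma})=\widetilde{\Pi}^{\gamma}$ and $\mat$ is a linear isomorphism between $\End[\M(n,\mathbb{C})]$ and $\M(n^{2},\mathbb{C})$ (cf.~\Cref{matRep}), this forces $\Pi^{\gamma}=0$ in $\End[\M(n,\mathbb{C})]$ for every $\gamma\notin\Theta$.

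Finally I would conclude as follows: because $\Pi^{\gamma}=0$ for all $\gamma\notin\Theta$, the sum $\sum_{\gamma\notin\Theta}l_{\gamma}\Pi^{\gamma}$ is identically the zero map for \emph{any} choice of scalars $l_{\gamma}\in\mathbb{C}$, so it can be added to the decomposition of \Cref{cor10} without altering $\Phi$, yielding $\Phi=\sum_{\alpha\in\Theta}l_{\alpha}\Pi^{\alpha}+\sum_{\gamma\notin\Theta}l_{\gamma}\Pi^{\gamma}=\sum_{\gamma\in\widehat{G}}l_{\gamma}\Pi^{\gamma}$, with the $l_{\gamma}$ for $\gamma\notin\Theta$ completely unconstrained. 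I do not anticipate any genuine obstacle; the one point worth stating explicitly is that the adjoined projectors are literally the zero operator — which is precisely the content of \Cref{Rprop8} — rather than merely redundant.
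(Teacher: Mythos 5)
Your proposal is correct and follows exactly the route the paper intends: the corollary is presented there as a trivial extension of \Cref{cor10} justified by \Cref{Rprop8} (a trace-zero projector is zero, hence $\widetilde{\Pi}^{\gamma}=0$ and therefore $\Pi^{\gamma}=0$ for $\gamma\notin\Theta$), which is precisely your argument. Nothing is missing.
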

 From the statement of~\Cref{C13}  it follows that the
Choi-Jamio{\l}kowski image of the ICQC $\Phi $ may be written also
using the sum over all $\gamma \in \widehat{G}$:
\be
\begin{split}
J\left( \Phi \right) &=J\left( \sum_{\gamma \in \widehat{G}}l_{\gamma }\Pi^{\gamma }\right) =\frac{1}{|G|}
\sum_{ij}E_{ij}\otimes \sum_{g\in G}\left( \sum_{\gamma \in \widehat{G}}l_{\gamma
}|\varphi^{\gamma}|\chi ^{\gamma
}\left( g^{-1}\right) \right) U_{C(i)}(g)U_{R(j)}\left( g^{-1}\right),
\end{split} 
\ee
with $l_{\id}=1$. Obviously $J(\Pi^{\gamma })=0$ if $\gamma \notin \Theta $, and one can
check by a direct calculation that the corresponding eigenvalues vanish, i.e.
\be
\mu_{i}(\gamma ,\beta )=\frac{|\varphi^{\gamma}|}{|G|}\sum_{g\in
	G}\chi ^{\gamma }\left( g^{-1}\right) \left| \tr\left( V_{i}^{\beta }U^{\dagger}(g)\right) \right| ^{2}=0,\quad \quad
\beta \in \Theta , \ i=1,\ldots,|\varphi^{\beta}|. 
\ee
Finally, the inequalities given in~\cref{sol} of~\Cref{thm16} may be equivalently formulated
as follows:
\be
\label{92a}
\sum_{g\in G}\left( \sum_{\gamma \in \widehat{G}}l_{\gamma }|\varphi^{\gamma}|
\chi ^{\gamma }\left( g^{-1}\right) \right) \left| \tr\left( V_{i}^{\beta }U^{\dagger}(g)\right) \right| ^{2}=\sum_{g\in
G}x(g)\left| \tr\left( V_{i}^{\beta }U^{\dagger}(g)\right) \right| ^{2}\geq 0,
\ee
where $\beta \in \Theta, \ i=1,\ldots,|\varphi^{\beta}|$, and 
$x(g)\equiv \sum_{\gamma \in \widehat{G}}l_{\gamma }|\varphi^{\gamma}|\chi ^{\gamma }\left( g^{-1}\right)$. This form of the inequalities
is more convenient to deal with. In particular, the form of the
coefficients $x(g)$ allows us to apply the
orthogonality relations for irreps. Now we focus our attention
on these coefficients. The idea of the construction of the solution of
the inequalities from~\cref{92a} is to find $\{l_{\gamma }:$ $\gamma \in \widehat{G}\}$
such that the coefficients $x(g)$ are non-negative for any $g\in G$, which
obviously implies that the inequalities from~\cref{92a} are satisfied. In order to
study the properties of the coefficients $x(g)$  let us introduce the
following matrices:

\begin{definition}
	\label{AD1}
	Let $T=(t_{g\gamma })\equiv \left( \chi ^{\gamma }\left( g^{-1}\right) \right) \in \M\left( |G|\times |\widehat{G}|,\mathbb{C}\right)$, where $g\in G$ and $\gamma \in \widehat{G}$ be a
	rectangular matrix whose rows are indexed by the group elements $g\in G$ and
	columns are indexed by irreps $\gamma \in \widehat{G}$. We
	assume that the group elements are ordered in a way such that elements belonging to the same conjugacy class are grouped together. Let $D\in \M(|\widehat{G%
	}|,\mathbb{C}) $ be a square matrix defined as follows: 
	\be
	D=\left( d_{\gamma \delta })\equiv \operatorname{diag}(|\varphi^{\gamma_1}|,|\varphi^{\gamma_2}|,\ldots,|\varphi^{\gamma_r}|\right)  \ : \ \gamma _{i}\in \widehat{G},
	\ee
	and let us define the following set of vectors: 
	\be
	\widehat{L}\equiv(l_{\gamma })\in 	\mathbb{C}^{\left| \widehat{G}\right| },\qquad F\equiv\left( f(g)\right) \in \mathbb{C}^{\left| G\right| }.
	\ee
\end{definition}

\bigskip Using Schur's orthogonatlity relations for irreducible characters given
in~\cref{or} and~\cref{or2}, one can show that the matrix $T$ satisfies the
relations given in the following lemma. They can be verified by direct computations. It is well-known that for finite
	groups the number of inequivalent irreps is equal to the number
	of conjugacy classes~\cite{FHa}. This allows us to label the congugacy classes $K(g)$, $g \in G$ with the indices $\gamma \in \widehat{G}$ used for the irreps. Hence, we interchangeably denote the conjugacy classes as $K(g)$ or $K_\gamma$.
\begin{lemma}
	\label{lemma36}
	Suppose that we are given with the matrix $T=(t_{g\gamma })=\left( \chi ^{\gamma }\left( g^{-1}\right) \right) \in \M\left( |G|\times |\widehat{G}|,\mathbb{C}\right) $, where the group elements $g\in G$  are ordered in such a way that the
	conjugated elements of the group are all grouped together. Then,
	\be
	\label{sec}
	\frac{1}{|G|}T^{\dagger}T=\text{\noindent
		\(\mathds{1}\)}_{\left| \widehat{G}\right| }\in \M(|\widehat{G}|,\mathbb{C}),\qquad \frac{1}{|G|}TT^{\dagger}=\bigoplus _{\gamma \in \widehat{G}}\frac{1}{%
		|K_{\gamma }|}\mathbb{I}_{|K_{\gamma }|}\in \M(|G|,\mathbb{C}),
	\ee
where the matrices $\mathbb{I}_{|K_{\gamma }|}\in \M(|K_{\gamma }|,\mathbb{C})$ and all their entries are equal to $1$ and $K_{\gamma }:$ $\gamma \in 
	\widehat{G}$ are classes of conjugated elements. The matrix $\frac{1}{|G|}TT^{\dagger}$  is a block-diagonal matrix with diagonal blocks equals to the
	matrices $\mathbb{I}_{|K_{\gamma }|}$ and all remaining blocks are equal to zero.
\end{lemma}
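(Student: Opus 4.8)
The plan is to compute the two matrix products entrywise and to recognize each of them as one of the Schur orthogonality relations already recorded in \cref{or} and \cref{or2}. The only preliminary observation I would need is that, since every $\varphi^{\gamma}$ is a unitary representation, $\chi^{\gamma}(g^{-1}) = \overline{\chi^{\gamma}(g)}$; consequently the $(g,\gamma)$ entry of $T^{\dagger}$ (which carries a transpose together with an entrywise complex conjugation) equals $\overline{t_{g\gamma}} = \overline{\chi^{\gamma}(g^{-1})} = \chi^{\gamma}(g)$.

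For the first identity I would write
\[
(T^{\dagger}T)_{\gamma\delta} = \sum_{g\in G}\overline{t_{g\gamma}}\, t_{g\delta} = \sum_{g\in G}\chi^{\gamma}(g)\,\chi^{\delta}(g^{-1}) = \sum_{g\in G}\chi^{\gamma}(g^{-1})\,\chi^{\delta}(g),
\]
the last step being the change of variable $g\mapsto g^{-1}$. Setting $j=i$ and $l=k$ in \cref{or} and summing over $i$ and $k$ gives $\sum_{g\in G}\chi^{\gamma}(g^{-1})\chi^{\delta}(g) = |G|\,\delta^{\gamma\delta}$, since the factor $|\varphi^{\gamma}|$ produced by that summation cancels the $1/|\varphi^{\gamma}|$ appearing on the right-hand side of \cref{or}. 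Hence $\tfrac{1}{|G|}T^{\dagger}T = \mathds{1}_{|\widehat{G}|}$.

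For the second identity I would similarly write
\[
(TT^{\dagger})_{gh} = \sum_{\gamma\in\widehat{G}} t_{g\gamma}\,\overline{t_{h\gamma}} = \sum_{\gamma\in\widehat{G}} \chi^{\gamma}(g^{-1})\,\chi^{\gamma}(h),
\]
and then invoke the column orthogonality relation \cref{or2}, which gives $\tfrac{|G|}{|K(g)|}\,\delta_{K(g)K(h)}$. Thus $\tfrac{1}{|G|}(TT^{\dagger})_{gh}$ equals $1/|K(g)|$ when $g$ and $h$ lie in a common conjugacy class and $0$ otherwise. Since the rows and columns of $T$ are ordered so that elements of a given conjugacy class are consecutive, the resulting matrix is block diagonal; using the standard bijection between inequivalent irreps and conjugacy classes to index the blocks by $\gamma\in\widehat{G}$, the block attached to $K_{\gamma}$ is the $|K_{\gamma}|\times|K_{\gamma}|$ all-ones matrix $\mathbb{I}_{|K_{\gamma}|}$ scaled by $1/|K_{\gamma}|$, which is exactly $\bigoplus_{\gamma\in\widehat{G}}\tfrac{1}{|K_{\gamma}|}\mathbb{I}_{|K_{\gamma}|}$.

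I do not expect a genuine obstacle here; the one place that demands care is the bookkeeping of the complex conjugations — remembering that $T^{\dagger}$ combines the transpose with entrywise conjugation, and that $\overline{\chi^{\gamma}(g^{-1})} = \chi^{\gamma}(g)$ — together with matching the stated ordering convention on $G$ to the block-diagonal description. Everything else is an immediate application of \cref{or} and \cref{or2}.
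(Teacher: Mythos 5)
Your proof is correct and follows exactly the route the paper intends: the paper states the lemma "can be verified by direct computations" from the orthogonality relations \eqref{or} and \eqref{or2}, and your entrywise computation (contracting \eqref{or} over diagonal indices for $T^{\dagger}T$, and applying \eqref{or2} directly for $TT^{\dagger}$) is precisely that verification. The only nitpick is notational: the entry of $T^{\dagger}$ you compute is its $(\gamma,g)$ entry rather than its $(g,\gamma)$ entry, but this does not affect the argument.
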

Using these matrices we may write the set of equations $x(g)=\sum_{\gamma
	\in \widehat{G}}l_{\gamma }|\varphi^{\gamma}|\chi ^{\gamma
}\left( g^{-1}\right) :g\in G$ in the matrix form:
\be
X=(x(g))=TD\widehat{L}\Leftrightarrow x(g)=\sum_{\gamma ,\delta \in \widehat{G}%
}t_{g\gamma }d_{\gamma \delta }l_{\delta }=\sum_{\gamma \in \widehat{G}%
}l_{\gamma }|\varphi^{\gamma}|\chi ^{\gamma }\left( g^{-1}\right) \ : \ g\in G.
\ee
Now we are in the position to prove~\Cref{wc}.
\begin{proof}[Proof of~\Cref{wc}]
	We see that 
	\be
	\widehat{L}=\frac{1}{|G|}D^{-1}T^{\dagger}F\Leftrightarrow l_{\gamma }=\frac{1}{|G|}\frac{1}{|\varphi^{\gamma}| }\sum_{g\in G}\chi ^{\gamma }(g)f(g):\gamma \in \widehat{G},
	\ee
	then, using~\Cref{AD1} and the second statement of~\Cref{lemma36}, we find that the
	coefficients 
	\be
	\label{blabla}
	x(g)\equiv \sum_{\gamma \in \widehat{G}}l_{\gamma }|\varphi^{\gamma}|\chi ^{\gamma }(g^{-1}):g\in G
	\ee
	occurring in~\cref{92a} become%
	\be
	\label{93}
	\left( x(g)\right) =TD\widehat{L}=\frac{1}{|G|}TT^{\dagger}F\Rightarrow \forall g\in G \ 
	x(g)=\frac{1}{\left|K(g) \right| }\sum_{h\in K(g)}f(h).
	\ee
	From the above we see that for any $g\in G$ the coefficient $x(g)$ is non-negative by assumption on
	\ the function $f:G\rightarrow \mathbb{C}$ and therefore the inequalities given in~\cref{92a} and~\cref{sol} are satisfied because on the 
	$LHS$ of these inequalities we have the sum of non-negative numbers. Summarizing, for a
	given function $f:G\rightarrow \mathbb{C}$, which satisfies the assumptions of~\Cref{wc} we get a set of $|\widehat{G}|
	$ numbers
	\be
	l_{\gamma }=\frac{1}{|G|}\frac{1}{|\varphi^{\gamma}| }\sum_{g\in G}\chi ^{\gamma
	}(g)f(g):\gamma \in \widehat{G}
	\ee
	for which coefficients $x(g)$ from~\cref{blabla} are automatically non-negative. Since the sum in the decomposition of an ICQC $\Phi =\sum_{\alpha \in \Theta }l_{\alpha }\Pi^{\alpha}$ runs over those $\alpha$ which occur in the decomposition~\cref{eq15} of $U\ot U^c$, we can restrict ourselves to the subset $\{l_{\alpha} \ : \ \alpha \in \Theta\}$.
\end{proof}

Now we illustrate~\Cref{wc} by two examples:
\begin{example}
	If $f:G\rightarrow \mathbb{C}$ is such that $\forall g\in G$ \ $f(g)=1$, which obviously satisfies the
	conditions of~\Cref{wc}, then using~\cref{93} we obtain:
	\be
	l_{\id}=1,\quad l_{\alpha }=0,\quad \alpha \neq \id
	\ee
	and $\Phi =\Pi^{\id}$.
\end{example}

\begin{example}
	It is clear that any function $f:G\rightarrow \mathbb{C}$ such that $\forall g\in G$ \ $f(g)\geq 0$ and  $\frac{1}{|G|}\sum_{g\in
		G}f(g)=1$ which satisfies  the conditions of~\Cref{wc} defines a probability distribution. From this it follows that for any probability distribution on a finite group $G$, one can define an ICQC.
\end{example}

\subsection{Quaternion group $Q$}
\label{Quat}
The quaternion group $Q=\left\lbrace \pm Q_{\operatorname{e}},\pm Q_1, \pm Q_2, \pm Q_3 \right\rbrace $ is a non-abelian group of order eight satysfying
\be
Q=\left\langle -Q_{\operatorname{e}},Q_1,Q_2,Q_3 \ | \ \left(-Q_{\operatorname{e}} \right)^2=Q_{\operatorname{e}}, Q_1^2=Q_2^2=Q_3^2=Q_1Q_2Q_3=-Q_{\operatorname{e}} \right\rangle.
\ee
It possesses five inequivalent irreducible representations which we label by $\id,t_1,t_2,t_3,t_4$, respectively. However, only one of them, labelled by $t_4$, has dimension greater than one and its dimension is equal to two. 
It is known that the quaternion group can be represented as a subgroup of $GL(2,\mathbb{C})$. The matrix representation $R:Q\rightarrow GL(2,\mathbb{C})$ is given by
\be
Q_{\operatorname{e}}=\begin{pmatrix}
	1 & 0 \\ 0 & 1
\end{pmatrix}, \  Q_1=\begin{pmatrix}
\operatorname{i} & 0\\ 0 & -\operatorname{i}
\end{pmatrix}, \ Q_2=\begin{pmatrix}
0 & 1\\ -1 & 0
\end{pmatrix}, \ Q_3=\begin{pmatrix}
0 & \operatorname{i} \\ \operatorname{i} & 0
\end{pmatrix},
\ee
where $\operatorname{i}^2=-1$. In~\cref{t} we present values of the characters for all irreducible representaions of the group $Q$.

\begin{table}[h]
	\centering
	\begin{tabular}{|c|c|c|c|c|c|c|c|c|}
		\hline
		$Q$ & $Q_{\operatorname{e}}$ & $-Q_{\operatorname{e}}$ & $Q_1$ & $Q_2$ & $Q_3$ & $-Q_1$ & $-Q_2$ & $-Q_3$\\
		\hline
		$\chi^{\id}$ & 1 & 1 & 1 & 1 & 1 & 1 & 1 & 1\\
		\hline
		$\chi^{t_1}$ & 1 & 1 & -1 & 1 & -1 & -1 & 1 & -1\\
		\hline
		$\chi^{t_2}$ & 1 & 1 & 1 & -1 & -1 & 1 & -1 & -1\\
		\hline
		$\chi^{t_3}$ & 1 & 1 & -1 & -1 & 1 & -1 & -1 & 1\\
		\hline
		$\chi^{t_4}$ & 2 & -2 & 0 & 0 & 0 & 0 & 0 & 0\\
		\hline
	\end{tabular}
	\caption{Table of characters for the quaternion group $Q$.}
	\label{t}
\end{table}
Now we can construct an ICQC  for the quaternion group $Q$ with respect to the two-dimensional irrep $U=t_4$. In this case the decomposition given by~\cref{eq15} takes a form:
\begin{equation}
U \ot U^c=U^{\id} \oplus U^{t_1} \oplus U^{t_2}\oplus U^{t_3}, \quad \dim\left[\Int_{Q} \left(U\otimes U^c\right) \right]=4,
\end{equation}
so $\Theta=\left\lbrace \id,t_1,t_2,t_3\right\rbrace $. The matrix representation $\widetilde{\Phi}^{t_4}$ of the ICLM $\Phi^{t_4}$ (see~\Cref{cor10}) is given by the following expression:
\be
\label{uuu1}
\widetilde{\Phi}^{t_4}=l_{t_{\id}}\widetilde{\Pi}^{\id}+l_{t_1}\widetilde{\Pi}^{t_1}+l_{t_2}\widetilde{\Pi}^{t_2}+l_{t_3}\widetilde{\Pi}^{t_3}=\frac{1}{2}\begin{pmatrix}
	l_{t_{\id}}+l_{t_2} & 0 & 0 & 1-l_{t_2}\\
	0 & l_{t_1}+l_{t_3} & l_{t_3}-l_{t_1} & 0\\
	0 & l_{t_3}-l_{t_1} & l_{t_1}+l_{t_3} & 0\\
	1-l_{t_2} & 0 & 0 & l_{t_{\id}}+l_{t_2}
\end{pmatrix},
\ee 
where $l_{t_{\id}},l_{t_1},l_{t_2}, l_{t_3}\in \mathbb{R}$ (see explanation below~\Cref{hem}).  From Proposition~\ref{prop12}, we know
also that such a map $\Phi^{t_4}$ is trace-preserving if and only if $l_{t_{\id}}=1$, so its matrix representation in~\cref{uuu1} reduces to:
\be
\label{uuu}
\widetilde{\Phi}^{t_4}=\widetilde{\Pi}^{\id}+l_{t_1}\widetilde{\Pi}^{t_1}+l_{t_2}\widetilde{\Pi}^{t_2}+l_{t_3}\widetilde{\Pi}^{t_3}=\frac{1}{2}\begin{pmatrix}
	1+l_{t_2} & 0 & 0 & 1-l_{t_2}\\
	0 & l_{t_1}+l_{t_3} & l_{t_3}-l_{t_1} & 0\\
	0 & l_{t_3}-l_{t_1} & l_{t_1}+l_{t_3} & 0\\
	1-l_{t_2} & 0 & 0 & 1+l_{t_2}
\end{pmatrix}.
\ee  
A direct calculation gives the following explicit formula for the Choi-Jamio{\l}kowski image
of $\Phi^{t_4}$ given by~\cref{xx}:
\be
J\left(\Phi^{t_4} \right)=\sum_{i,j=1}^2 E_{ij}\ot \Phi^{t_4}\left(E_{ij} \right)=\frac{1}{2}\begin{pmatrix}
	1+l_{t_2} & 0 & 0 & l_{t_1}+l_{t_3}\\
	0 & 1-l_{t_2} & l_{t_3}-l_{t_1} & 0\\
	0 & l_{t_3}-l_{t_1} & 1-l_{t_2} & 0\\
	l_{t_1}+l_{t_3} & 0 & 0 & 1+l_{t_2}
\end{pmatrix}.
\ee
Further, to ensure that the trace-preserving ICLM $\Phi^{t_4}$ is completely positive (so that it is an ICQC) we require that $J\left(\Phi^{t_4} \right)\geq 0$. This yields:
\be
\label{cc1}
\arraycolsep=1.4pt\def\arraystretch{1.4}
\begin{array}{ll}
	\multirow{4}{*}{$J\left(\Phi^{t_4}\right)\geq 0 \  \Leftrightarrow \  $}& \epsilon^{t_1}=\frac{1}{2}\left(1+l_{t_1}-l_{t_2}-l_{t_3} \right)\geq 0, \\ & \epsilon^{t_2}=\frac{1}{2}\left(1-l_{t_1}+l_{t_2}-l_{t_3} \right)\geq 0, \\ & \epsilon^{t_3}=\frac{1}{2}\left(1-l_{t_1}-l_{t_2}+l_{t_3} \right)\geq 0, \\ & \epsilon^{\id}=\frac{1}{2}\left(1+l_{t_1}+l_{t_2}+l_{t_3} \right)\geq 0. 
\end{array}
\ee
The expressions for the eigenvalues of $J(\Phi^{t_4})$ from~\cref{cc1} may be written in the matrix form using geometrical approach presented in~\Cref{geometry}. Namely constructing matrix $M$ and the vectors $E$ and $L$ as in~\cref{sets} we get:
\be
\left( 
\begin{array}{c}
	\epsilon^{\id} \\ 
	\epsilon^{t_1} \\ 
	\epsilon^{t_2} \\ 
	\epsilon^{t_3}%
\end{array}%
\right) =\frac{1}{2}\left( 
\begin{array}{cccc}
	1 & 1 & 1 & 1 \\ 
	1 & 1 & -1 & -1 \\ 
	1 & -1 & 1 & -1 \\ 
	1 & -1 & -1 & 1%
\end{array}%
\right) \left( 
\begin{array}{c}
	l_{id} \\ 
	l_{t_1} \\ 
	l_{t_2} \\ 
	l_{t_3}%
\end{array}%
\right).
\ee
It easy to check that the matrix $M$ is invertible (exactly it is
orthogonal i.e.~we have $M^{T}=M$). From~\Cref{cor36} it follows that:
\be
\label{ls}
\left( 
\begin{array}{c}
	l_{\id} \\ 
	l_{t_1} \\ 
	l_{t_2} \\ 
	l_{t_3}%
\end{array}%
\right) =\frac{1}{2}\left( 
\begin{array}{cccc}
	1 & 1 & 1 & 1 \\ 
	1 & 1 & -1 & -1 \\ 
	1 & -1 & 1 & 1 \\ 
	1 & -1 & -1 & 1%
\end{array}%
\right) \left( 
\begin{array}{c}
	\epsilon^{\id} \\ 
	\epsilon^{t_1} \\ 
	\epsilon^{t_2} \\ 
	\epsilon^{t_3}%
\end{array}%
\right).
\ee
In particular, because of the explanations given below~\Cref{hem} we have $\forall \alpha\in \Theta \ l_{\alpha}\in \mathbb{R}$, and thanks to~\cref{sumE} in~\Cref{cor15} we are allowed to write:
\be
\label{lid}
l_{\id}=\frac{1}{2}\left( \epsilon^{\id}+\epsilon^{t_1}+\epsilon^{t_2}+\epsilon^{t_3}\right)=1.
\ee

Equations~\ref{ls} and~\ref{lid} describe the all possible values of eigenvalues $
L=(l_{\alpha })$ of ICLM $\Phi =\sum_{\alpha \in \Theta }l_{\alpha }\Pi
^{\alpha }$ for which $\Phi $ is trace-preserving. We see here that the case of
quaternion group $Q$ is very particular, because we have equality $|U|^{2}=4=|\Theta |$, and therfore the eigenvalues $L=(l_{\alpha })$ are
generated by all points $E=(\epsilon^{\beta})$ of the scaled simplex $\Sigma(|U|)$ (see~\cref{contracted} and~\Cref{intersection}).

Further from conditions $\epsilon^{\beta}\geq 0$ and~\cref{lid} the set of inequalities in~\cref{cc1} reduces to:
\be
|l_{\alpha}|\leq \frac{1}{2}\sum_{\beta \in \Theta}|\epsilon^{\beta}|=\frac{1}{2}%
\sum_{\beta \in \Theta}\epsilon^{\beta}=1,\quad \forall \alpha\in \Theta \setminus  \{\id\},
\ee
so all $l_{\alpha}$ where $\alpha\in \Theta \setminus \{\id\}$ are included in a three dimensional cube. 
The allowed values of the parameters $l_{t_1},l_{t_2},l_{t_3}$ satisfying the above constraints (for which $\Phi^{t_4}$ is an ICQC) are graphically presented on the left panel of~\cref{S31}.

In the next step we construct Kraus operators for the ICQC $\Phi^{t_4}$. From the general considerations given in~\Cref{prop14}, we are able to compute orthogonalized eigensystem of the Choi-Jamio{\l}kowski image $J(\Phi^{t_4})$.
Then, using~\Cref{Kraus} and its particular form given by~\Cref{KK1}, we can construct Kraus operators for the channel $\Phi^{t_4}$:
\be
\begin{split}
	K(t_1)&=\sqrt{\frac{\epsilon^{t_4}}{2}}\begin{pmatrix} 0 & 1\\ -1 & 0 \end{pmatrix}, \qquad K(t_2)=\sqrt{\frac{\epsilon^{t_4}}{2}}\begin{pmatrix}
		-1 & 0 \\ 0 & 1
	\end{pmatrix},\\
	K(t_3)&=\sqrt{\frac{\epsilon^{t_4}}{2}}\begin{pmatrix}
		0 & 1 \\
		1 & 0
	\end{pmatrix},\qquad K(\id)=\sqrt{\frac{\epsilon^{\id}}{2}}\begin{pmatrix}
	1 & 0\\
	0 & 1
\end{pmatrix}.
\end{split}
\ee
Moreover we have $\sum_{\beta \in \Theta}K(\beta)K^{\dagger}(\beta)=\text{\noindent
	\(\mathds{1}\)}$ and $\sum_{\beta \in \Theta}K^{\dagger}(\beta)K(\beta)=\text{\noindent
	\(\mathds{1}\)}$. Hence, the channel $\Phi^{t_4}$ is unital and trace-preserving.

\subsection{Symmetric group $S(3)$}
\label{ex-S(3)}

In the case of $G=S(3)$ we have three inequivalent irreducible representations (see~\cite{FHa},~\cite{JJF},~\cite{NS}), one-dimensional identity representation (denoted by $\id$), one-dimensional sign representation (denoted by $\sgn$) and two-dimensional nontrivial representation (denoted by $\lambda$). Here we construct a trace-preserving ICLM and in particular ICQC for the irrep $U$ characterised by the  $
\lambda$ by using so called $\epsilon-$representation \cite{NS}. The generators of this representation are given by:
\be
\label{gen}
\varphi^{\lambda}(12)=\begin{pmatrix}
	0 & 1\\ 1 & 0
\end{pmatrix},\quad \varphi^{\lambda}(23)=\begin{pmatrix}
0 & w^2\\ w & 0
\end{pmatrix},
\ee
where $w=\operatorname{exp}\left(\frac{2\pi \operatorname{i}}{3} \right) $, and by $(12),(23)$ we denote transpositions between respective elements. Decomposition given by~\cref{eq15} in this case takes a form:
\be
U\otimes U^{c}=U^{\id}\oplus U^{\sgn}\oplus U,\quad \text{and}\quad \dim \left[ \Int_{S(3)}\left( U\otimes
U^{c}\right)\right] =3. 
\ee
We see that all possible irreps occour in the decomposition of $U\ot U^c$, so $\Theta =\left\lbrace \id, \sgn, \lambda\right\rbrace $. The matrix representation $\widetilde{\Phi}_{\epsilon}$ of the trace-preserving ICLM $\Phi_{\epsilon}$ (see~\Cref{prop12}) is given by the following expression:
\be
\label{xx}
\widetilde{\Phi}_{\epsilon} =\widetilde{\Pi}^{\id}+l_{\sgn}\widetilde{\Pi}^{\sgn}+l_{\lambda }\widetilde{\Pi}^{\lambda}=\frac{1}{2}\left( 
\begin{array}{cccc}
	1+l_{\sgn} & 0 & 0 & 1-l_{\sgn} \\ 
	0 & 2l_{\lambda } & 0 & 0 \\ 
	0 & 0 & 2l_{\lambda } & 0 \\ 
	1-l_{\sgn} & 0 & 0 & 1+l_{\sgn}
\end{array}
\right). 
\ee
where $l_{\sgn},l_{\lambda}\in \mathbb{R}$ (see explanation below~\Cref{hem}). The corresponding Choi-Jamio{\l}kowski image is given by:
\be
\label{41}
J\left( \Phi_{\epsilon} \right) =\sum_{i,j=1}^2 E_{ij}\otimes \Phi_{\epsilon}\left(E_{ij} \right)=\left( 
\begin{array}{cccc}
	\frac{1}{2}(1+l_{\sgn}) & 0 & 0 & l_{\lambda } \\ 
	0 & \frac{1}{2}(1-l_{\sgn}) & 0 & 0 \\ 
	0 & 0 & \frac{1}{2}(1-l_{\sgn}) & 0 \\ 
	l_{\lambda } & 0 & 0 & \frac{1}{2}(1+l_{\sgn})
\end{array}
\right).
\ee
Similarly as in~\Cref{Quat} we can use results from~\Cref{geometry} and construct set of liniear constraints $E=ML$:
\be
\left( 
\begin{array}{c}
	\epsilon^{\id} \\ 
	\epsilon^{\sgn} \\ 
	\epsilon _{1}^{\lambda} \\ 
	\epsilon _{2}^{\lambda}%
\end{array}%
\right) =\frac{1}{2}\left( 
\begin{array}{ccc}
	1 & 1 & 2  \\ 
	1 & 1 & -2  \\ 
	1 & -1 & 0 \\ 
	1 & -1 & 0
\end{array}%
\right) \left( 
\begin{array}{c}
	1 \\ 
	l_{\sgn} \\ 
	l_{\lambda}
\end{array}%
\right).
\ee
This gives us the spectrum of $J(\Phi_{\epsilon})$:
\be
 \left\lbrace \epsilon_1^{\lambda}=\frac{1}{2}(1-l_{\sgn}),\epsilon_2^{\lambda}=\frac{1}{2}(1-l_{\sgn}),\epsilon^{\id}=\frac{1}{2}
(1+l_{\sgn})+l_{\lambda },\epsilon^{\sgn}=\frac{1}{2}(1+l_{\sgn})-l_{\lambda }\right\rbrace.
\ee
From~\Cref{cor36} it follows that 
\be
\left( 
\begin{array}{c}
	1 \\ 
	l_{\sgn} \\ 
	l_{\lambda}
\end{array}%
\right) =\frac{1}{2}\left( 
\begin{array}{cccc}
	1 & 1 & 1 & 1  \\ 
	1 & 1 & -1 & -1  \\ 
	1 & -1 & 0& 0 
\end{array}%
\right)\left(  \begin{array}{c}
	\epsilon^{\id} \\ 
	\epsilon^{\sgn} \\ 
	\epsilon _{1}^{\lambda} \\ 
	\epsilon _{2}^{\lambda}%
\end{array}
\right).
\ee
Conditions $\epsilon_i^{\beta}\geq 0$ and $\frac{1}{2}(\epsilon^{\id}+\epsilon^{\sgn}+\epsilon_1^{\lambda}+\epsilon_2^{\lambda})=1$  yields to the following statement:
\be
\label{ecp}
J\left( \Phi_{\epsilon} \right) \geq 0 \quad \Leftrightarrow \quad 1\geq l_{\sgn}\geq -1,\quad \frac{1}{2}
(1+l_{\sgn})\geq \left| l_{\lambda }\right| . 
\ee
Graphical representation of the constraints given by the above inequalities 
is presented on the left panel of~\Cref{S3a} in the Introduction.
The form of the matrix representation ${\widetilde{\Phi}}_{\epsilon} $ given by \cref{xx} guarantees that the linear map $\Phi_\epsilon$ is covariant with respect to
the irrep $U$ characterised by the irrep $\lambda$ and trace-preserving. Moreover, under the constraints on parameters $l_{\sgn}, l_{\lambda
}$ given by the inequalities (\ref{ecp}), $\Phi_{\epsilon} $ is completely positive map, hence an ICQC.

Finally using~\Cref{KK1} we construct the Kraus operators of the ICQC ${\Phi}_{\epsilon}$, they are of the following form:
\be
\label{expKa}
\begin{split}
	K_1(\lambda)&=\sqrt{\epsilon_1^{\lambda}}X_1^{\dagger}(\lambda)=\sqrt{\epsilon_1^{\lambda}}\begin{pmatrix}
		0 & 0\\ 1 & 0
	\end{pmatrix},\quad K_2(\lambda)=\sqrt{\epsilon_2^{\lambda}}X_2^{\dagger}(\lambda)=\sqrt{\epsilon_2^{\lambda}}\begin{pmatrix}
	0 & 1\\0 & 0
\end{pmatrix},\\
K(\sgn)&=\sqrt{\epsilon^{\sgn}}X_1^{\dagger}(\sgn)=\sqrt{\frac{\epsilon^{\sgn}}{2}}\begin{pmatrix}
	-1 & 0\\ 0 & 1
\end{pmatrix},\quad 	K(\id)=\sqrt{\epsilon^{\id}}X_1^{\dagger}(\id)=\sqrt{\frac{\epsilon^{\id}}{2}}\begin{pmatrix}
1 & 0\\ 0 & 1\end{pmatrix}.
\end{split}
\ee
The channel is easily checked to be unital.

\subsection{Symmetric group $S(4)$}
In this case we have three nontivial irrpes labelled by partitions  $\lambda_1=(3,1)$, $\lambda_2=(2,2)$, $\lambda_3=(2,1,1)$ and two one-dimensional denoted by $\id$ and $\sgn$. Everything in this section is computed in the Young-Yamanouchi representation~\cite{JJF}. Here we construct a trace-preserving ICLM and in particular ICQC for the irrep $U$ characterised by the  partition $
\lambda_1$.
The generators in the Young-Yamanouchi representation for the partition $\lambda_1$ have the following form:
\be
\varphi^{\lambda_1}(12)=\begin{pmatrix} 1 & 0 & 0\\ 0 & 1 & 0\\ 0 & 0  & -1 \end{pmatrix},\quad \varphi^{\lambda_1}(23)=\begin{pmatrix} 1 & 0 & 0\\ 0 & -\frac{1}{2} & \frac{\sqrt{3}}{2}\\ 0 & \frac{\sqrt{3}}{2} & \frac{1}{2} \end{pmatrix},\quad \varphi^{\lambda_1}(34)=\begin{pmatrix}-\frac{1}{3} & \frac{\sqrt{8}}{3} & 0\\ \frac{\sqrt{8}}{3} & \frac{1}{3} & 0\\ 0 & 0 & 1  \end{pmatrix}.
\ee
Decomposition given by~\cref{eq15} in this case reads as:
\begin{equation}
U\otimes U^c=U^{\id}\oplus U^{\lambda_1}\oplus U^{\lambda_2}\oplus U^{\lambda_3},\quad \text{and}\quad \dim\left[\Int_{S(4)}\left(U \otimes U^c \right)  \right]=4. 
\end{equation}
We see that $\sgn$ irrep does not occur in the decomposition of $U\ot U^c$. In this case we have $\Theta=\left\lbrace \id, \lambda_1, \lambda_2,\lambda_3\right\rbrace $ and $\Phi^{\lambda_1}=\Pi^{\id}+l_{\lambda_1}\Pi^{\lambda_1}+l_{\lambda_2}\Pi^{\lambda_2}+l_{\lambda_3}\Pi^{\lambda_3}$ which is the trace-preserving  ICLM (see~\Cref{prop12}).  The matrix representation of the trace-preserving ICLM $\widetilde{\Phi}^{\lambda_1}$ in the Young-Yamanouchi representation is given by:
\begin{equation}
\widetilde{\Phi}^{\lambda_1}=\begin{pmatrix}
a_1 & 0 & 0 & 0 & a_2 & 0 & 0 & 0 & a_2\\

0 & a_3 & 0  & a_4 & a_5 & 0 & 0 & 0 & -a_5\\

0 & 0 & a_3 & 0 & 0 & -a_5 & a_4 & -a_5 & 0\\

0 & a_4 & 0 & a_3 & a_5 & 0 & 0 & 0 & -a_5\\

a_2 & a_5 & 0 & a_5 & a_6 & 0 & 0 & 0 & a_7\\

0 & 0 & -a_5 & 0 & 0 & a_8 & -a_5 & a_9 & 0\\

0 & 0 & a_4 & 0 & 0 & -a_5 & a_3 & -a_5 & 0\\

0 & 0 & -a_5 & 0 & 0 & a_9 & -a_5 & a_8 & 0\\

a_2 & -a_5 & 0 & -a_5 & a_7 & 0 & 0 & 0 & a_6
\end{pmatrix},
\end{equation}
where 
\begin{equation}
\begin{split}
a_1&=\frac{1}{3}(1+2l_{\lambda_1}), \ x_2=\frac{1}{3}(1-l_{\lambda_1}),\\
a_3&=\frac{1}{6}(l_{\lambda_1}+2l_{\lambda_2}+3l_{\lambda_3}), \  a_4=\frac{1}{6}(l_{\lambda_1}+2l_{\lambda_2}-3l_{\lambda_3}),\\
a_5&=\frac{1}{3\sqrt{2}}(l_{\lambda_2}-l_{\lambda_1}), \  a_6=\frac{1}{6}(2+3l_{\lambda_1}+l_{\lambda_2}),\\ a_7&=\frac{1}{6}(2-l_{\lambda_1}-l_{\lambda_2}), \  a_8=\frac{1}{6}(2l_{\lambda_1}+l_{\lambda_2}+3l_{\lambda_3}),\\ a_9&=\frac{1}{6}(2l_{\lambda_1}+l_{\lambda_2}-3l_{\lambda_3}),
\end{split}
\end{equation}
and $l_{\lambda_1},l_{\lambda_2},l_{\lambda_3} \in \mathbb{R}$, since all characters for the group $S(4)$ are real (see explanation below~\Cref{hem}).
The corresponding Choi-Jamio{\l}kowski image (given in ~\Cref{prop13}) can be written as:
\begin{equation}
\label{choiS4}
\begin{split}
J\left(\Phi_Y^{\lambda_1} \right)=\begin{pmatrix}
a_1 & 0 & 0 & 0 & a_3 & 0 & 0 & 0 & a_3\\
0 & a_2 & 0 & a_6 & -a_9 & 0 & 0 & 0 & a_9\\
0 & 0 & a_2 & 0 & 0 & a_9 & a_6 & a_9 & 0\\
0 & a_6 & 0 & a_2 & a_9 & 0 & 0 & 0 & a_9\\
a_3 & -a_9 & 0 & a_9 & a_4 & 0 & 0 & 0 & a_5\\
0 & 0 & a_9 & 0 & 0 & a_7 & a_9 & a_8 & 0\\
0 & 0 & a_6 & 0 & 0 & a_9 & a_2 & a_9 & 0\\
0 & 0 & a_9 & 0 & 0 & a_8 & a_9 & a_7 & 0\\
a_3 & a_9 & 0 & a_9 & a_5 & 0 & 0 & 0 & a_4 
\end{pmatrix}.
\end{split}
\end{equation}
whose eigenvalues are given by (using~\Cref{cor15}):
\begin{equation}
\begin{split}
 \epsilon_1^{\lambda_1}&=\epsilon_2^{\lambda_1}=\epsilon_3^{\lambda_1}=\frac{1}{6}\left(2+3l_{\lambda_1}-2l_{\lambda_2}-3l_{\lambda_3} \right),\\ \epsilon_1^{\lambda_2}&=\epsilon_2^{\lambda_2}=\frac{1}{6}\left(2-3l_{\lambda_1}+4l_{\lambda_2}-3l_{\lambda_3} \right),\\ \epsilon_1^{\lambda_1}&=\epsilon_2^{\lambda_3}=\epsilon_3^{\lambda_3}=\frac{1}{6}\left(2-3l_{\lambda_1}-2l_{\lambda_2}+3l_{\lambda_3} \right),\\ \epsilon^{\id}&=\frac{1}{3}\left(1+3l_{\lambda_1}+2l_{\lambda_2}+3l_{\lambda_3} \right).
\end{split}
\end{equation}
Note that $ J\left( \Phi_Y^{\lambda_1}\right)\geq 0$ if and only if the parameters $l_{\lambda_1},l_{\lambda_2},l_{\lambda_3}$ belong to the shaded region shown in~\Cref{F01}. 
\begin{figure}[h!]
	\begin{center}
		\includegraphics[width=3.5in]{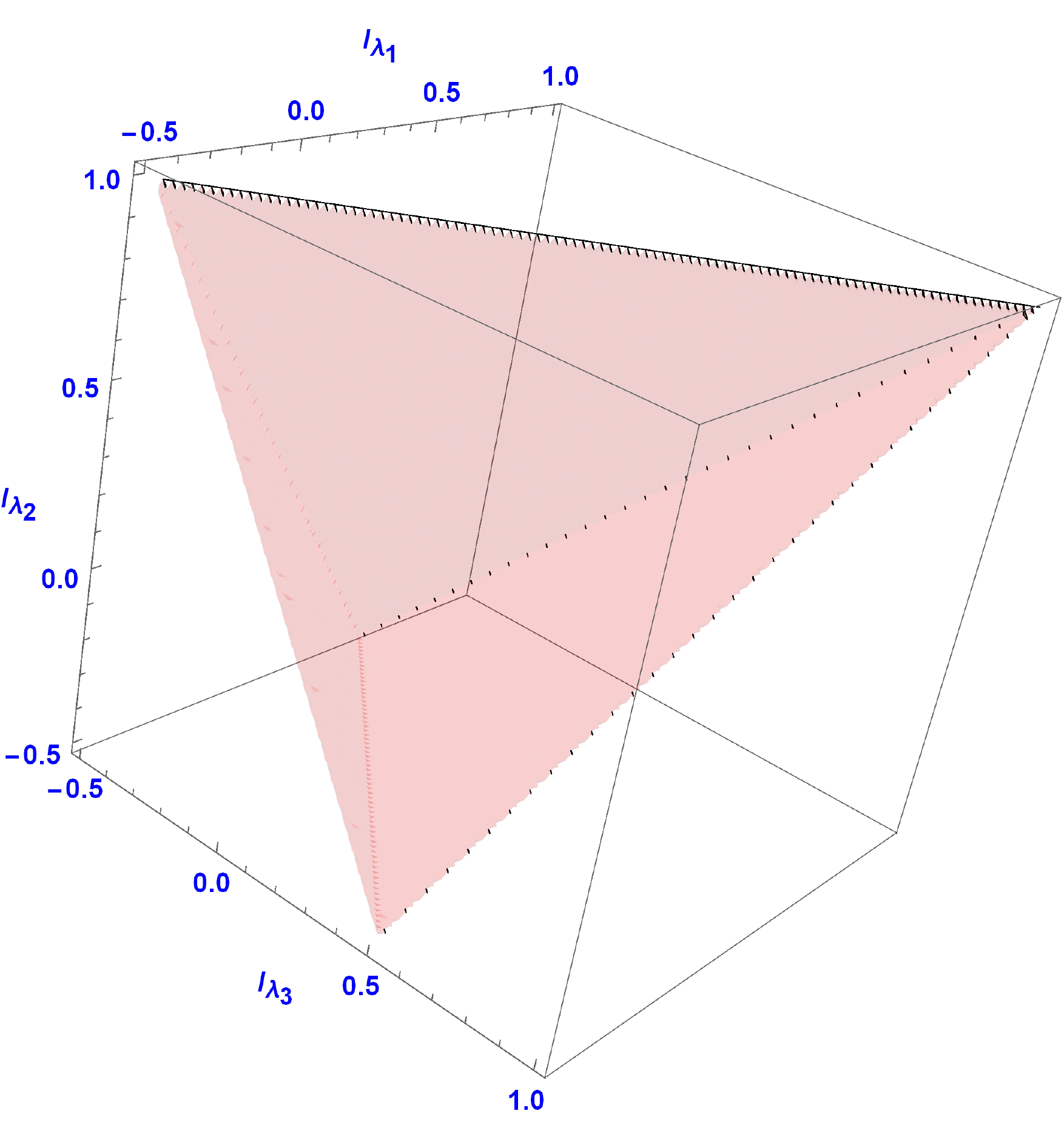}
	\end{center}
	\caption{On this figure we present allowed range of parameters $l_{\lambda_1},l_{\lambda_2},l_{\lambda_3}$ for which the trace-preserving ICLM generated by irreducible representation labelled by $\lambda_1=(3,1)$ is an ICQC. }
	\label{F01}
\end{figure}
The corresponding Kraus operators (see~\Cref{Kraus} and~\Cref{KK1}) are given as follows:
\be
\begin{array}{ll}
	K_1(\lambda_1)=\sqrt{\epsilon_1^{\lambda_1}}\begin{pmatrix}
		-\frac{2}{3} & \frac{1}{3\sqrt{2}} & 0\\
		\frac{1}{3\sqrt{2}} & 0 & 0\\
		0 & 0 & \frac{2}{3}
	\end{pmatrix},& K_2(\lambda_1)=\sqrt{\epsilon_2^{\lambda_1}}\begin{pmatrix}
	0 & 0 & \frac{1}{\sqrt{6}}\\
	0 & 0 & \frac{1}{\sqrt{3}}\\
	\frac{1}{\sqrt{6}} & \frac{1}{\sqrt{3}} & 0
\end{pmatrix},\\ K_3(\lambda_1)=\sqrt{\epsilon_3^{\lambda_1}}\begin{pmatrix}
-\frac{\sqrt{2}}{3} & -\frac{1}{3} & 0\\
-\frac{1}{3} & \frac{1}{\sqrt{2}} & 0\\
0 & 0 & -\frac{1}{3\sqrt{2}}
\end{pmatrix},&
K_1(\lambda_2)=\sqrt{\epsilon_1^{\lambda_2}}\begin{pmatrix}
	0 & -\frac{1}{\sqrt{3}} & 0\\
	-\frac{1}{\sqrt{3}} & -\frac{1}{\sqrt{6}} & 0\\
	0 & 0 & \frac{1}{\sqrt{6}}	\end{pmatrix},\\ K_2(\lambda_2)=\sqrt{\epsilon_2^{\lambda_2}}\begin{pmatrix}
	0 & 0 & -\frac{1}{\sqrt{3}}\\
	0 & 0 & \frac{1}{\sqrt{6}}\\
	-\frac{1}{\sqrt{3}} & \frac{1}{\sqrt{6}} & 0
\end{pmatrix},&
K_1(\lambda_3)=\sqrt{\epsilon_1^{\lambda_3}}\begin{pmatrix}
	0 & 0 & 0\\
	0 & 0 & -\frac{1}{\sqrt{2}}\\
	0 & \frac{1}{\sqrt{2}} & 0
\end{pmatrix},\\
K_2(\lambda_3)=\sqrt{\epsilon_2^{\lambda_3}}\begin{pmatrix}
	0 & 0 & -\frac{1}{\sqrt{2}}\\
	0 & 0 & 0\\
	\frac{1}{\sqrt{2}} & 0 & 0
\end{pmatrix},& K_3(\lambda_3)=\sqrt{\epsilon_3^{\lambda_3}}\begin{pmatrix}
0 & -\frac{1}{\sqrt{2}} & 0\\
\frac{1}{\sqrt{2}} & 0 & 0\\
0 & 0 & 0
\end{pmatrix},\\ K(\id)=\sqrt{\epsilon_1^{\id}}\begin{pmatrix}
\frac{1}{\sqrt{3}} & 0 & 0\\
0 & \frac{1}{\sqrt{3}} & 0\\
0 & 0 & \frac{1}{\sqrt{3}}
\end{pmatrix}.
\end{array}
\ee
One can check, that the resulting channel is unital and trace-preserving.

\subsection{Families of entanglement breaking ICQCs}
\label{EB}
In this section we present two families of entanglement breaking ICQCs based on $S(3)$ and the quaternion group $Q$. 
Let
\be
\label{Qset}
\mathcal{S}(\mathcal{H}):=\{\rho \in \mathcal{B}(\mathcal{H}) \ | \ \rho \geq 0, \tr \rho=1\},
\ee
denote the set of all states (density matrices) acting on the Hilbert space 
$\mathcal{H}\simeq \mathbb{C}^d$. Here $\mathcal{B}(\mathcal{H})$ denotes the algebra of all linear operators acting on $\mathcal{H}$. 
Suppose now that we are dealing with two finite dimensional Hilbert spaces $\mathcal{H}_A,\mathcal{H}_B$. 

A bipartite state $\rho_{AB} \in \mathcal{S}(\mathcal{H}_A\ot \mathcal{H}_B)$ belongs to the set of separable states $\mathcal{SEP}$ if it can be written as $\rho_{AB}=\sum_i p_i \rho_i^A \ot \sigma_i^B$, where $\rho_i^A,\sigma_i^B$ are states on $\mathcal{H}_A$ and $\mathcal{H}_B$ respectively, and $p_i$ are some positive numbers satisfying $\sum_i p_i=1$. Otherwise the state $\rho_{AB}$ is entangled.
A given quantum channel $\Phi$ (not necessarily an ICQC) is entanglement breaking (EB)~\cite{EBB} if and only if its Choi-Jamio{\l}kowski image given by
\be
\label{Choisep}
\left( \text{\noindent
	\(\mathds{1}\)}\ot \Phi \right)\left( |\psi^+\>\<\psi^+|\right) \equiv J\left(\Phi \right)  
\ee
is separable for $|\psi^+\>=\frac{1}{\sqrt{d}}\sum_{i}|ii\>$. In general we do not have a unique criterion for checking separability, but it is known that in the case of quantum states on $\mathcal{S}(\mathbb{C}^2\ot\mathbb{C}^2)$ and $\mathcal{S}(\mathbb{C}^2\ot \mathbb{C}^3)$ necessary and sufficient conditions for separability are given in terms of partial transposition $\text{\noindent
	\(\mathds{1}\)}_A\ot T_B$, where $T_B$ denotes standard transposition on $\mathcal{H}_B$. Namely, we have that $\sigma_{AB}$ is separable if and only if it has a positive partial transpose (PPT), i.e.~$(\text{\noindent
	\(\mathds{1}\)}_A\ot T_B)\sigma_{AB} \geq 0$~\cite{SepHHH,SepAP}. Since the maximal possible dimension of irreps of both $S(3)$ and $Q$ is two, we can directly apply the above mentioned criterion
to deduce when quantum channels, which are irreducibly covariant with respect to them, are also EB.

For the group $S(3)$ in the $\epsilon-$representation and partition $\lambda=(2,1)$  we have
\be
J\left(\Phi_{\epsilon} \right)\in \mathcal{SEP} \ \Leftrightarrow \ \left(\text{\noindent
	\(\mathds{1}\)}\ot T \right) J\left(\Phi_{\epsilon} \right)\geq 0 \ \Leftrightarrow \ -1\leq l_{\sgn}\leq 1, \ \frac{1}{2}(1-l_{\sgn})\geq |l_{\lambda}|.
\ee
Comparing the above conditions for PPT with the conditions for CPTP of the map $\Phi_{\epsilon}$ given in~\cref{ecp} we see that
\be
\Phi_{\epsilon} \ \text{is EB} \ \Leftrightarrow \ \begin{cases}
	-1\leq l_{\sgn}\leq 0, \ |l_{\lambda}|\leq \frac{1}{2}(1+l_{\sgn}), \\
	0<l_{\sgn}\leq 1, \  \  \ |l_{\lambda}|\leq \frac{1}{2}(1-l_{\sgn}).
\end{cases}
\ee
The solution of the above inequalities is graphically represented on the right panel of~\Cref{S3a}.

For the quaternion group $Q$ and irrep $t_4$ given in~\Cref{Quat} the separability criterion of the Choi-Jamio{\l}kowski image reads
\be
\label{116a}
J\left(\Phi^{t_4} \right)\in \mathcal{SEP} \ \Leftrightarrow \ \left(\text{\noindent
	\(\mathds{1}\)}\ot T \right) J\left(\Phi^{t_4} \right)\geq 0 \ \Leftrightarrow \ \begin{cases}
	\frac{1}{2}(1-l_{t_1}-l_{t_2}-l_{t_3})\geq 0,\\
	\frac{1}{2}(1+l_{t_1}+l_{t_2}-l_{t_3})\geq 0,\\
	\frac{1}{2}(1+l_{t_1}-l_{t_2}+l_{t_3})\geq 0,\\
	\frac{1}{2}(1-l_{t_1}+l_{t_2}+l_{t_3})\geq 0.
\end{cases}
\ee
Comparing the set of solutions of the above inequalities with the range of parameters $l_{t_1},l_{t_2},l_{t_3}$ for which the map $\Phi^{t_4}$ is CPTP, we get allowed triples $(l_{t_1},l_{t_2},l_{t_3})$ when $\Phi^{t_4}$ is an ICQC and EB channel. We present a graphical representation of allowed triples of parameters in~\Cref{S31}.
\begin{figure}[h]
\begin{center}
	\subfloat[]{\includegraphics[width=0.48\textwidth]{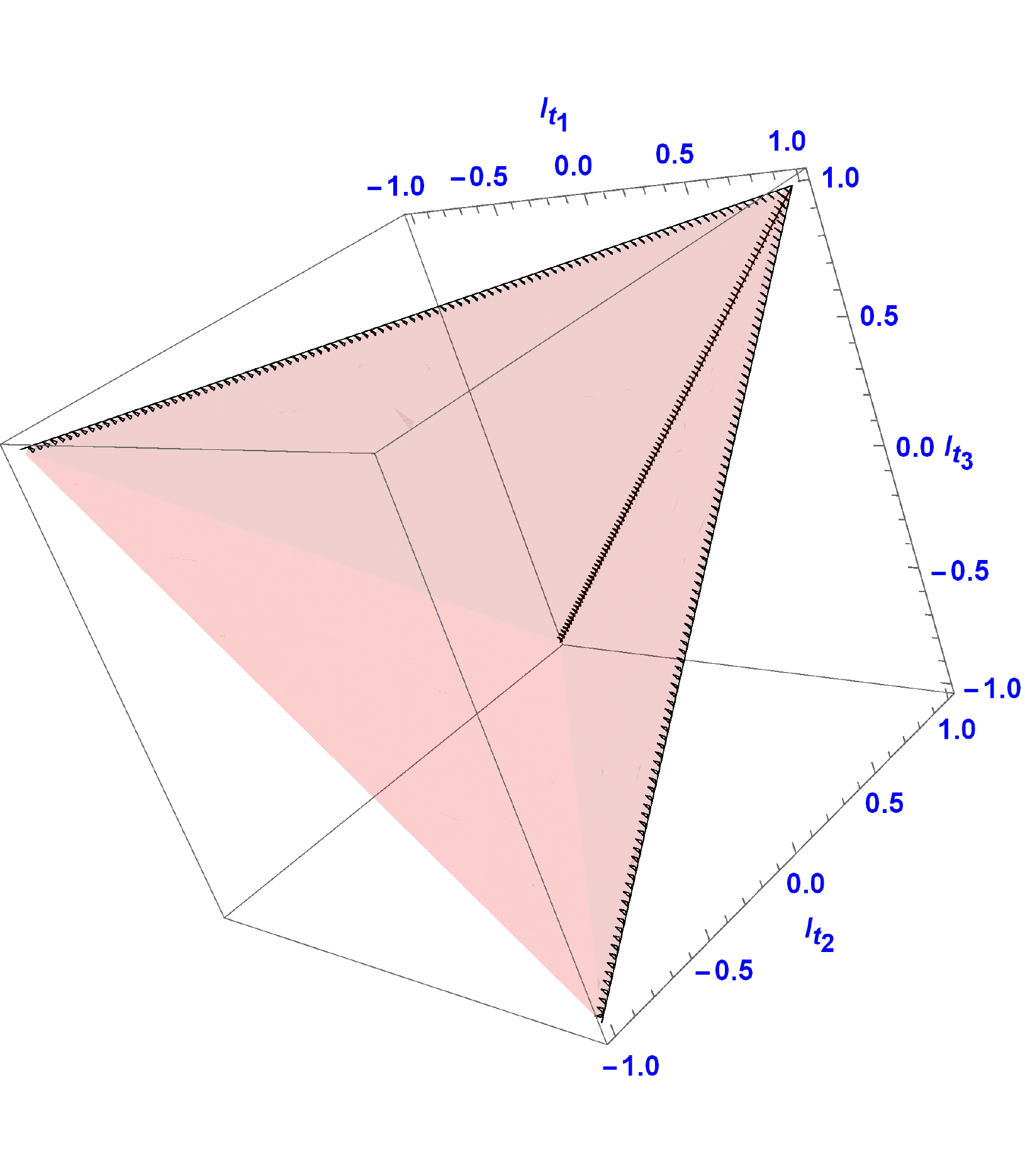}}
	\hfill
	\subfloat[]{\includegraphics[width=0.48\textwidth]{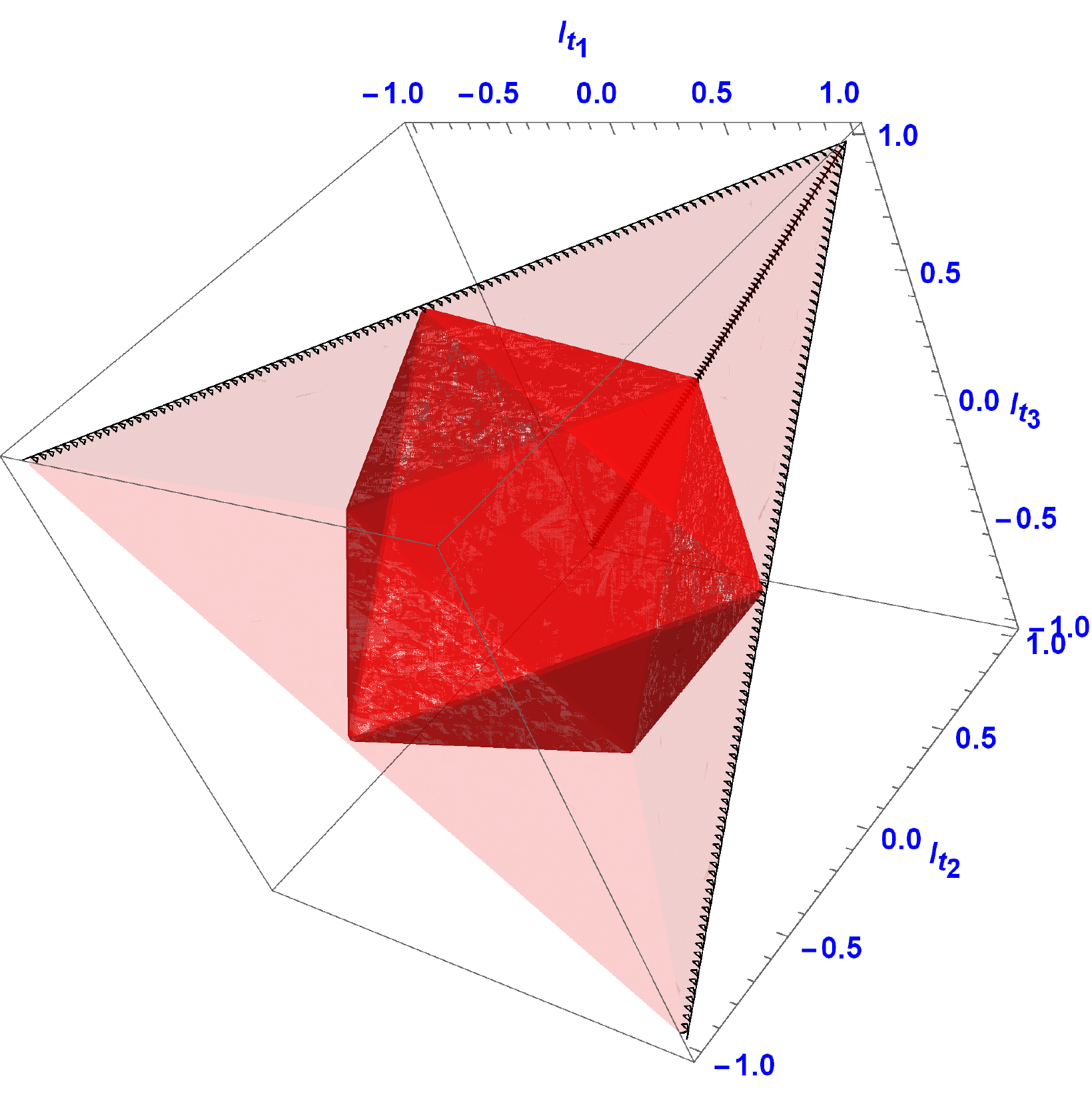}}
	\caption{On the left panel we present allowed range of parameters $l_{t_1},l_{t_2},l_{t_3}$ for which the inequalities~\cref{cc1} are satisfied and a trace-preserving ICLM generated by the 
irrep $t_4$ is an ICQC. On the right panel the grey region superimposed on the CPTP region corresponds to the range of parameters $l_{t_1},l_{t_2},l_{t_3}$ for which the ICQC $\Phi_{t_4}$ is also EB.}
	\label{S31}
\end{center}
\end{figure}

\section{Additional results: spectral properties of the rank-one projectors $\Pi_i^\alpha$}
\label{S7}
In this section, we state properties of the eigenvectors $V_{i}^{\alpha }(s,t)$
of the rank-one projectors $\Pi_{i}^{\alpha }$, defined in Proposition \ref{propP}.
To formulate the main result of this section, which is stated in~\Cref{thm21}, we need the following lemma (which is proved in \Cref{appC}).

\begin{lemma}
	\label{l27}
	Suppose that $\gamma \notin \Theta $, where $U\otimes U^{c}=
	\bigoplus_{\alpha \in \Theta }\varphi ^{\alpha }$, then 
	\begin{equation}
	\forall s,t=1,\ldots ,n\quad  V_{i}^{\gamma }(s,t) =\frac{|\varphi^{\gamma} |}{|G|}\sum_{g\in G}\varphi _{ii}^{\gamma }\left(
	g^{-1}\right) U_{C(s)}(g)U_{R(t)}\left( g^{-1}\right) =0.
	\end{equation}
\end{lemma}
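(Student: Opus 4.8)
The plan is to recognise the matrix on the right-hand side of the claimed identity as a single (reshaped) column of the operator $\widetilde{\Pi}_i^{\gamma}\in\M(n^{2},\mathbb{C})$ defined in~\cref{expF}, and then to show that this operator is the zero operator whenever $\gamma\notin\Theta$. First I would do a short entrywise computation: using unitarity of $U$ in the form $\overline{u_{bd}(g)}=u_{db}(g^{-1})$, one checks that for all $k,l,s,t\in\{1,\dots,n\}$
\be
\left(\frac{|\varphi^{\gamma}|}{|G|}\sum_{g\in G}\varphi_{ii}^{\gamma}(g^{-1})\,U_{C(s)}(g)\,U_{R(t)}(g^{-1})\right)_{kl}=\frac{|\varphi^{\gamma}|}{|G|}\sum_{g\in G}\varphi_{ii}^{\gamma}(g^{-1})\,u_{ks}(g)\,u_{tl}(g^{-1})=\left(\widetilde{\Pi}_i^{\gamma}\right)_{kl,\,st}.
\ee
Hence the matrix whose vanishing we must establish is exactly the $(s,t)$-indexed column of $\widetilde{\Pi}_i^{\gamma}$, reshaped back into $\M(n,\mathbb{C})$, and the lemma reduces to proving $\widetilde{\Pi}_i^{\gamma}=0$ for every $i$ when $\gamma\notin\Theta$.

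Next I would note that each $\widetilde{\Pi}_i^{\gamma}$ is an orthogonal projector for \emph{any} irrep $\gamma$ of $G$, i.e.\ $\widetilde{\Pi}_i^{\gamma}\widetilde{\Pi}_j^{\gamma}=\delta_{ij}\widetilde{\Pi}_i^{\gamma}$ and $(\widetilde{\Pi}_i^{\gamma})^{\dagger}=\widetilde{\Pi}_i^{\gamma}$. These are precisely the relations~\cref{propertiesE} of~\Cref{prop11}; their proof uses only the Schur orthogonality relation~\cref{or} for $\varphi^{\gamma}$ together with the unitarity of $U$ and of $\varphi^{\gamma}$, and it never invokes the multiplicity-free hypothesis or the condition $\gamma\in\Theta$, so the same computation applies here. (Hermiticity is immediate from $(U(g)\otimes\overline{U}(g))^{\dagger}=U(g^{-1})\otimes\overline{U}(g^{-1})$ and $\overline{\varphi_{ii}^{\gamma}(g^{-1})}=\varphi_{ii}^{\gamma}(g)$ after the substitution $g\mapsto g^{-1}$.)

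Finally, since $\chi^{\gamma}(g^{-1})=\sum_{i=1}^{|\varphi^{\gamma}|}\varphi_{ii}^{\gamma}(g^{-1})$, comparing~\cref{eq17} with~\cref{expF} gives $\sum_i\widetilde{\Pi}_i^{\gamma}=\widetilde{\Pi}^{\gamma}$. For $\gamma\notin\Theta$, \Cref{Rprop8} (namely~\eqref{++}) gives $\widetilde{\Pi}^{\gamma}=0$, hence $\sum_i\widetilde{\Pi}_i^{\gamma}=0$; multiplying this identity on the left by the projector $\widetilde{\Pi}_j^{\gamma}$ and invoking the orthogonality from the previous step leaves only the $j$-th term, so $\widetilde{\Pi}_j^{\gamma}=0$ for each $j$. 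Together with the first step this gives $V_i^{\gamma}(s,t)=0$ for all $s,t$. I expect the only delicate point to be making sure the argument is not circular: the reduction in the first step and the projector identities in the second must be re-derived here from Schur orthogonality directly, rather than cited from results (such as $\|V_i^{\beta}(s,t)\|_2^2=(\widetilde{\Pi}_i^{\beta})_{st,st}$) that are only established later in the paper — but the underlying computations are entirely routine.
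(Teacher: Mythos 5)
Your proof is correct and, as you anticipated, non-circular, but it routes through a different mechanism than the paper. The paper argues by contradiction via the eigenvalue equation: the computation in the proof of \Cref{propP} showing $\Pi_i^{\gamma}\bigl(V_i^{\gamma}(s,t)\bigr)=V_i^{\gamma}(s,t)$ uses only Schur orthogonality and hence is valid for an arbitrary irrep $\gamma$; so if $V_i^{\gamma}(s,t)$ were nonzero it would be a nonzero fixed vector of $\Pi_i^{\gamma}$, forcing $\Pi^{\gamma}=\sum_i\Pi_i^{\gamma}\neq 0$, which contradicts the absence of $\varphi^{\gamma}$ in $U\otimes U^{c}$. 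You instead establish the full entrywise identity $\bigl(V_i^{\gamma}(s,t)\bigr)_{kl}=\bigl(\widetilde{\Pi}_i^{\gamma}\bigr)_{kl,st}$ --- of which the paper only records the diagonal case $\|V_i^{\gamma}(s,t)\|_2^2=(\widetilde{\Pi}_i^{\gamma})_{st,st}$ --- and then kill $\widetilde{\Pi}_i^{\gamma}$ directly by combining the projector relations of \Cref{prop11} (which, as you correctly note, never use $\gamma\in\Theta$) with $\widetilde{\Pi}^{\gamma}=0$ from \Cref{Rprop8}. Both arguments ultimately rest on the same underlying fact, namely that $\Pi^{\gamma}$ vanishes off $\Theta$; yours is direct rather than by contradiction, and the reshaping identity you isolate is a slightly stronger structural observation that the paper leaves implicit. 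Your final step (multiplying $\sum_i\widetilde{\Pi}_i^{\gamma}=0$ by $\widetilde{\Pi}_j^{\gamma}$) is fine; one could equally note that Hermitian idempotents are positive semidefinite, so a vanishing sum forces each summand to vanish.
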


\begin{theorem}
\label{thm21}
The eigenvectors $V_{i}^{\alpha } \equiv V_{i}^{\alpha }(s,t) \in  \M(n, \cC)$
of the rank-one projectors $\Pi_{i}^{\alpha } \in \End\left[ \M(n, \cC)\right]$, defined in Proposition \ref{propP}, satisfy the following properties:
\begin{enumerate}
\item 
{\begin{equation}
	\label{thm-30-1}
||V_{i}^{\alpha }(s,t)||_2^{2}=(\tPi_{i}^{\alpha })_{st,st}, \quad {\hbox{and hence}}\quad \sum_{s,t=1}^n||V_{i}^{\alpha }(s,t)||_2^{2}=1.
\end{equation}}
\item{The following summation rules hold:
\begin{align}\label{sum1}
\forall s,t=1,\ldots ,n\quad \sum_{\alpha \in \Theta }\sum_{i=1}^
{|\varphi^{\alpha }|}V_{i}^{\alpha }(s,t)=E_{st},
\end{align}
where $\{E_{st}\}_{s,t=1}^n$ is a natural basis of $\M(n,\mathbb{C})$, and 
\begin{equation}
\label{suum2}
\forall \alpha \in \Theta,\,\,i=1,\ldots ,|\varphi^{\alpha} | \qquad \sum_{s=1,\ldots
,n}V_{i}^{\alpha }(s,s)=\delta ^{\alpha ,\id}\text{\noindent
\(\mathds{1}\)}_{n},
\end{equation}
where $\text{\noindent
	\(\mathds{1}\)}_{n}$ denotes the identity matrix in $\M(n, \cC)$.}
\item{Moreover,
\begin{align}
\label{thm-30-3}
\tr V_{i}^{\alpha }(s,t) &=\delta_{\alpha, \id}\delta_{st}.
\end{align}}
\end{enumerate}
\end{theorem}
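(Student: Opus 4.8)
The plan is to reduce all three items to a single identification. I claim that the matrix $V_i^\alpha(s,t)$ of~\eqref{11} — understood without the normalising prefactor, which is the way it is actually used in the statement of~\Cref{thm21} and in the discussion preceding~\Cref{n2}, since with the prefactor~\eqref{thm-30-1} would force $(\tPi_i^\alpha)_{st,st}=1$ — is nothing but the image of the matrix unit $E_{st}$ under the projector $\Pi_i^\alpha$. First I would check, directly from~\eqref{Pii_a}, that $\Ad_{U(g)}(E_{st}) = U(g)\ket{s}\bra{t}U(g)^\dagger = U_{C(s)}(g)\,U_{R(t)}(g^{-1})$, where the last equality is unitarity of $U$ in the form $\overline{u_{kt}(g)} = u_{tk}(g^{-1})$; summing this against $\tfrac{|\varphi^\alpha|}{|G|}\varphi_{ii}^\alpha(g^{-1})$ gives
\be
V_i^\alpha(s,t) = \Pi_i^\alpha(E_{st}),\qquad \alpha\in\Theta,\ i=1,\dots,|\varphi^\alpha|,\ s,t\in\{1,\dots,n\}.
\ee
Everything else then follows from properties of $\Pi_i^\alpha$ already established (\Cref{cor17}, and of $\tPi_i^\alpha$ in~\Cref{prop11}), together with the trivialities $\sum_s E_{ss}=\iden_n$ and $\langle E_{st},Y\rangle = Y_{st}$ for the Hilbert--Schmidt product.

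For item~1, since by~\eqref{chain} $\Pi_i^\alpha$ is a self-adjoint idempotent, hence an orthogonal projector on $(\M(n,\cC),\langle\cdot,\cdot\rangle_2)$, one has $\|V_i^\alpha(s,t)\|_2^2 = \langle\Pi_i^\alpha E_{st},\Pi_i^\alpha E_{st}\rangle = \langle E_{st},\Pi_i^\alpha E_{st}\rangle = (\Pi_i^\alpha E_{st})_{st}$; passing to vectorised form via $\cV\circ\Pi_i^\alpha=\tPi_i^\alpha\circ\cV$ (this is $\mat(\Pi_i^\alpha)=\tPi_i^\alpha$, \Cref{matRep}, \Cref{cor17}) and $\cV(E_{st})=\ket{s}\ot\ket{t}$ identifies $(\Pi_i^\alpha E_{st})_{st}$ with $(\tPi_i^\alpha)_{st,st}$, which is~\eqref{thm-30-1}; summing over $s,t$ then gives $\sum_{s,t}\|V_i^\alpha(s,t)\|_2^2=\tr\tPi_i^\alpha=1$ by~\eqref{propertiesE}. (A direct alternative: expand $\|V_i^\alpha(s,t)\|_2^2$ as a double sum over $G\times G$, collapse the two entry-sums into $u_{ss}(h^{-1}g)$ and $u_{tt}(g^{-1}h)$, substitute $a=h^{-1}g$ and apply~\eqref{or}; this works but is more laborious, so I would use it only as a fallback.) For item~2: summing $V_i^\alpha(s,t)=\Pi_i^\alpha(E_{st})$ over $i$ and then over $\alpha\in\Theta$, using $\sum_i\Pi_i^\alpha=\Pi^\alpha$ and $\sum_{\alpha\in\Theta}\Pi^\alpha=\id_{\End[\M(n,\cC)]}$ (from~\eqref{chain} and~\eqref{properties1}), yields $\sum_{\alpha,i}V_i^\alpha(s,t)=E_{st}$, i.e.~\eqref{sum1}; and $\sum_s V_i^\alpha(s,s)=\Pi_i^\alpha(\iden_n)=\big(\tfrac{|\varphi^\alpha|}{|G|}\sum_g\varphi_{ii}^\alpha(g^{-1})\big)\iden_n=\delta^{\alpha,\id}\iden_n$, the last equality being orthogonality of $\varphi^\alpha$ against the trivial irrep (equivalently~\eqref{orr2} when $\alpha=\id$), which is~\eqref{suum2}. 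For item~3, cyclicity of the trace and unitarity give $\tr V_i^\alpha(s,t)=\tfrac{|\varphi^\alpha|}{|G|}\sum_g\varphi_{ii}^\alpha(g^{-1})\tr\!\big(U(g)E_{st}U(g)^\dagger\big)=\big(\tfrac{|\varphi^\alpha|}{|G|}\sum_g\varphi_{ii}^\alpha(g^{-1})\big)\delta_{st}=\delta_{\alpha,\id}\delta_{st}$.

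I do not expect a genuine obstacle: once $V_i^\alpha(s,t)=\Pi_i^\alpha(E_{st})$ is in place the statement is pure bookkeeping with the orthogonal projectors $\Pi_i^\alpha$ and their matrix representants $\tPi_i^\alpha$. The only point needing a moment of care is item~1, where one must move between ``the $(s,t)$ matrix entry of $\Pi_i^\alpha(E_{st})$'' and ``the $(st,st)$ entry of $\tPi_i^\alpha$'' — this is exactly the vectorisation dictionary of~\Cref{nots}, so it is routine but it is the only non-mechanical ingredient. Before writing anything I would also fix, once and for all, the normalisation convention for $V_i^\alpha(s,t)$ in the theorem statement, as noted above.
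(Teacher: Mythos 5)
Your proof is correct, and it is organized differently from the paper's. The paper proves item~1 by the direct computation already carried out in the proof of \Cref{propP} (eqs.~\eqref{cos1}--\eqref{useful}), proves \eqref{sum1} by first extending the sum from $\Theta$ to all of $\widehat{G}$ via \Cref{l27} and then invoking the column orthogonality relation \eqref{or2} at $h=\mathrm{e}$ so that only the $g=\mathrm{e}$ term survives, and proves \eqref{suum2} and \eqref{thm-30-3} by Schur orthogonality \eqref{orr2}. You instead hang everything on the single identity $V_i^\alpha(s,t)=\Pi_i^\alpha(E_{st})$ (which is implicit in the paper but never isolated): item~1 then follows from $\Pi_i^\alpha$ being an orthogonal projector plus the vectorization dictionary $(\Pi_i^\alpha E_{st})_{st}=(\tPi_i^\alpha)_{st,st}$ and $\tr\tPi_i^\alpha=1$; and \eqref{sum1} follows from the completeness relation $\sum_{\alpha\in\Theta}\Pi^\alpha=\id$ of \eqref{properties1}, with no need for \Cref{l27} or \eqref{or2} at all. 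Your route is shorter and makes the structural reason for each identity transparent; the paper's route is more self-contained at the level of explicit character sums. Your remark on normalization is also well taken: \eqref{thm-30-1} and \Cref{l27} only make sense for the \emph{unnormalized} $V_i^\alpha(s,t)$, i.e.\ without the prefactor $1/\sqrt{(\tPi_i^\alpha)_{st,st}}$ appearing in \eqref{11}, and the paper tacitly switches conventions here; fixing this at the outset, as you do, is the right call.
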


\section{Conclusions and Open Questions}
In this paper we present a detailed characterization of linear maps which are covariant with respect to an irreducible representation $U$ of a finite group $G$, in the case in which $U\ot U^c$ is simply reducible (or multiplicity-free); here $U^c$ denotes the contragradient representation. We refer to such a linear map as an irreducibly covariant linear map (ICLM). We derive necessary and sufficient conditions under which an ICLM is trace-preserving and completely positive, and is thus an irreducibly covariant quantum channel (ICQC). These conditions are obtained by requiring that the Choi-Jamio{\l}kowski image of the ICLM is positive semidefinite. We present explicit analytical expressions for the eigenvalues and the eigenvectors of the Choi-Jamio{\l}kowski image, and give the conditions for complete positivity as a set of inequalities, which can be solved directly for any given finite group $G$. The resulting characterization of the ICQC is given entirely in terms of representation characteristic of the group $G$, and is valid for any finite group and its irreps $U$, as long as $U\ot U^c$ is simply reducible. 
Moreover, we construct a wide class of non-trivial ICQCs for which the conditions for complete positivity are automatically satisfied. In addition, we obtain a geometrical interpretation of the spectrum of the Choi-Jamio{\l}kowski image of an ICQC, by showing that it always lies in the intersection of a certain scaled simplex and a linear subspace, which are defined by the spectral properties of the projectors appearing in the spectral decomposition of an ICQC.

As a direct application of our result, we give the full description of ICQCs generated by irreps ($U$) of certain finite groups, for which $U\ot U^c$ is simply reducible. These are the quaternion group $Q$, and the symmetric groups $S(3)$ and $S(4)$. In each case we present both the matrix representation and the Kraus representation of the ICQC. Moreover, for covariance with respect to two-dimensional irreps of the groups $S(3)$ and $Q$, we obtain conditions under which the ICQCs are also entanglement breaking channels.  

We expect to obtain an analogous characterization of quantum channels which are irreducibly covariant with respect to a compact group.  However, such an analysis needs to be done carefully, since dealing with compact groups is technically more challenging than finite groups. This is the topic worthy of a separate project. In our opinion, however, it might be more interesting to first extend   our analysis to the cases in which $(i)$ $U \otimes U^c$ is {\em{not}} simple reducible, and $(ii)$ the linear map is positive but not completely positive. In particular, the latter might lead to new classes of (covariant) entanglement witnesses.
	
\section*{Acknowledgments}
M.S.~is supported by the grant "Mobilno{\'s}{\'c} Plus IV", 1271/MOB/IV/2015/0 from the Polish Ministry of Science and Higher Education. M.M.~would like to thank Pembroke College and the Centre for Mathematical Sciences at the University of Cambridge for hospitality. N.D.~is grateful to A.~Holevo and R.~Werner for helpful discussions.
\appendix
\section{Proofs of results from~\Cref{S4}}
\label{appA}
\begin{proof}[Proof of~\Cref{def7 copy(1)}]
	Suppose that $\mat(\Phi )\in \Int_{G}\left( U\otimes \overline{U}%
\right),$ then $\forall g \in G$ and $\forall X \in \M(n,\mathbb{C})$ by~\Cref{L2} we have the following chain of equivalences:
\begin{equation}
\begin{split}
& \left( \sum_{\nu=1}^{m}A^{\nu}\otimes \overline{B}
^{\nu}\right) U(g)\otimes \overline{U}(g)=U(g)\otimes \overline{U}%
(g)\left( \sum_{\nu=1}^{m}A^{\nu}\otimes \overline{B}^{\nu}\right) \Leftrightarrow \\
&
\left( \sum_{\nu=1}^{m}A^{\nu}\otimes \overline{B}^{\nu}\right) U(g)\otimes \overline{U}%
(g).\cV(X)=U(g)\otimes \overline{U}(g)\left( \sum_{\nu=1}^{m}A^{\nu}\otimes \overline{B}^{\nu}\right) .\cV(X)\Leftrightarrow \\
&
\left( \sum_{\nu=1}^{m}A^{\nu}\otimes \overline{B}^{\nu}\right) \cV\left( U(g)X U^{\dagger}(g)\right) =U(g)\otimes 
\overline{U}(g)\cV\left( \sum_{\nu=1}^{m}A^{\nu}X(B^{\nu})^{\dagger}\right)\Leftrightarrow \\
&
\cV\left( \sum_{\nu=1}^{m}A^{\nu}\Ad_{U(g)}(X)(B^{\nu})^{\dagger}\right) =\cV\left( U(g)
\sum_{\nu=1}^{m}(A^{\nu}X(B^{\nu})^{\dagger})U^{\dagger}(g)\right) \Leftrightarrow \\
& \cV\left( \Phi 
\left[ \Ad_{U(g)}(X)\right]\right) =\cV\left( \Ad_{U(g)}\left[ \sum_{\nu=1}^{m} A^{\nu}X\left(B^{\nu}\right)^{\dagger}\right] \right) .
\end{split}
\end{equation}
\end{proof}

\section{Proofs of results from~\Cref{S5}}
\label{appB}

\begin{proof}[Proof of~\Cref{prop8}]
From the decomposition given by~\cref{eq15} and Schur's Lemma it is clear, that dimension of the
commutant space of the representation $U\otimes U^{c}$ is equal to number of 
irreps $\varphi^{\alpha}$ appearing in this decomposition. From the orthogonality of the projectors $\widetilde{\Pi}^{\alpha}$ it follows that they are linearly independent, so they form a basis of the
space $\Int_{G}\left( U\otimes U^{c}\right)$. The
properties of the operators $\widetilde{\Pi}^{\alpha}$ are derived in~\cite{NS}, where in
particular is shown that, for a given $\varphi^{\alpha}$, the operator $\widetilde{\Pi}^{\alpha}$ (\cref{eq17}) is an orthogonal projector onto a subspace of $\mathbb{C}^{n^{2}}$ containing all irreps $\varphi ^{\alpha }$ included in $U\otimes U^{c}$. Because, by assumption, in our case each irrep $\varphi^{\alpha}$ appears at most once in $U\otimes U^{c}$, then in our case projectors $\widetilde{\Pi}^{\alpha}$ project onto the subspace of the irrep $\varphi^{\alpha}$ in the representation space $\mathbb{C}^{n^{2}}$. The orthogonality of the projectors $\widetilde{\Pi}^{\alpha}$ follows now
from the direct sum decomposition in~\cref{eq15}. It can be also checked by direct calculation, using the orthogonality relations for irreducible characters~\cref{or}.

Now we prove that $\widetilde{\Pi}^{\alpha}\in \Int_{G}\left( U\otimes U^{c}\right)$ for $\alpha \in
\Theta$. In fact we have for any $h\in G$:
\be
\begin{split}
&U(h)\otimes \overline{U}(h)\widetilde{\Pi}^{\alpha}U\left( h^{-1}\right) \otimes \overline{U}\left( h^{-1}\right) =
\frac{|\varphi^{\alpha}|}{|G|}\sum_{g\in G}\chi ^{\alpha
}\left( g^{-1}\right) U\left( hgh^{-1}\right) \otimes \overline{U}\left( hgh^{-1}\right)\\
&\frac{|\varphi^{\alpha}|}{|G|}\sum_{k\in G}\chi ^{\alpha
}\left( h^{-1}k^{-1}h\right) U(k)\otimes \overline{U}(k)=\frac{|\varphi^{\alpha}|}{|G|}%
\sum_{k\in G}\chi ^{\alpha }\left( k^{-1}\right) U(k)\otimes \overline{U}(k)=\widetilde{\Pi}^{\alpha}, 
\end{split}
\ee
because characters are class functions.  
\end{proof}

\begin{proof}[Proof of~\Cref{prop11}]
	The hermiticity of matrices $\widetilde{\Pi}_i^{\alpha}$ follows directly from the unitarity of the representation $U:G\rightarrow \M(n,\mathbb{C})$. We have also:
	\be
	\begin{split}
	\widetilde{\Pi}_i^{\alpha}\widetilde{\Pi}_j^{\alpha}&=\frac{ |\varphi^{\alpha}|^{2}}{|G|^{2}}%
	\sum_{g,h\in G}\varphi _{ii}^{\alpha }\left( g^{-1}\right) \varphi _{jj}^{\alpha
	}\left( h^{-1}\right) U(gh)\otimes \overline{U}(gh) \\
	&=\frac{|\varphi^{\alpha}|^{2}}{|G|^{2}}\sum_{g,k\in G}\varphi
	_{ii}^{\alpha }\left( g^{-1}\right) \varphi _{jj}^{\alpha }\left( k^{-1}g\right) U(k)\otimes \overline{%
		U}(k) \\
	&=\frac{|\varphi^{\alpha}| ^{2}}{|G|^{2}}\sum_{g,k\in G}\sum_{l}\varphi
	_{ii}^{\alpha }\left( g^{-1}\right) \varphi _{jl}^{\alpha }\left( k^{-1}\right) \varphi _{lj}^{\alpha
	}(g)U(k)\otimes \overline{U}(k).
	\end{split}
	\ee
	Now using the Schur orthogonality relations for irreps, given by \cref{or}, we get:
	\be
	\begin{split}
	\widetilde{\Pi}_i^{\alpha}\widetilde{\Pi}_j^{\alpha}&=\frac{|\varphi^{\alpha}|}{|G|}\sum_{k\in
		G}\sum_{l}\delta _{il}\delta _{ij}\varphi _{jl}^{\alpha }\left( k^{-1}\right) U(k)\otimes 
	\overline{U}(k)\\
	&=\frac{|\varphi^{\alpha}|}{|G|}\sum_{k\in G}\delta _{ij}\varphi
	_{ji}^{\alpha }\left( k^{-1}\right) U(k)\otimes \overline{U}(k)=\delta _{ij}\widetilde{\Pi}_i^{\alpha}. 
	\end{split}
	\ee
	Similarly using the orthogonality relation~\cref{or} for irreps, and the
	decomposition in~\cref{eq15} one can prove that $\tr ( \widetilde{\Pi}_i^{\alpha}) =1$. 
\end{proof}

\begin{proof}[Proof of~\Cref{cor17}]
	The first property in~\cref{chain} is obvoius from the explicit form of the projectors $\Pi_i^{\alpha}$ given in~\cref{Pii_a}.
	
	Now we prove directly the second property form~\cref{chain}. Namely $\forall X\in \M(n,\mathbb{C})$ we have:
	\be
	\Pi _{i}^{\alpha }\Pi _{j}^{\beta }(X)=\frac{|\varphi^{\alpha}|^{2}}{|G|^{2}}\sum_{g,h\in G}\varphi _{ii}^{\alpha }\left( g^{-1}\right) \varphi
	_{jj}^{\beta }\left( h^{-1}\right) U(gh)XU\left( h^{-1}g^{-1}\right) .
	\ee
Setting $s=gh$ we get the following:
	\be
	\begin{split}
	\Pi _{i}^{\alpha }\Pi _{j}^{\beta }(X)&=\frac{|\varphi^{\alpha}|^{2}}{|G|^{2}}\sum_{g,s\in G}\varphi
	_{ii}^{\alpha }\left( g^{-1}\right) \varphi _{jj}^{\beta }\left( s^{-1}g\right) U(s)XU\left( s^{-1}\right) \\
	&=\frac{|\varphi^{\alpha}|^{2}}{|G|^{2}}\sum_{g,s\in G}\sum_{l}\varphi
	_{ii}^{\alpha }\left( g^{-1}\right) \varphi _{jl}^{\beta }\left( s^{-1}\right) \varphi _{lj}^{\beta
	}(g)U(s)XU\left( s^{-1}\right) .
	\end{split}
	\ee
	Using the orthogonality relation~\cref{or} for irreps, we get for any $X\in \M(n,\mathbb{C})$:
	\be
	\begin{split}
	\Pi _{i}^{\alpha }\Pi _{j}^{\beta }(X)&=\frac{|\varphi^{\alpha}|}{|G|}%
	\sum_{s\in G}\sum_{l}\delta ^{\alpha \beta }\delta _{il}\delta _{ij}\varphi
	_{jl}^{\alpha }\left( s^{-1}\right) U(s)XU\left( s^{-1}\right) \\
	&=\frac{|\varphi^{\alpha}|}{|G|}\sum_{s\in G}\delta ^{\alpha \beta
	}\delta _{ij}\varphi _{ji}^{\alpha }\left( s^{-1}\right) U(s)XU\left( s^{-1}\right) =\delta ^{\alpha
	\beta }\delta _{ij}\Pi _{i}^{\alpha }(X).
\end{split}
\ee
To prove the third property from~\cref{chain}, which is in fact, a
conjugation of operators in the space $\End[\M(n,\mathbb{C})]$ we use general considerations from~\Cref{nots}. We have $\forall X,Y\in \M(n,\mathbb{C})$:
\be
\begin{split}
(X,\Pi _{i}^{\alpha }(Y))&=\tr\left( X^{\dagger}\frac{|\varphi^{\alpha}|}{|G|}
\sum_{g\in G}\varphi _{ii}^{\alpha }\left( g^{-1}\right) U(g)YU\left( g^{-1}\right) \right) \\
&=\tr\left( \frac{|\varphi^{\alpha}|}{|G|}\sum_{g\in G}\varphi _{ii}^{\alpha
}\left( g^{-1}\right) U\left( g^{-1}\right) X^{\dagger}U(g)Y\right) \\
&=\tr\left( \left[ \frac{|\varphi^{\alpha}|}{|G|}
\sum_{g\in G}\overline{\varphi }_{ii}^{\alpha }\left( g^{-1}\right) U\left( g^{-1}\right) XU(g)\right]^{\dagger}Y\right) \\
&=\tr\left( \left[ \frac{|\varphi^{\alpha}|}{|G|}\sum_{g\in G}\varphi
_{ii}^{\alpha }(g)U\left( g^{-1}\right) XU(g)\right]^{\dagger}Y\right) \\
&=\tr\left( \left[ \frac{|\varphi^{\alpha}|}{
	|G|}\sum_{h\in G}\varphi _{ii}^{\alpha
}\left( h^{-1}\right) U(h)XU\left( h^{-1}\right) \right]^{\dagger}Y\right) =(\Pi _{i}^{\alpha }(X),Y).
\end{split}
\ee
\end{proof}

\begin{proof}[Proof of~\Cref{propP}]
	First we prove that the matrices  $V_{r}^{\alpha}(s,t)$ satisfy the
	eigenvalue equation for the projector ${\Pi}_r^{\alpha}$. For
	any $s,t=1,\ldots,n$ we can write
	\be
	\begin{split}
		\Pi_r^{\alpha}\left( V_{r}^{\alpha }(s,t)\right) &=\frac{|\varphi^{\alpha}|^{2}}{
		|G|^{2}} \sum_{g,h\in G}\varphi _{rr}^{\alpha }\left( h^{-1}\right) \varphi
	_{rr}^{\alpha }\left( g^{-1}\right) U(h)U_{c(s)}(g)U_{R(t)}\left( g^{-1}\right) U\left( h^{-1}\right)\\ 
	&=\frac{|\varphi^{\alpha}|^{2}}{|G|^{2}} \sum_{g,h\in G}\varphi
	_{rr}^{\alpha }\left( h^{-1}\right) \varphi _{rr}^{\alpha
	}\left( g^{-1}\right) U_{C(s)}(hg)U_{R(t)}\left( g^{-1}h^{-1}\right)\\ 
	&=\frac{|\varphi^{\alpha}|^{2}}{|G|^{2}} \sum_{w,h\in G}\varphi
	_{rr}^{\alpha }\left( h^{-1}\right) \varphi _{rr}^{\alpha
	}\left(hw^{-1}\right) U_{C(s)}(w)U_{R(t)}\left( w^{-1}\right)\\
	&=\frac{|\varphi^{\alpha}|}{|G|} \sum_{w\in G}\varphi _{rr}^{\alpha
	}\left( w^{-1}\right) U_{C(s)}(w)U_{R(t)}\left( w^{-1}\right)  =V_{r}^{\alpha }(s,t), 
	\end{split}
	\ee
	where in the last step we have used the orthogonality relation~\cref{or} for
	irreducible representations. Now it remains to prove that not all matrices $
	V_{r}^{\alpha }(s,t)$ are equal to zero. This can be done by calculating the
	norm of the matrices $V_{r}^{\alpha }(s,t)$.
	\be
	\label{cos1}
	||V_{r}^{\alpha }(s,t)||_2^{2}=\frac{|\varphi^{\alpha}|}{|G|}\sum_{g\in
		G}\varphi _{rr}^{\alpha }\left( g^{-1}\right) U_{ss}(g)\overline{U}_{tt}(g). 
	\ee
	Comparing the right hand side of~\cref{cos1} with the corresponding matrix element of $\widetilde{\Pi}_r^{\alpha}$,
	\be
	\label{useful}
	\left( \widetilde{\Pi}_r^{\alpha}\right)_{st,st}=\frac{|\varphi^{\alpha}|}{|G|}\sum_{g\in
		G}\varphi _{rr}^{\alpha }\left( g^{-1}\right) \left[ U(g)\otimes \overline{U}(g)\right] _{st,st}=\frac{%
		|\varphi^{\alpha}|}{|G|}\sum_{g\in G}\varphi _{rr}^{\alpha
	}\left( g^{-1}\right) U(g)_{ss}\overline{U}(g)_{tt}, 
	\ee
	we see that the norms $||V_{r}^{\alpha }(s,t)||_2^{2}$ are equal to the
	diagonal terms of the matrix $\widetilde{\Pi}_r^{\alpha}\in \M\left( n^{2},\mathbb{C}\right)$. As $\widetilde{\Pi}_r^{\alpha}$
is a projector of rank one, its diagonal terms are
	non-negative and at least one of them is positive. Otherwise the
	projector $\widetilde{\Pi}_r^{\alpha}$ would be zero, which is impossible since $\alpha\in \Theta$.
	
	In order to prove the orthonormality relation given in~\cref{VHS} it is enough to use the explicit form of the eigenvectors given in~\cref{11} for those pairs of indices for which eigenvectors are nonzero. Thanks to this we have
	\be
	\label{p1}
	\begin{split}
		&\left(V_i^{\alpha}(s,t),V_j^{\beta}(p,q) \right)=\tr\left[\left( V_i^{\alpha}(s,t)\right) ^{\dagger}V_j^{\beta}(p,q) \right]=\\
		&=\frac{1}{\sqrt{\left(\widetilde{\Pi}_i^{\alpha} \right)_{st,st}}} \frac{1}{\sqrt{\left(\widetilde{\Pi}_j^{\beta} \right)_{pq,pq}}}\frac{|\varphi^{\alpha}||\varphi^{\beta}|}{|G|^2}\sum_{g,h}\varphi_{ii}^{\alpha}(g)\varphi_{jj}^{\beta}\left(h^{-1} \right)u_{qt}\left(gh^{-1}\right)u_{sp}\left(g^{-1}h\right).   
	\end{split}
	\ee
	Using the substitution $gh^{-1}=w^{-1}$, the orthogonality relation~\cref{or} for irreps, and the expression for the matrix element of the projector $\widetilde{\Pi}_i^{\alpha}$ from~\cref{useful},
the right hand side of~\cref{p1} can be expressed as follows:
	\be
	\label{calc}
	\begin{split}
		&\frac{1}{\sqrt{\left(\widetilde{\Pi}_i^{\alpha} \right)_{st,st}}} \frac{1}{\sqrt{\left(\widetilde{\Pi}_j^{\beta} \right)_{pq,pq}}}\frac{|\varphi^{\alpha}||\varphi^{\beta}|}{|G|^2}\sum_{g,h}\varphi_{ii}^{\alpha}(g)\varphi_{jj}^{\beta}\left(g^{-1}w^{-1}\right)u_{qt}\left(w^{-1}\right)u_{sp}\left(w\right)=\\
		&=	\frac{1}{\sqrt{\left(\widetilde{\Pi}_i^{\alpha} \right)_{st,st}}} \frac{1}{\sqrt{\left(\widetilde{\Pi}_j^{\beta} \right)_{pq,pq}}}\frac{|\varphi^{\alpha}||\varphi^{\beta}|}{|G|^2}\sum_{w}\sum_{r}\left(\sum_g \varphi^{\beta}_{jr}\left(g^{-1} \right)  \varphi^{\alpha}_{ii}(g) \right) \varphi_{rj}^{\beta}\left(w^{-1} \right)\bar{u}_{tq}(w)u_{sp}(w)=\\
		&=\frac{1}{\sqrt{\left(\widetilde{\Pi}_i^{\alpha} \right)_{st,st}}} \frac{1}{\sqrt{\left(\widetilde{\Pi}_i^{\alpha} \right)_{pq,pq}}}\frac{|\varphi^{\alpha}|}{|G|}\sum_w \varphi_{ii}^{\alpha}\left(w^{-1} \right)u_{sp}(w)\bar{u}_{tq}(w)=\frac{\left(\widetilde{\Pi}_i^{\alpha} \right)_{st,pq} }{\sqrt{\left(\widetilde{\Pi}_i^{\alpha} \right)_{st,st}}\sqrt{\left(\widetilde{\Pi}_i^{\alpha} \right)_{pq,pq}}}.
	\end{split}
	\ee
	From the~\cite{Horn} we know that the result of the calculations in~\cref{calc} has modulus equal to one, so it can be expressed as $e^{\operatorname{i}\zeta}$, where the parameter $\zeta \equiv \zeta(s,t,p,q)$ for some $s,t,p,q \in \{1,\ldots,n\}$. This proves the orthonormality relation~\cref{VHS}. 
\end{proof}	

\begin{proof}[Proof of~\Cref{RpropP}]	
	From the argumentation presented in the above proof it is  clear that when $(s,t)=(p,q)$, then $\zeta=0$. Moreover, for a fixed $\alpha \in \Theta$ and $i=1,\ldots,|\varphi^{\alpha}|$, the vectors $V_i^{\alpha}(s,t)$ and $V_i^{\alpha}(p,q)$ can differ only by a phase $e^{\operatorname{i}\zeta}$, i.e.~they lie in the same ray.
\end{proof}

\begin{proof}[Proof of~\Cref{prop12}]
	We have for any $X\in \M(n,\mathbb{C})$:
	\be
	\tr[\Phi (X)]=\tr\left( \sum_{g\in G}\sum_{\alpha \in \Theta }\frac{|\varphi^{\alpha}|}{|G|}l_{\alpha }\chi ^{\alpha }\left( g^{-1}\right) U(g)XU^{\dagger}(g)\right) = 
	\ee
	\be
	=\sum_{\alpha \in \Theta }l_{\alpha }\left( \sum_{g\in G}\frac{|\varphi^{\alpha}|}{
		|G|}\chi ^{\alpha }\left( g^{-1}\right) \right) \tr(X)=l_{\id}\tr(X), 
	\ee
	where we use the orthogonality relation for irreducible characters given by~\cref{orr2}. 

	From the above, it is clear that the ICLM $\Phi \in \End\left[\M(n,\mathbb{C}) \right] $ is trace-preserving if and only if $l_{\id}=1$.
\end{proof}

\begin{proof}[Proof of~\Cref{prop13}]
	We have:
	\be
	J\left( \Pi^{\alpha }\right) =\sum_{ij}E_{ij}\otimes \Pi^{\alpha
	}(E_{ij})=\sum_{ij}E_{ij}\otimes \frac{|\varphi^{\alpha}|}{|G|}%
	\sum_{g\in G}\chi ^{\alpha }\left( g^{-1}\right) U_{C(i)}(g)U_{R(j)}^{\dagger}(g), 
	\ee
	\bigskip where 
	\be
	U_{R(i)}^{\dagger}(g)=U_{R(i)}(g^{-1})=[U_{C(i)}(g)]^{\dagger}. 
	\ee
	In particular for $\Pi^{\id}$ we have:
	\be
	J( \Pi^{\id}) =\sum_{ij}E_{ij}\otimes \frac{1}{|G|}\sum_{g\in
		G}U_{C(i)}(g)U_{R(j)}\left( g^{-1}\right) =\sum_{ij}E_{ij}\otimes \frac{1}{|U|}\delta
	_{ij}\text{\noindent
		\(\mathds{1}\)}_{n}=\frac{1}{|U|}\text{\noindent
		\(\mathds{1}\)}_{n}\otimes \text{\noindent
		\(\mathds{1}\)}_{n}, 
	\ee
	where we have used the orthogonality relation~\cref{or} for irreps. 
\end{proof}

\begin{proof}[Proof of~\Cref{prop14}]
	We prove the Proposition checking directly that the eigenvalue equation holds. We calculate:
	\be
	J\left( \Pi^{\alpha }\right) |v_{i}^{\beta }\>=\sum_{j}|j\>\otimes \frac{|\varphi^{\alpha}| }{|G|}\sum_{g\in G}\chi ^{\alpha }\left( g^{-1}\right) \tr\left( V_{i}^{\beta
} U^{\dagger}(g)\right) U(g)|j\>.
\ee
The operator
\be
X^{\alpha }\left( V_{i}^{\beta
}\right) \equiv \frac{|\varphi^{\alpha}| }{|G|}\sum_{g\in G}\chi
^{\alpha }\left( g^{-1}\right) \tr\left( V_{i}^{\beta
} U^{\dagger}(g)\right) U(g)\in \M(n,\mathbb{C}),
\ee
which appears on $RHS$ has the following important property, which one
derives by direct calculation
\be
\Pi_{k}^{\gamma }\left[ X^{\alpha }\left(V_{i}^{\beta
}\right) \right] =\delta ^{\gamma \beta }\delta
_{ki}X^{\alpha }\left(V_{i}^{\beta
}\right) \ : \ \alpha ,\beta ,\gamma \in \Theta ,\quad
k=1,\ldots,|\varphi^{\gamma}|,\quad i=1,\ldots,|\varphi^{\beta}|,
\ee
where $\Pi_{k}^{\alpha },$ $\alpha \in \Theta ,\quad k=1,\ldots,|\varphi^{\alpha}|$ are rank one projectors described in~\Cref{prop11}, so for any $\alpha
\in \Theta$, the matrices $X^{\alpha }\left( V_{i}^{\beta
}\right) $ are eigenvectors
of the rank one projectors $\Pi_{k}^{\beta}$ and therefore these matrices must be proportional to the matrices $V_{i}^{\beta
}\in \M(n,\mathbb{C})$ which are eigenvectors of $\Pi_{k}^{\beta }$ i.e.~we have:
\be
X^{\alpha }\left( V_{i}^{\beta
}\right) =\mu_{i}(\alpha ,\beta )V_{i}^{\beta
}.
\ee
Now it remains to calculate the proportionality coefficients $\mu_{i}(\alpha ,\beta)$, which can be done using the normalization of the
eigenvectors $V_{i}^{\beta
}$ of projectors $\Pi_{k}^{\beta }$ written in
their matrix representation:
\be
\tr\left(V_{i}^{\beta
} \left( V_{i}^{\beta
}\right) ^{\dagger}\right) =1.
\ee
From this it follows:
\be
\begin{split}
\mu_{i}(\alpha ,\beta )&=\tr\left( X^{\alpha }\left( V_{i}^{\beta
}\right) \left( V_{i}^{\beta
} \right) ^{\dagger}\right)\\
&=\frac{|\varphi^{\alpha}|}{|G|}\sum_{g\in G}\chi ^{\alpha
}\left( g^{-1}\right) \tr\left( V_{i}^{\beta} U^{\dagger}(g)\right) \tr\left( U(g)\left(V_{i}^{\beta
} \right) ^{\dagger}\right)\\
&=\frac{|\varphi^{\alpha}|}{|G|}\sum_{g\in G}\chi ^{\alpha
}\left( g^{-1}\right) \left| \tr\left( V_{i}^{\beta
} U^{\dagger}(g)\right) \right| ^{2}.
\label{182}
\end{split}
\ee
Finally using~\cref{11} and~\Cref{propP} we get the last result.
\end{proof}

\begin{proof}[Proof of~\Cref{r28}]
	In order to prove~\cref{eq281} it is enough to compute directly quantity $\tr\left(V_i^{\beta}U^{\dagger}(g) \right)$ using explicit form of the eigenvectors given by~\cref{11} in~\Cref{propP}.
	
	To prove~\cref{eq282} as a starting point we use~\cref{eq281} for some fixed irrep $\varphi^\alpha$. Expanding the square of the modulus, we get
	\be
	\frac{|\varphi^{\alpha}||\varphi^{\beta}|}{|G|^3}\frac{1}{\left(\widetilde{\Pi}_i^{\beta} \right)_{st,st}}\sum_{g,h,w}\chi^{\alpha}\left(g^{-1} \right)\varphi_{ii}^{\beta}\left(h^{-1}\right) \varphi_{ii}^{\beta}\left(w\right)u_{ts}\left(h^{-1}g^{-1}h \right)u_{st}\left(w^{-1}gw\right).
	\ee
	Now substituting $r=h^{-1}g^{-1}h$ in the above, we get
	\be
	\frac{|\varphi^{\alpha}||\varphi^{\beta}|}{|G|^3}\frac{1}{\left(\widetilde{\Pi}_i^{\beta} \right)_{st,st}}\sum_{h,r}\chi^{\alpha}\left(r \right)\varphi_{ii}^{\beta}\left(h^{-1}\right) \varphi_{ii}^{\beta}\left(w\right)u_{ts}\left(r \right)u_{st}\left(w^{-1}hr^{-1}h^{-1}w\right),
	\ee
	since the character $\chi^{\alpha}$ is a class function. Next, substituting $f=w^{-1}h$ we get 
	\be
	\frac{|\varphi^{\alpha}||\varphi^{\beta}|}{|G|^3}\frac{1}{\left(\widetilde{\Pi}_i^{\beta} \right)_{st,st}}\sum_{f,h,r}\chi^{\alpha}(r)\varphi_{ii}^{\beta}\left(h^{-1}\right)\left(\sum_{a}\varphi_{ia}^{\beta}(h)
	\varphi_{ai}^{\beta}\left(f^{-1} \right)  \right)u_{ts}(r)u_{st}\left(fr^{-1}f^{-1} \right).  
	\ee
	Finally, using the orthogonality relation for irreps given in~\cref{or} we obtain
	\be
	\frac{|\varphi^{\alpha}||\varphi^{\beta}|}{|G|^3}\frac{1}{\left(\widetilde{\Pi}_i^{\beta} \right)_{st,st}}\sum_{f,r}\chi^{\alpha}(r)\varphi_{ii}^{\beta}\left(t^{-1} \right)u_{ts}(r)u_{st}\left(fr^{-1}f^{-1}\right).  
	\ee
	Changing indeces in the sum $r \rightarrow g^{-1}$ and $f \rightarrow h$ we obtain the claim.
\end{proof}

\begin{proof}[Proof of~\Cref{hem}]
	If $\Phi(X)^{\dagger}=\Phi\left( X^{\dagger}\right) $, then
	\be
	J(\Phi)^{\dagger}=\sum_{i,j}E_{ji}\otimes \Phi\left( E_{ij}\right) ^{\dagger}=\sum_{i,j}E_{ji}\otimes
	\Phi\left( E_{ji}\right) =J(\Phi). 
	\ee
	If $J(\Phi)^{\dagger}=J(\Phi)$, then
	\be
	\sum_{i,j}E_{ji}\otimes \Phi(E_{ij})^{\dagger}=\sum_{i,j}E_{ji}\otimes
	\Phi(E_{ji})=\sum_{i,j}E_{ji}\otimes \Phi\left( E_{ij}^{\dagger}\right) . 
	\ee
\end{proof}

\section{Proofs of results from~\Cref{S7}}
\label{appC}
\begin{proof}[Proof of~\Cref{l27}]
	Suppose that  the matrix $V_{i}^{\gamma }(s,t)$ is nonzero, then by~\Cref{prop14} we know that it satisfies the eigenvalue equation
	\be
	\Pi_{i}^{\gamma }\left(V_{i}^{\gamma }(s,t)\right) =V_{i}^{\gamma }(s,t),
	\ee
	and this means that the projector $\Pi_{i}^{\gamma}$ is nonzero. This  implies that the projector $\Pi^{\gamma}=\sum_{i=1}^{|\varphi^{\gamma}|}\Pi_{i}^{\gamma}$ is also nonzero. The latter operator projects onto the
	subspace of irreps $\varphi^{\gamma}$ in $U\otimes U^{c}$, which does not
	exist in this representation because $\gamma \notin \Theta$. So we get a
	contradiction assuming that \ the matrix $V_{i}^{\gamma }(s,t)$ is nonzero
	when $\gamma \notin \Theta$.
\end{proof}

\begin{proof}[Proof of~\Cref{thm21}]
	The relations in~\cref{thm-30-1} and in \cref{thm-30-3} can be verified directly using Schur's  orthogonality relations.
	The proof of the first summation rule from~\cref{sum1} is based on~\Cref{l27}.
	It is clear that for any $k,l=1,\ldots,n$ we have
	\be
	\begin{split}
	\sum_{\alpha \in \Theta }\sum_{i=1}^{|\varphi^{\alpha}|}V_{i}^{\alpha
	}(s,t)&=\sum_{\alpha \in \widehat{G}}\sum_{i=1}^{|\varphi^{\alpha}|}V_{i}^{\alpha
}(s,t)=\sum_{\alpha \in \widehat{G}}\sum_{i=1}^{|\varphi^{\alpha}|}\frac{|\varphi^{\alpha}|}{|G|}\sum_{g\in G}\varphi _{ii}^{\alpha
}\left( g^{-1}\right) U_{C(s)}(g)U_{R(t)}\left( g^{-1}\right)\\
&=\frac{1}{|G|}\sum_{g\in G}\sum_{\alpha \in \widehat{G}}\chi ^{\alpha
}(\mathrm{e})\chi ^{\alpha }\left( g^{-1}\right) U_{C(s)}(g)U_{R(t)}\left( g^{-1}\right),
\end{split}
\ee
where $\mathrm{e}$ denotes the identity element of the given group $G$. Now using the orthogonality relation for irreps given by \cref{or2} we get 
\be
\sum_{\alpha \in \Theta }\sum_{i=1}^{|\varphi^{\alpha}|}V_{i}^{\alpha
}(s,t)=U_{C(s)}(\mathrm{e})U_{R(t)}(\mathrm{e})=E_{st}.
\ee
To prove the second summation rule in~\cref{suum2} it is enough to observe that 
\be
\forall g\in G\quad U_{C(s)}(g)U_{R(t)}\left( g^{-1}\right) =\text{\noindent
	\(\mathds{1}\)}_{n}\in \M(n,\mathbb{C})
\ee
and that
\be
\frac{|\varphi^{\alpha}|}{|G|}\sum_{g\in G}\varphi _{ii}^{\alpha
}\left( g^{-1}\right) =\frac{|\varphi^{\alpha}|}{|G|}\sum_{g\in G}\varphi
^{\id}(g)\varphi _{ii}^{\alpha }\left( g^{-1}\right) =\delta ^{\alpha,\id},
\ee
which follows directly from the orthogonality relation for irreps given by~\cref{orr2}. 
\end{proof}

\section{Connection between Kraus representation and Choi-Jamio{\l}kowski image of a quantum channel}
\label{appD}

\begin{lemma}
	\label{Kraus}
	Let us assume, that we are given with a quantum channel $\Phi: \mathcal{B}(\mathcal{H})\rightarrow \mathcal{B}(\mathcal{K})$ with its Choi-Jamio{\l}kowski image $J(\Phi)$ with the normalised eigensystem $\left\lbrace \lambda_m, |x_m\>\right\rbrace $. Then its Kraus operators are given by the following expression:
	\be\label{194}
	K_m=\sqrt{\lambda_m}X_m^{T},
	\ee
	where $X_m=\cV^{-1}(|x_m\>)$ is the matrix obtained by the inverse vectorization of the eigenvector $|x_m\>$.
\end{lemma}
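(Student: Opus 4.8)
The plan is to recover a Kraus decomposition of $\Phi$ directly from a spectral decomposition of its Choi-Jamio{\l}kowski image, by matching the two resulting expressions for $J(\Phi)$ term by term. First I would note that, because $\Phi$ is a quantum channel, it is completely positive, so by the Choi criterion recalled just after \cref{choi} the matrix $J(\Phi)\in\M(n^{2},\mathbb{C})$ is positive semidefinite; hence it is Hermitian with non-negative eigenvalues, and its normalised eigensystem $\{\lambda_m,\ket{x_m}\}$ gives $J(\Phi)=\sum_m\lambda_m\ket{x_m}\bra{x_m}$ with $\{\ket{x_m}\}$ orthonormal in $\mathbb{C}^{n^{2}}$. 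Eigenvectors belonging to the eigenvalue $0$ contribute zero terms and may be discarded.

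The key step is a \emph{reconstruction identity}: for any finite family $\{L_m\}\subset\M(n,\mathbb{C})$, the map $\Psi(X):=\sum_m L_m X L_m^{\dagger}$ has Choi-Jamio{\l}kowski image
\be
J(\Psi)=\sum_{i,j}E_{ij}\otimes\Psi(E_{ij})=\sum_m\Bigl(\sum_i\ket{i}\otimes L_m\ket{i}\Bigr)\Bigl(\sum_j\ket{j}\otimes L_m\ket{j}\Bigr)^{\dagger}=\sum_m\ket{w_m}\bra{w_m},
\ee
where $\ket{w_m}:=\sum_i\ket{i}\otimes L_m\ket{i}$; this follows from $\Psi(E_{ij})=\sum_m L_m\ket{i}\bra{j}L_m^{\dagger}$ together with $(\ket{a}\otimes\ket{b})(\bra{c}\otimes\bra{d})=\ket{a}\bra{c}\otimes\ket{b}\bra{d}$ and regrouping the double sum in \cref{choi}. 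Expanding $L_m\ket{i}=\sum_k (L_m)_{ki}\ket{k}$ gives $\ket{w_m}=\sum_{i,k}(L_m^{T})_{ik}\,\ket{i}\otimes\ket{k}$, so with the vectorization convention $\cV(\ket{i}\bra{j})=\ket{i}\otimes\ket{j}$ we obtain $\cV^{-1}(\ket{w_m})=L_m^{T}$, i.e. $\ket{w_m}=\cV(L_m^{T})$.

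Now I would simply set $K_m:=\sqrt{\lambda_m}\,X_m^{T}$ with $X_m=\cV^{-1}(\ket{x_m})$ and run the identity backwards. Then $\cV(K_m^{T})=\cV(\sqrt{\lambda_m}\,X_m)=\sqrt{\lambda_m}\,\ket{x_m}$, so by the reconstruction identity the map $\Psi(X)=\sum_m K_m X K_m^{\dagger}$ satisfies $J(\Psi)=\sum_m\lambda_m\ket{x_m}\bra{x_m}=J(\Phi)$ (here $\lambda_m\ge0$ so $|\sqrt{\lambda_m}|^{2}=\lambda_m$). Since the Choi-Jamio{\l}kowski correspondence $J:\End[\M(n,\mathbb{C})]\to\M(n^{2},\mathbb{C})$ is a linear isomorphism (as already invoked in the proof of \Cref{Mrank}), it follows that $\Psi=\Phi$, i.e. $\Phi(X)=\sum_m K_m X K_m^{\dagger}$ for all $X$. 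Finally, trace preservation of $\Phi$ gives $\tr\bigl(\sum_m K_m X K_m^{\dagger}\bigr)=\tr(X)$ for every $X$, which forces $\sum_m K_m^{\dagger}K_m=\iden_n$; hence the $K_m$ are bona fide Kraus operators of $\Phi$.

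I expect the only genuine work to be bookkeeping: carefully establishing $\cV^{-1}(\ket{w_m})=L_m^{T}$ — the transpose, rather than $L_m$ itself, being forced by the convention $\cV(E_{ij})=\ket{i}\otimes\ket{j}$ fixed in the definition of vectorization, and disappearing under the opposite convention — and keeping index placement consistent throughout. A fully self-contained variant that avoids even the reconstruction identity is to define $K_m:=\sqrt{\lambda_m}\,X_m^{T}$ and verify directly that $\sum_m K_m E_{ij} K_m^{\dagger}=\Phi(E_{ij})$ for each basis element, by writing the entries of $X_m$ as the components of $\ket{x_m}$ and summing against the spectral decomposition of $J(\Phi)$; this is short but index-heavy, and no conceptual obstacle arises beyond it. (Applied with the orthonormal eigensystem $\{\epsilon_i^{\beta},\ket{v_i^{\beta}}\}$ of \Cref{prop14}, this lemma yields the Kraus operators of \Cref{KK1}.)
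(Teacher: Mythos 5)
Your proof is correct and rests on the same two ingredients as the paper's: the spectral decomposition $J(\Phi)=\sum_m\lambda_m\ket{x_m}\bra{x_m}$ of the positive semidefinite Choi matrix, and the identity $\cV^{-1}\bigl(\sum_i\ket{i}\otimes L\ket{i}\bigr)=L^{T}$ relating rank-one terms of $J$ to Kraus operators. The only organizational difference is that you run the argument backwards — defining $K_m=\sqrt{\lambda_m}X_m^{T}$, computing $J$ of the resulting map via your reconstruction identity, and invoking injectivity of $J$ to conclude $\Psi=\Phi$ — whereas the paper derives $\Phi(\rho)=\sum_m K_m\rho K_m^{\dagger}$ directly from the matrix elements $J(\Phi)_{ik,jl}=\langle k|\Phi(E_{ij})|l\rangle$; both routes then obtain $\sum_m K_m^{\dagger}K_m=\iden$ from trace preservation.
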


\begin{proof}
	At the first step let us calculate an action of the channel $\Phi$ on some state $\rho$:
	\be
	\label{63}
	\begin{split}
		\Phi(\rho)&=\sum_{kl}\left(\Phi(\rho)\right)_{kl}E_{kl}=\sum_{kl}\tr\left[E_{kl}^{\dagger}\Phi(\rho) \right]E_{kl}=\sum_{kl}\tr\left[E_{lk}\Phi(\rho) \right]E_{kl} =\\ 
		&=\sum_{kl}\tr\left[E_{lk}\Phi\left(\sum_{ij}\tr\left(E_{ji}\rho\right)E_{ij}  \right) \right]E_{kl}=\sum_{ijkl}\tr\left[E_{ji}\rho \right]\tr\left[E_{lk}\Phi(E_{ij})\right]E_{kl}.
	\end{split}
	\ee
	Notice that $\<k|\Phi(E_{ij})|l\>=J(\Phi)_{ik,jl}$, indeed we have
	\be
	J(\Phi)_{ik,jl}=\<ik|J(\Phi)|jl\>=\<ik|\sum_{rs}E_{rs}\otimes \Phi(E_{rs})|jl\>=\<k|\Phi(E_{ij})|l\>.
	\ee
	We know that $J(\Phi)\geq 0$. Thanks to this we can write the spectral decomposition of $J(\Phi)$ as $J(\Phi)=\sum_m \lambda_m|x_m\>\<x_m|$, where $\lambda_m$ are the eigenvalues of $J(\Phi)$ and $\left\lbrace |x_m\> \right\rbrace $ is the set of its normalised eigenvectors. Now using the above mentioned decomposition let us compute the matrix element $\left(J(\Phi) \right)_{ik,jl} $ in the operator basis $\left\lbrace E_{ij}\right\rbrace$:
	\be
	\label{65}
	\left(J(\Phi) \right)_{ik,jl}=\sum_m \lambda_m\<ik|x_m\>\<x_m|jl\>.
	\ee
	Finally putting~\cref{65} into~\cref{63} we have:
	\be
	\label{km}
	\begin{split}
		\Phi(\rho)&=\sum_{ijkl}\sum_m \lambda_m\<ik|x_m\>\<x_m|jl\>\<i|\rho|j\>|k\>\<l|=\sum_{ijkl}\sum_m\lambda_m \<ik|x_m\>\<x_m|jl\>|k\>\<i|\rho|j\>\<l|\\
		&=\sum_{ijkl}\sum_m\lambda_m\<ik|x_m\>\<x_m|jl\>E_{ki}\rho E_{jl}=\sum_m\left(\sum_{ki}\sqrt{\lambda_m}\<ik|x_m\>E_{ki} \right)\rho\left(\sum_{jl}\sqrt{\lambda_m}\<x_m|jl\>E_{jl}\right)\\
		&= \sum_mK_m\rho K_m^{\dagger},
	\end{split}
	\ee
where
\begin{align}
K_m &:= \sqrt{\lambda_m}\sum_{ki}\<ik|x_m\>E_{ki}.
\end{align}
It is easy to see that
	\be
	\sum_{ki}\<ik|x_m\>E_{ki}=\left[\cV^{-1}(|x_m\>) \right]^{T}=X_m^{T},
	\ee
	where $\cV^{-1}(\cdot)$ is the inverse vectorisation, and hence \cref{194} holds.

We establish below that $\{K_m\}$ is indeed a set of Kraus operators for the ICQC $\Phi$ by showing that $\sum_m K_m^{\dagger}K_m=\text{\noindent
		\(\mathds{1}\)}$. Indeed we have
	\be
	\begin{split}
		&\sum_m K_m^{\dagger}K_m=\sum_m \lambda_m \sum_{ijkl}\<x_m|ik\>E_{ik}\<jl|x_m\>E_{lj}=\sum_m\sum_{ijk}\<jk|x_m\>\<x_m|ik\>E_{ij}=\\
		&=\sum_{ijk}\sum_{st}\<jk|E_{st}\otimes \Phi\left(E_{st} \right)|ik\>E_{ij}=\sum_{ijk}\<k|\Phi \left(E_{ji} \right)|k\>E_{ij}=\sum_{ij}\tr\left[\Phi \left( E_{ji}\right)  \right] E_{ij}.
	\end{split}
	\ee
	Since $\Phi$ is a CPTP map, we have $\tr\left[\Phi \left( E_{ji}\right)  \right]=\delta_{ji}$, since $\tr \left(E_{ji} \right)=\delta_{ji}$. This completes the proof.
\end{proof}
\newpage
\bibliographystyle{plain}
\bibliography{References3}
\end{document}